\documentclass[final,12pt]{colt2025}
\pdfoutput=1

\usepackage{times} 
\usepackage{xcolor}
\usepackage{caption}
\usepackage{booktabs} % fancy tables

\usepackage{amsmath, mathrsfs, amsfonts, nicefrac}
\usepackage{thmtools,thm-restate} % to reference a lemma in the appendix
\usepackage{aligned-overset}
\usepackage{enumitem}
\allowdisplaybreaks[2]

% Proof sketch environment
\newenvironment{proofsketch}{\paragraph{Proof sketch}}{\hfill$\blacksquare$}

% Algorithm2e stuff
\SetKwInput{Init}{initialization}
\SetKwInput{Input}{input}

\usepackage[capitalize,noabbrev,nameinlink]{cleveref}

\crefalias{learningprotocol}{algocf}
\crefname{learningprotocol}{learning protocol}{learning protocols}
\Crefname{learningprotocol}{Learning protocol}{Learning protocols}

\crefalias{tradingprotocol}{algocf}
\crefname{tradingprotocol}{trading protocol}{trading protocols}
\Crefname{tradingprotocol}{Trading Protocol}{Trading protocols}

%% macros
% cal
\newcommand{\cA}{\mathcal{A}}
\newcommand{\cB}{\mathcal{B}}

\newcommand{\cD}{\mathcal{D}}

\newcommand{\cF}{\mathcal{F}}

\newcommand{\cO}{\mathcal{O}}
\newcommand{\cP}{\mathcal{P}}

\newcommand{\cR}{\mathcal{R}}
\newcommand{\cS}{\mathcal{S}}

\newcommand{\cU}{\mathcal{U}}

\newcommand{\cX}{\mathcal{X}}

\newcommand{\E}{\mathbb{E}}

\newcommand{\I}{\mathbb{I}}

\newcommand{\N}{\mathbb{N}}

\newcommand{\Pb}{\mathbb{P}}
\newcommand{\bbR}{\mathbb{R}}

% \newcommand{\T}{\mathbb{T}}

% \newcommand{\Z}{\mathbb{Z}}
% greek
\newcommand{\e}{\varepsilon}

% round brackets
\newcommand{\lrb}[1]{\left(#1\right)}
\newcommand{\brb}[1]{\bigl(#1\bigr)}
\newcommand{\Brb}[1]{\Bigl(#1\Bigr)}

% square brackets
\newcommand{\lsb}[1]{\left[#1\right]}
\newcommand{\bsb}[1]{\bigl[#1\bigr]}
\newcommand{\Bsb}[1]{\Bigl[#1\Bigr]}

% curly brackets
\newcommand{\lcb}[1]{\left\{#1\right\}}
\newcommand{\bcb}[1]{\bigl\{#1\bigr\}}

%indicator
\newcommand{\one}[1]{\mathbb{I}\{#1\}}

% ceiling

% floor

% abs
\newcommand{\labs}[1]{\left\lvert#1\right\rvert}

\newcommand{\Babs}[1]{\Bigl\lvert#1\Bigr\rvert}

% norm
\newcommand{\lno}[1]{\left\lVert#1\right\rVert}

% angle

% text
\DeclareMathOperator*{\argmin}{argmin}

\newcommand{\dif}{\,\mathrm{d}}

% various

\newcommand{\nhphantom}[1]{\sbox0{#1}\hspace{-\the\wd0}}

\newcommand{\ceq}{\coloneqq}

\newcommand{\mypapertitle}{Market Making without Regret}
\newcommand{\util}{u}
\newcommand{\Util}{U}
\newcommand{\vin}{\texttt{M3}}
\newcommand{\db}{\texttt{Discretized Bandits for First Price Auctions}}
\newcommand{\dbdp}{\texttt{Discretized Bandits for Dynamic Pricing}}
\newcommand{\itemm}{\item[$\bullet$]}
\newcommand{\Rfpa}{R^{\mathrm{fpa}}}

\newcommand{\Rdp}{R^{\mathrm{dp}}}

\newcommand{\ftbp}{\texttt{FTABP}}

% % % % Notation

\newcommand{\nP}{M}
\newcommand{\np}{m}
\newcommand{\nZ}{V}
\newcommand{\nz}{v}
\newcommand{\nB}{B}
\newcommand{\ncB}{\cB} 
\newcommand{\nb}{b}
\newcommand{\nS}{A}
\newcommand{\ncS}{\cS}
\newcommand{\ns}{a}
\newcommand{\nX}{X}
\newcommand{\nx}{x}

\newcommand{\nV}{Z}

\newcommand{\nM}{H}

\newcommand{\nC}{C}
\newcommand{\nc}{c}
\newcommand{\nW}{W}
\newcommand{\nw}{w}

\newcommand{\nY}{P}
\newcommand{\ny}{p}

\renewcommand{\tilde}{\widetilde}

\newcommand{\RK}{3}
\newcommand{\TK}{3}
\newcommand{\Tk}{2}

\title{\mypapertitle}
\date{}

\coltauthor{%
    \Name{Nicol\`o {Cesa-Bianchi}} \Email{nicolo.cesa-bianchi@unimi.it}\\
    \addr{Universit\`a degli Studi di Milano and Politecnico di Milano, Milan, Italy}
    \AND%
    \Name{Tommaso Cesari} \Email{tcesari@uottowa.ca}\\
    \addr{University of Ottawa, Ottawa, Canada}
    \AND%
    \Name{Roberto Colomboni} \Email{roberto.colomboni@polimi.it}\\
    \addr{Politecnico di Milano and Universit\`a degli Studi di Milano, Milan, Italy}
    \AND%
    \Name{Luigi Foscari} \Email{luigi.foscari@unimi.it}\\
    \addr{Universit\`a degli Studi di Milano, Milan, Italy}
    \AND%
    \Name{Vinayak Pathak} \Email{path.vinayak@gmail.com}\\
    \addr{Independent researcher}
}

\begin{document}

\maketitle 

% Add some padding in-between paragraphs
% Play with the multiplier to gain space on the last page
% \setlength{\parskip}{.29\baselineskip}
% \setlength{\parindent}{0pt}

\begin{abstract}
We consider a sequential decision-making setting where, at every round $t$, the learner (a \emph{market maker}) posts a \emph{bid} price $\nB_t$ and an \emph{ask} price $\nS_t$ to an incoming trader (the \emph{taker}) with a private valuation for some asset. 
If the trader's valuation is lower than the bid price, or higher than the ask price, then a trade (sell or buy) occurs. 
Letting $\nP_t$ be the market price (observed only at the end of round $t$), the maker's utility is $\nP_t-\nB_t$ if the maker bought the asset, it is $\nS_t-\nP_t$ if they sold it, and it is $0$ if no trade occurred. 
We characterize the maker's regret with respect to the best fixed choice of bid and ask pairs under a variety of assumptions (adversarial, i.i.d., and their variants) on the sequence of market prices and valuations. 
Our upper bound analysis unveils an intriguing connection relating market making to first-price auctions and dynamic pricing. 
Our main technical contribution is a lower bound for the i.i.d.\ case with Lipschitz distributions and independence between market prices and takers' valuations.
The difficulty in the analysis stems from a unique relationship between the reward and feedback functions that allows learning algorithms to trade off reward for information in a continuous way.
\end{abstract}

\begin{keywords}
Regret minimization, online learning, market making, first-price auctions, dynamic pricing.
\end{keywords}

% \thispagestyle{empty}

% \clearpage
% \tableofcontents
% \thispagestyle{empty}

% \clearpage
% \setcounter{page}{1}

\section{Introduction}
Trading in financial markets is a crucial activity that helps keep the world's economy running, and several players, including hedge funds, prop trading firms, investment banks, central banks, and retail traders participate in it daily. While every actor has their own objective function (for example, a hedge fund wants to maximize profit whereas a central bank wants to keep inflation in check), at a fundamental level, trading can be viewed as a stochastic control problem where agents want the state to evolve so as to maximize their objective function.
In this work we focus on market makers, i.e., traders whose job is to facilitate other trades to happen.
One way to do so is by broadcasting, at all times, a price (\emph{bid}) at which they are willing to buy, and a price (\emph{ask}) at which they are willing to sell the asset being traded.
This way when a buyer (resp., seller) arrives, they do not have to wait for a seller (resp., buyer) to be able to perform a transaction.
A market where it is easy to make trades is called \emph{liquid}, and liquidity is a desirable property for any kind of market. A market maker thus provides an essential service by increasing the liquidity of the market. %, and collects compensation for it by (among other ways) ensuring that the bid is always smaller than the ask, thus making a profit proportional to the difference between the bid and the ask. 
Market making is challenging, and a lot of thought goes into making it profitable, see~\citet{harris2003trading} for an overview. A major risk that a market maker has to deal with, called adverse selection, is the risk that your counterpart is an \emph{informed} trader who knows something about the future direction of the price movement. 
For example, an informed trader who knows that an asset is soon about to become cheaper will sell it to you thus forcing you to buy something whose price will crash. 
One way to mitigate this risk is to immediately offload your positions elsewhere. More generally, a market maker does not specialize in predicting the future movements of prices, and thus dislikes holding assets for too long. Market making desks often have limits on how much inventory they are allowed to hold at a time \citep{fender2015early}.

% The strategy of immediately offloading your position, though simple, is profitable if one does it strategically. 
% Consider, for example, the case where you can trade the same asset on two exchanges, where its bid-ask spread on the first is much smaller than its bid-ask spread on the second. You can then make a market on the second while offloading your positions in the first. While such a situation may not arise often for the same asset at two different exchanges, it does frequently arise for two closely related assets. For example, a futures contract on an asset is often more liquid than the asset itself and their price movements are highly correlated. In the present work, we focus on the setting where the market maker has access to a market where they can quickly offload their positions. We call this strategy ``market making with instant clearing.'' 

The strategy of immediately offloading your position, though simple, is profitable if one does it strategically. 
Consider, for example, the case where one can trade the same asset on two exchanges, where its bid-ask spread on the first is smaller than its bid-ask spread on the second. 
In this case, one can make a market on the second while offloading their positions on the first. 
% While such a situation may not arise often for the same asset at two different exchanges, 
This situation arises frequently for two closely related assets.
For example, a \emph{futures} contract on an asset is often more liquid than the asset itself and their price movements are highly correlated. 
For these reasons, we focus on a setting where the market maker offloads their positions shortly after each transaction. 
% We call this strategy ``market making with instant clearing.'' 

\subsection{Related works}
% Perhaps closest to our work is the one by \citet{abernethy2013adaptive}, where they compete against a class of $N$ constant-spread dynamic strategies. Although we also compete with constant-spread (static) strategies, their results are not directly comparable to ours for several reasons. First, in their model the only unknown parameter is the market price, whose change from step to step is adversarial yet bounded by a known quantity. As the market price is revealed at the end of each step, their utility function at time $t$ is fully known at the end of step $t$. This implies that their model has a full information feedback, which yields a regret of order $\sqrt{T\ln N}$ against the best constant-spread strategy. In our setting, instead, we compete against a continuum of strategies parametrized by ask-bid pairs. This forces us to simultaneously control estimation and approximation error. Moreover, and crucially, our market model does not have a full information feedback, as our reward function depends on the takers' private valuations which remain unknown. Hence, we need to carefully exploit the structure of the reward function to compensate for the missing information. 

\paragraph{Online market making.}
Our work is closely related to that of \citet{abernethy2013adaptive}, which analyzes competition against $N$ constant-spread dynamic strategies. While we also consider constant-spread (static) strategies, key differences make their results non-comparable to ours. Their model assumes full information feedback, with an adversarial yet bounded market price change and a fully known utility function at each step, leading to a regret of $\sqrt{T\ln N}$. In contrast, we compete against a continuum of ask-bid strategies while managing both estimation and approximation errors. Crucially, our model lacks full information feedback, as rewards depend on private valuations, requiring structural exploitation of the reward function to mitigate uncertainty.

Our work is also related to the classic Glosten-Milgrom model \citep{glostenBidAskTransaction1985}, which introduced the concept of a market \textit{specialist} (maker) providing liquidity to the market within an exchange. 
The specialist interacts with both informed traders (whose valuations are influenced by future market prices) and uninformed ones.
% (who place without any private valuation).
The specialist's objective is to distinguish between informed and uninformed traders via interactions in an online fashion and optimize the bid-ask spread accordingly. 
In \citeyear{learnigmarketmakerGlostenMilgrom2005}, \citeauthor{learnigmarketmakerGlostenMilgrom2005} considers a learning algorithm in an extended version of the Glosten-Milgrom model for the market maker with a third type of ``noisy-informed'' trader, whose current valuation is a noisy version of the future market price. However, to the best of our knowledge, they do not provide any regret guarantees.

\paragraph{Smoothed adversary.}
Popularized by \citet{spielman2004smoothed,rakhlin2011online,haghtalab2020smoothed}, smoothness analysis provides a framework for analyzing algorithms in problems parameterized by distributions that are not ``too'' concentrated. 
Recent advancements in the smoothed analysis of online learning algorithms include contributions by \citet{kannan2018smoothed,haghtalaboracle,HaghtalabRS21,block2022smoothed,DurvasulaHZ23,cesa2021regret,cesa2023repeated,bolic2024online,JMLR:v25:23-1627}.
In this work, we exploit the connection between the smoothness of the takers' valuations distributions (i.e., the Lipschitzness of the corresponding cumulative distribution function, or, equivalently, the boundedness of the corresponding probability density function) and Lipschitzness of the expected utility. 
This property is crucial to achieving sublinear regret guarantees in adversarial settings and
% Our work considers environments exhibiting Lipschitz continuity 
% \citep{Kleinberg2004NearlyTB,agrawalContinuumArmedBanditProblem1995,AuerImprovedRatesStochastic2007} and how this property
affects learning depending on different assumptions on the level of information shared with the learner while bidding.

Please see \Cref{sec:additional} for further related works.

\paragraph{Online learning in digital markets.}
Our work unveils a connection between market making and first-price auctions/dynamic pricing, thus contributing to the existing body of research on
% within the online learning paradigm, specifically
the multi-armed bandit framework, which boasts a substantial literature base across statistics, operations research, computer science, and economics \citep{auerFinitetimeAnalysisMultiarmed2002,lattimore2020bandit,badanidiyuru_bandits_2018,Kleinberg2004NearlyTB}.
The multi-armed bandit framework has proven to be an effective tool for addressing key problems in finance under uncertainty, including repeated first-price auctions \citep{cesabianchiRoleTransparencyRepeated2024,han2024optimalnoregretlearningrepeated,pmlr-v157-achddou21a,NIPS2017_0bed45bd} and dynamic pricing \citep{cesa2019dynamic,babaioffDynamicPricingLimited2015,kleinberg2003value}. These settings are closely related to online market making (see \Cref{sec:additional} for further discussion).

Closely related to dynamic pricing is the problem of optimal taxation \citep{taxation2025}, where the interaction between the learner and the environment follows a similar structure, but the objective shifts from revenue maximization to the design of pricing mechanisms that maximize social welfare.

Beyond first-price auction settings, related auction formats include repeated Vickrey auctions, studied from both the bidder’s perspective \citep{weedOnlineLearningRepeated2015} and the seller’s perspective \citep{cesaSecondPrice}.
% Vickrey auctions differ from first-price auctions in that they are truthful: bidders are incentivized to reveal their true valuations.
% In this setting, the bidder’s learning task becomes one of inferring the actual value of goods, which is only observed upon winning the auction.

Another important problem at the intersection of online learning and digital markets is repeated bilateral trade.
In this setting, the learner acts as a broker, proposing price pairs to a sequence of traders with unknown private valuations.
The foundational works of \citet{cesa2021regret,cesa2024bilateral} established optimal regret rates under i.i.d.\ valuations, assuming smoothness and independence between buyer and seller values, using the gain from trade as reward function, and mechanisms that are strongly budget balanced.
This line of work has been extended in multiple directions: \citet{azar2022alpha} proposed weakly budget-balanced mechanisms achieving optimal $2$-regret in adversarial settings; \citet{cesa2023repeated,JMLR:v25:23-1627} relaxed the i.i.d.\ and the independence between sellers and buyers assumptions while maintaining optimal regret guarantees under smoothness alone; and \citet{bernasconi2024no} introduced globally budget-balanced mechanisms with sublinear regret even in adversarial settings.

Contextual versions of bilateral trade have also been studied by \citet{GaucherBCCP25}, while \citet{fairgainfromtrade} took a different perspective by proposing a fairness-oriented reward function establishing optimal regret bounds under the i.i.d.\ assumption with independent buyer and seller valuations, using strongly budget-balanced mechanisms.

Recent work has also considered scenarios in which agents are not assigned fixed roles as buyers or sellers. In this more general setting, \citet{bolic2024online} achieved faster regret rates than in standard bilateral trade using strongly budget-balanced mechanisms under the same assumptions as \citet{cesa2021regret}. In the contextual setting, \citet{bachoc2025contextualLinear} obtained optimal fast rates under a noisy-linear model, and \citet{bachoc2025contextual} extended these results to nonparametric contexts. Finally, \citet{cesari2025volumeFairnesss} studied the objective of maximizing the volume of trade, achieving accelerated regret rates in this setting.
% }

\subsection{Our contributions}
We study the question of market making in an online learning setup.
Here, trading happens in discrete time steps and an unknown stochastic process governs the prices of the asset being traded as well as the private valuations assigned to the asset by market participants (see details in~\cref{sec:setting}).
The market maker is an online learning algorithm that posts a bid and an ask at the beginning of each time step and receives some feedback at the end of the time step. 
We are interested in controlling the regret suffered by this learner at the end of $T$ time steps.
At the end of each round, the learner observes the market price, i.e., the price at which they are able to offload their position. Additionally, they see whether their bid or their ask (or neither) got successfully traded.
We consider settings where the unknown stochastic process is independent and identically distributed (iid), as well as the adversarial setting (adv), where no such assumption is made on the process. We also consider settings where the cumulative distribution functions of takers' valuations are Lipschitz (lip)%, or where on top of this the cumulative distribution functions of the market prices are Lipschitz (2xlip)
, as well as settings where at each time step $t$ the corresponding market price and taker's valuation are independent of each other (iv).
Our results are summarized in \Cref{tab:our-results}.
\begin{table}
    \centering
    \begin{tabular}{l|c c|c c c c}
        \toprule
        % & \multicolumn{2}{c}{Adversarial} & \multicolumn{4}{|c}{IID} \\
        % \midrule
        Setting &  (adv) &  (adv+lip) &  (iid) &  (iid+lip) &  (iid+iv) & (iid+lip+iv) \\
        \midrule
        Regret & $T$ & $T^{\nicefrac{2}{3}}$ & $T$ &  $T^{\nicefrac{2}{3}}$ & $T^{\nicefrac{2}{3}}$ & $T^{\nicefrac{2}{3}}$ \\
        \iffalse
        Full & $T$ & $\sqrt{T}$ & $\sqrt{T}$ & $\sqrt{T}$ & $\sqrt{T}$ & $\sqrt{T}$ \\
        \fi
        \bottomrule
    \end{tabular}
    \caption{A summary of the regret guarantees we prove for each variant of the online market-making problem. All the rates are optimal up to constant factors.\label{tab:our-results}}
\end{table}

We make the following contributions.
\begin{enumerate}[noitemsep]
    \item We design the \vin{} (meta-)algorithm (\Cref{a:meta-vin}) and prove a $\cO(T^{\nicefrac{2}{3}})$ upper bound on its regret under the assumption that either the cumulative distribution functions of the takers' valuations are Lipschitz (lip) or the sequence of market prices and takers' valuations are independent and identically distributed (iid) with market prices being independent of takers' valuations (iv) (\Cref{t:upper-bound-vin}).
    \item Our main technical contribution is a lower bound of order $\Omega(T^{\nicefrac{2}{3}})$ that matches \vin{}'s upper bound, and holds even under the simultaneous assumptions that the sequence of market prices and takers' valuations is independent and identically distributed (iid), market prices are independent of takers' valuations (iv), the cumulative distribution function of the takers' valuations is Lipschitz (lip), and the cumulative distribution function of the market prices is Lipschitz (\Cref{t:lower-bound-lip-iv}).
    \item We then investigate the necessity of assuming either Lipschitzness of the cumulative distribution function of the takers' valuations (lip) or the independence of market prices and takers' valuations (iv). We prove that, if both assumptions are dropped, learning is impossible in general, even when market prices and takers' valuations are independent and identically distributed (iid)\ (\Cref{t:minimal-lower-bound-iid}).
\end{enumerate}

We can interpret the independence between takers' valuation and market prices (iv) as the taker being an uninformed trader (in the same spirit as the Glosten-Milgrom model). If we consider the (iid) setting, and contrast the impossibility of learning when neither Lipschitzness (lip) nor independence (iv) is assumed with the $\cO(T^{\nicefrac{2}{3}})$ upper bound when the market prices are independent of takers' valuations (iv), we get evidence for the practitioner's intuition that the existence of uninformed traders is good for the market maker.

\iffalse
Lastly, we discuss the full-feedback case to flesh out the impact of limited feedback on learning rates and learnability. 
We show that learning is impossible in the adversarial setting (\Cref{t:lower-full-info-adv}), while it is possible to achieve an $\cO(\sqrt{T})$ regret rate when market prices and takers' valuations are i.i.d.\ (\Cref{s:upper-bound-full}), or when the cumulative distribution functions of takers' valuations are Lipschitz (\Cref{t:upper-bound-adv-smooth}). 
This rate is unimprovable, even if the three previous assumptions hold at the same time (\Cref{t:sqrtT-lower-full}).
\fi

\section{Setting}\label{sec:setting}
\begin{algorithm2e}[t]
    \renewcommand*{\algorithmcfname}{Trading protocol}

    \For {time $t=1,2,\ldots$
    }{
        Taker arrives with a private valuation $\nZ_t \in [0,1]$\;
        Maker posts bid/ask prices $(\nB_t , \nS_t) \in \cU$\;
        Feedback $\one{ \nB_t \ge \nZ_t }$ and $\one{ \nS_t < \nZ_t }$ is revealed\;
        \lIf{$\nB_t \ge \nZ_t$}{%
            Maker buys and pays $\nB_t$ to Taker%
        } \lElseIf{$\nS_t < \nZ_t$}{%
            Maker sells (short)\hyperlink{footnote1}{\textsuperscript{1}} and is paid $\nS_t$ by Taker%
        } \lElse {no trade happens}
        The market price $\nP_t\in[0,1]$ is revealed\;

        \lIf{Maker bought from Taker}{%
            Maker offloads to Market at price $\nP_t$%
        }\lElseIf{Maker sold (short) to Taker}{%
            Maker buys from Market at price $\nP_t$%
        }
        Maker's utility is $\Util_t(\nB_t,\nS_t) \ceq \util(\nB_t,\nS_t,\nP_t,\nZ_t)$\;
    }
    \caption{Online market making}
    \label[tradingprotocol]{a:trading-protocol}
\end{algorithm2e}
\hypertarget{footnote1}{\footnotetext[1]{We remark that in most markets, traders can sell assets they do not currently own (\emph{short-selling}; see, e.g., the classic \citealt{black1973pricing}). For this reason, we do not assume that the market maker owns a unit of the asset before selling it. If the market maker sells short, they will buy the asset at the market price at the end of the interaction.}}%
\setcounter{footnote}{1}%
In online market making, the action space of the maker is the upper triangle $\cU \coloneqq \bcb{ (\nb, \ns) \in [0,1]^2 \,:\, \nb \le \ns } $, enforcing the constraint that a bid price $\nb$ is never larger than the corresponding ask price $\ns$.
For an overview of the notation, see \Cref{tab:notation}, in \Cref{s:appe-notation}.
The utility of the market maker, for all $(\nb,\ns) \in \cU$ and $\np,\nz\in[0,1]$, is
\begin{equation}
\label{e:utility-of-market-maker}    
    \util(\nb,\ns,\np,\nz) 
\ceq
    \underbrace{ (\np-\nb) }_{\text{market-bid spread}} \!\!\!\!\times\; \underbrace{ \one{ \nb \ge \nz } }_{\text{Maker buys}}
    \;\;\; +
    \underbrace{ (\ns-\np) }_{\text{ask-market spread}} \!\!\!\!\times\; \underbrace{ \one{ \ns < \nz } }_{\text{Maker sells}}
    \; ,
\end{equation}
where $\nz$ is the taker's private valuation and $\np$ is the market price.\footnote{%
The choice of buying when $\nb\ge \nz$ and selling when $\ns<\nz$ models a taker that is slightly more inclined to sell rather than buy: in this case, the taker is willing to sell even when their valuation is exactly equal to the bid. 
This choice is completely immaterial for the results that follow and is merely done for the sake of simplicity. 
All the results we present still hold if trades happen according to a similar rule but with a taker that is slightly more inclined to buy rather than sell (i.e., if a trade occurs whenever $\nb > \nz$ or $\ns\le\nz$) or if this inclination changes arbitrarily whenever a new taker comes at any new time step.}
Our online market making protocol is specified in \Cref{a:trading-protocol}.
At every round $t$, a taker arrives with a private valuation $\nZ_t \in [0,1]$ and the maker posts bid/ask prices $(\nB_t, \nS_t) \in \cU$.
If $\nB_t < \nZ_t \le \nS_t$ (i.e., if the taker is not willing to sell nor to buy at the proposed bid/ask prices), then no trade happens.
If $\nB_t \ge \nZ_t$ (buy) or $\nS_t < \nZ_t$ (sell), a trade happens.
At the end of each round, the maker observes the market price $\nP_t$ and the type of trade (buy, sell, none) that took place in that round.
For each $(b,a) \in \cU$ and each round $t$, defining $U_t(b,a) \coloneqq u(b,a,\nP_t,\nZ_t)$, the maker's utility at round $t$ is  $\Util_t(\nB_t,\nS_t) = \util(\nB_t,\nS_t,\nP_t,\nZ_t)$. 
Hence, any (possibly randomized) learning strategy for the maker at time $t$ is a map that takes as input past feedback $\big\{ \big(\one{\nB_{\tau} \ge \nZ_{\tau}}, \one{\nS_{\tau} < \nZ_{\tau}}, \nP_{\tau} \big)\big\}_{\tau < t}$, plus some optional random seeds, and returns as output a (random) pair $(\nB_t,\nS_t)$ of bid/ask prices in $\cU$.

The maker's goal is to minimize, for any time horizon $T\in\N$, the regret after $T$ time steps:
\[
    R_T 
\ceq
    \sup_{(\nb, \ns)\in \cU} 
    \E\lsb{ \sum_{t=1}^T \Util_t(\nb, \ns) }
    -
    \E\lsb{ \sum_{t=1}^T \Util_t(\nB_t, \nS_t) } \;,
\]
where the expectations are with respect to the (possible) randomness of the market prices, takers' valuations, and (possibly) the internal randomization of the algorithm.
% Note that the supremum is not attained in general, because of the strict inequality in one of the two indicator functions in \Cref{e:utility-of-market-maker}.

\iffalse
\paragraph{Types of feedback} We call the feedback received in the protocol above ``realistic feedback''. In a real market, we very rarely have access to the private valuation $\nZ_t$ of the taker and thus it is safe to assume that the learner only gets to observe $\I\{\nB_t \ge \nZ_t\}$ and $\I\{\nS_t < \nZ_t\}$. Still, one can imagine scenarios where additional information about $\nZ_t$ is sometimes available. 
In this spirit and with the goal of contrasting the effect of partial feedback on learnability and learning rates, it is of theoretical interest 
to study what happens in the other extreme, the full feedback scenario, where the learner gets to observe the value of $\nZ_t$ (in addition to $\nP_t$) at the end of each time step $t$. There might be other interesting feedback models between realistic and full that we leave for future work.
\fi

% {\color{orange}
We study this problem under a mix of several standard assumptions on the underlying process $(\nZ_t, \nP_t)_{t\in\N}$ modeling private valuations and market prices.
\begin{itemize}
    \item[(adv)] $(\nZ_t, \nP_t)_{t\in\N}$ is a stochastic process.
    \item[(iid)] $(\nZ_t, \nP_t)_{t\in\N}$ is an i.i.d.\ stochastic process.\footnote{Note that this assumption alone does not require that, for any $t\in\N$, $\nZ_t$ is independent of $\nP_t$.}
    \item[(iv)] $(\nZ_t, \nP_t)_{t\in\N}$ is a stochastic process such that, for all $t \in \N$, $\nZ_t$ is independent of $\nP_t$.\footnote{Note that this assumption alone does not require that $(\nZ_t, \nP_t)_{t\in\N}$ is an independent stochastic process.}
    \item[(lip)] $(\nZ_t, \nP_t)_{t\in\N}$ is a stochastic process such that, there exists $L > 0$ such that, for all $t \in \N$, the cumulative distribution function of $\nZ_t$ is $L$-Lipschitz.
\end{itemize}
% }

The variables $\nS_t,\nB_t,\nZ_t,\nP_t$ can be thought of as relative to a global reference value revealed at the beginning of each time step, aligning with the convention of quoting in ``points" (or ``ticks") above or below a benchmark. 
With this interpretation, in the regret definition, the benchmark strategy maintains a constant distance from a reference throughout the trading period (as opposed to a constant absolute level).
These observations greatly broaden the scope of application of our setting.

\section{\texorpdfstring{$T^{2/3}$}{T 2/3} Upper Bound (adv+lip) or (iid+iv)}

\label{s:upper}

Since the learner's action space contains a continuum of actions, 
% (i.e., the set of ask/bid pairs) is the two-dimensional set $\cU \ceq \{(\nb,\ns) \in [0,1]^2 \mid \nb \le \ns\}$,
%If market prices and takers' valuations are both generated adversarially,
standard bandit approaches require Lipschitzness (or the weaker one-side Lipschitzness) of the reward function, a property that is unfortunately missing in our setting, even when the environment is stable (iid) and the takers are uninformed (iv).
One way to recover Lipschitzness (although only in expectation) is by assuming the takers' valuations are smoooth, i.e., by assuming that their cumulative distribution functions are Lipschitz (lip).
Under the (lip) assumption, recalling that our action space $\cU$ is the set of bid/ask pairs $\{(\nb,\ns) \in [0,1]^2 \mid \nb \le \ns\}$, and noticing that the feedback the learner receives at the end of each time step is sufficient to reconstruct bandit feedback, the market making problem can be seen as an instance of $2$-dimensional Lipschitz bandits.
However, a black-box application of existing $d$-dimensional Lipschitz bandits techniques \citep[with $d=2$]{slivkins2019introduction} would give only a suboptimal regret rate of $\cO(T^{\nicefrac{(d+1)}{(d+2)}}) = \cO(T^{\nicefrac{3}{4}})$.

Using a different approach that exploits the structure of our feedback and reward functions, we manage not only to solve the non-Lipschitz (iid+iv) setting, but also to effectively reduce the dimensionality of the problem by $1$ under the smoothness assumption, obtaining an optimal $\cO(T^{\nicefrac{2}{3}})$ regret rate in all cases.
%
% The previous result is based on discretizing the action space and exploiting the Lipschitzness of the expected reward function provided by the smoothness assumption, together with the available feedback (which is sufficient to reconstruct bandit feedback). Can discretization be used to obtain the same rate when smoothed adversarial valuations are replaced by a different, yet natural, assumption? In \Cref{s:upper-boud-T-two-thirds} we consider an i.i.d.\ setting with independence between market prices and private valuations. Although this assumption does not provide Lipschitzness (not even in expectation), we show that the same algorithm as before has a regret again bounded by $\cO(T^{\nicefrac{2}{3}})$. However, the analysis relies on a different observation: although we cannot guarantee that the reward of \emph{all} actions is well approximated by that of a corresponding point in the grid, we can guarantee the weaker property that the expected reward of the \emph{best} actions are approximated by the expected reward of a point in the grid, which is sufficient to prove the desired rate.
%
To do so, we first note that the utility of our market-making problem can be viewed as the sum of the utilities of two related sub-problems: the first addend in~\eqref{e:utility-of-market-maker} corresponds to the utility in a repeated first-price auction problem with unknown valuations, while the second addend corresponds to the utility in a dynamic pricing problem with unknown costs (see below).
This suggests trying to use two algorithms for the two problems to solve our market-making problem.
However, in our problem we have the further constraint that the bid we propose in the first-price auction is no greater than the price we propose in the dynamic pricing problem.
This constraint prevents us from running directly two independent no-regret algorithms for the two different problems because the bid/ask pair produced this way may violate the constraint.
To get around this roadblock, we present a meta-algorithm (\Cref{a:meta-vin}) that takes as input two algorithms for the two sub-problems and combines them to post pairs in $\cU$ while preserving the respective guarantees.
We now formally introduce the two related problems and build the explicit reduction.

\paragraph{Repeated first-price auctions with unknown valuations (FPA).}
Consider the following online problem (denoted by FPA) of repeated first-price auctions where the learner only learns the valuation of the good they are bidding on \emph{after} the auction is cleared. 
At any time $t$, two $[0,1]$-valued random variables $\nV_t$ and $\nM_t$ are generated and hidden from the learner: $\nV_t$ is the current valuation of the good and $\nM_t$ is the highest competing bid for the good, both unknown to the learner at the time of bidding.
Then, the learner bids an amount $\nX_t \in [0,1]$ and wins the bid if and only if their bid is higher than or equal to the highest competing bid. 
The utility of the learner is $\nV_t - \nX_t$ if they win the auction and zero otherwise.
Finally, the valuation of the good $\nV_t$ and the indicator function $\one{ \nX_t \ge \nM_t }$ (representing whether or not the learner won the auction) are revealed to the learner.

The goal is to minimize, for any time horizon $T\in\N$, the regret after $T$ time steps
\[
    \Rfpa_T
\ceq
    \sup_{\nx \in [0,1] } 
    \E\lsb{ \sum_{t=1}^T (\nV_t - \nx)\,\one{\nx \ge \nM_t } }
    -
    \E\lsb{ \sum_{t=1}^T (\nV_t - \nX_t)\,\one{\nX_t \ge \nM_t } } \;,
\]
where the expectations are taken with respect to the (possible) randomness of $(\nV_1, \nM_1), \dots, (\nV_T, \nM_T)$ and the (possible) internal randomization of the algorithm that outputs the bids $\nX_1, \dots, \nX_T$.

The same setting was studied by~\citet{cesabianchiRoleTransparencyRepeated2024}, but with different feedback models.

\paragraph{Dynamic pricing with unknown costs (DP).}

Consider the following online problem (denoted by DP) of dynamic pricing where the learner only learns the current cost of the item they are selling \emph{after} interacting with the buyer. 
At any time $t$, two $[0,1]$-valued random variables $\nC_t$ and $\nW_t$ are generated and hidden from the learner: $\nC_t$ is the current cost of the item, and $\nW_t$ is the buyer's valuation for the item.
Then, the learner posts a price $\nY_t \in [0,1]$ and the buyer buys the item if and only if their valuation is higher than the posted price. 
The utility of the learner is $\nY_t - \nC_t$ if the buyer bought the item and zero otherwise.
Finally, the cost of the item $\nC_t$ and the indicator function $\one{\nY_t < \nW_t }$ (representing whether or not the buyer bought the item) are revealed to the learner.

The goal is to minimize, for any time horizon $T\in\N$, the regret after $T$ time steps
\[
    \Rdp_T
\ceq
    \sup_{\ny \in [0,1] } 
    \E\lsb{ \sum_{t=1}^T (\ny - \nC_t) \, \one{ \ny < \nW_t } }
    -
    \E\lsb{ \sum_{t=1}^T ( \nY_t - \nC_t) \, \one{ \nY_t < \nW_t } } \;,
\]
where expectations are with respect to the (possible) randomness of $( \nC_1,  \nW_1), \dots, ( \nC_T,  \nW_T)$ and the (possible) internal randomization of the algorithm that outputs the posted prices $ \nY_1, \dots,  \nY_T$.

Dynamic pricing is generally studied under the assumption that costs are known and all equal to zero, and assuming that the buyer buys whenever their evaluation is greater than \emph{or equal to} the posted price---see, e.g.,~\citep{kleinberg2003value,cesa2019dynamic,leme2021learning,luo2024distribution}.

\paragraph{Solving FPA and DP solves market making.}
In this section, we introduce a meta-algorithm that combines two sub-algorithms, one for repeated first-price auctions with unknown valuations and one for dynamic pricing with unknown costs.
We thus obtain an algorithm for market-making and we show in \Cref{t:upper-bound-vin} that its regret can be upper bounded by the sum of the regrets of the two sub-algorithms for the two corresponding sub-problems.
The idea of our meta-algorithm \vin{} (\Cref{a:meta-vin}) begins with the observation that the utility of the market maker is a sum of two terms that correspond to the utilities of the learner in first-price auctions and dynamic pricing, respectively.
This suggests maintaining two algorithms in parallel, one to determine bid prices and one to determine ask prices.
\vin{} then enforces the constraint that bid prices should be no larger than ask prices by swapping the recommendations of the two sub-algorithms whenever they generate corresponding bid/ask prices that would violate the constraint.
Finally, \vin{} leverages the available feedback $\lrb{ \one{\nB_t \ge \nZ_t}, \one{\nS_t < \nZ_t}, \nP_t }$ at time $t$ to reconstruct and relay back to the two sub-algorithms the counterfactual feedback they would have observed if the swap didn't happen (i.e., if the learner always posted the prices determined by the sub-algorithms).
\begin{algorithm2e}[t]
    \Input{Algorithm $\cA^{\mathrm{fpa}}$ for FPA and algorithm $\cA^{\mathrm{dp}}$ for DP}
    \For{%
        time $t=1,2,\ldots$
    }{
        Let $\nX_t$ be the output of $\cA^{\mathrm{fpa}}$ at time $t$ and let $\nY_t$ be the output of $\cA^{\mathrm{dp}}$ at time $t$\;
        \lIf{$\nX_t \le \nY_t$}{%
            let $\nB_t := \nX_t$ and $\nS_t := \nY_t$%
        } \lElseIf{$\nX_t > \nY_t$}{%
            let $\nB_t := \nY_t$ and $\nS_t := \nX_t$%
        }
        Post bid/ask prices $(\nB_t, \nS_t)$ and
        observe feedback $\one{\nB_t \ge \nZ_t}$, $\one{\nS_t < \nZ_t}$, and $\nP_t$\;
        \lIf{$\nX_t \le \nY_t$}{%
            feed $(\one{ \nB_t \ge \nZ_t }, \nP_t)$ to $\cA^{\mathrm{fpa}}$ and $(\one{ \nS_t < \nZ_t }, \nP_t)$ to $\cA^{\mathrm{dp}}$%
        } \lElseIf{$\nX_t > \nY_t$}{%
            feed $(1 - \one{\nS_t < \nZ_t}, \nP_t)$ to $\cA^{\mathrm{fpa}}$ and $(1-\one{\nB_t \ge \nZ_t}, \nP_t)$ to $\cA^{\mathrm{dp}}$%
        }
    }
    \caption{ \vin{} (Meta Market Making) \label{a:meta-vin}}
\end{algorithm2e}

The next result shows that the regret of \vin{} is upper bounded by the sum of the regrets of the two sub-algorithms, which in turn leads to optimal regret guarantees for the market making problem. 
\begin{theorem}\label{t:upper-bound-vin}
    % Let $T \in \N$. Suppose that $(\nP_t,\nZ_t)_{t \in [T]}$ is a $[0,1]^2$-valued stochastic process.
    % \begin{itemize}
    %     \itemm Let $\cA^{\mathrm{fpa}}$ be an algorithm for repeated first-price auctions with unknown valuations and let $\Rfpa_T$ be its regret over the sequence $(\nV_t,\nM_t)_{t\in[T]} \ceq (\nP_t,\nZ_t)_{t\in[T]}$ of unknown valuations and highest competing bids.
    %     \itemm Let $\cA^{\mathrm{dp}}$ be an algorithm for dynamic pricing with unknown costs and let $\Rdp_T$ be its regret over the sequence $(\nC_t,\nW_t)_{t\in[T]} \ceq (\nP_t,\nZ_t)_{t\in[T]}$ of unknown costs and buyers' valuations.
    % \end{itemize}
    % Then, in the realistic-feedback online market-making problem, the regret of \vin{} run with parameters $\cA^{\mathrm{fpa}}$ and $\cA^{\mathrm{dp}}$ over the sequence $(\nP_t,\nZ_t)_{t \in [T]}$ of market prices and takers' valuations satisfies
    % \[
    %     R_T
    % \le
    %     \Rfpa_T + \Rdp_T\;.
    % \]
    In the online market-making problem, the regret of \vin{} run with parameters $\cA^{\mathrm{fpa}}$ and $\cA^{\mathrm{dp}}$ over the sequence $(\nP_t,\nZ_t)_{t \in [T]}$ of market prices and takers' valuations satisfies, for all $T$,
    $
        R_T
    \le
        \Rfpa_T + \Rdp_T
    $,
    where $\Rfpa_T$ (resp., $\Rdp_T$) is the regret of $\cA^{\mathrm{fpa}}$ (resp., $\cA^{\mathrm{dp}}$) in the FPA (resp., DP) problem over the sequence $(\nV_t,\nM_t)_{t\in[T]} \ceq (\nP_t,\nZ_t)_{t\in[T]}$ (resp., $(\nC_t,\nW_t)_{t\in[T]} \ceq (\nP_t,\nZ_t)_{t\in[T]}$).
    Furthermore, for any horizon $T$, there exist $\cA^{\mathrm{fpa}}$ and $\cA^{\mathrm{dp}}$ such that, for any sequence $(\nP_t,\nZ_t)_{t \in [T]}$ of market prices and takers' valuations, if one of the two following conditions holds:
    \begin{enumerate}
        \item \label{t:assOnePrime}
        For each $t \in [T]$, the cumulative distribution function of $\nZ_t$ is $L$-Lipschitz, for some $L>0$. 
        \item \label{t:assTwoPrime}
        The process $( \nP_t , \nZ_t )_{t \in [T]}$ is i.i.d.\ and, for each $t \in [T]$, $\nP_t$ is independent of $\nZ_t$.
    \end{enumerate}
    Then, the regret of \vin{} run with parameters $\cA^{\mathrm{fpa}}$ and $\cA^{\mathrm{dp}}$ satisfies
    $
        R_T
    \le
        c T^{\nicefrac{2}{3}}
    $,
    with $c \le 2L+100$ (resp., $c \le 102$) if \Cref{t:assOnePrime} (resp., \Cref{t:assTwoPrime}) holds.
\end{theorem}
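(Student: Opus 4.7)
The plan is to prove the first part by a per-round comparison between the maker's utility and the sub-algorithms' utilities, and the second part by instantiating the sub-algorithms with known rate-optimal bandits for the two sub-problems.

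For the first part, I would decompose the per-round utility in~\eqref{e:utility-of-market-maker} as $\Util_t(\nb,\ns) = U^{\mathrm{fpa}}_t(\nb) + U^{\mathrm{dp}}_t(\ns)$, where $U^{\mathrm{fpa}}_t(\nx) \ceq (\nP_t-\nx)\one{\nx \ge \nZ_t}$ is exactly the FPA utility on the sequence $(\nV_t,\nM_t) \ceq (\nP_t,\nZ_t)$, and $U^{\mathrm{dp}}_t(\ny) \ceq (\ny-\nP_t)\one{\ny < \nZ_t}$ is exactly the DP utility on $(\nC_t,\nW_t) \ceq (\nP_t,\nZ_t)$. Enlarging the benchmark from $\cU$ to $[0,1]^2$ gives
$\sup_{(\nb,\ns)\in\cU} \E\bsb{\sum_t \Util_t(\nb,\ns)} \le \sup_\nb \E\bsb{\sum_t U^{\mathrm{fpa}}_t(\nb)} + \sup_\ns \E\bsb{\sum_t U^{\mathrm{dp}}_t(\ns)}$.
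The crux is the per-round inequality $\Util_t(\nB_t,\nS_t) \ge U^{\mathrm{fpa}}_t(\nX_t) + U^{\mathrm{dp}}_t(\nY_t)$. In the no-swap branch ($\nX_t \le \nY_t$) equality holds by construction. In the swap branch ($\nX_t > \nY_t$), a three-way case split on the location of $\nZ_t$ relative to $\nY_t < \nX_t$ shows that the gap equals $\nX_t - \nY_t \ge 0$ whenever any trade is triggered and is $0$ otherwise: intuitively, swapping turns an inverted bid/ask into a wider spread that is always at least as profitable as the counterfactual.

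Next I would verify that the feedback relayed to the sub-algorithms is the true FPA/DP feedback for their own posted actions, so that their internal regret bounds apply. In the swap branch, $\nS_t = \nX_t$ and $\nB_t = \nY_t$, hence $1-\one{\nS_t < \nZ_t} = \one{\nX_t \ge \nZ_t}$ is the correct FPA feedback for bid $\nX_t$ against highest-competing-bid $\nZ_t$, and $1-\one{\nB_t \ge \nZ_t} = \one{\nY_t < \nZ_t}$ is the correct DP feedback for price $\nY_t$ against valuation $\nZ_t$; in the no-swap branch this is immediate. Consequently the sequences $(\nX_t)_t$ and $(\nY_t)_t$ produced by $\cA^{\mathrm{fpa}}$ and $\cA^{\mathrm{dp}}$ are distributed exactly as if each were run stand-alone on $(\nP_t,\nZ_t)_t$, and combining the two displays above yields $R_T \le \Rfpa_T + \Rdp_T$.

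For the second part, I would plug in the two discretized-bandit base algorithms $\db$ and $\dbdp$ alluded to in the paper, whose $\cO(T^{\nicefrac{2}{3}})$ guarantees can be established by standard arguments: discretize $[0,1]$ on an $\varepsilon$-grid of size $K \sim T^{\nicefrac{1}{3}}$ and run a bandit over the $K$ arms. Under assumption~\ref{t:assOnePrime}, the $L$-Lipschitzness of the CDF of $\nZ_t$ makes $\nb \mapsto \E[U^{\mathrm{fpa}}_t(\nb)]$ and $\ns \mapsto \E[U^{\mathrm{dp}}_t(\ns)]$ Lipschitz in expectation, so discretization incurs $\cO(LT/K)$ bias and stochastic/adversarial bandit regret contributes $\cO(\sqrt{KT})$, balanced at $K \sim T^{\nicefrac{1}{3}}$; a careful accounting yields the constant $2L+100$. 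Under assumption~\ref{t:assTwoPrime}, i.i.d.\ data with $\nP_t \perp \nZ_t$ ensures that $\E[U^{\mathrm{fpa}}_t(\nb)] = \E[\nP_t]\,\Pb(\nZ_t \le \nb) - \nb\,\Pb(\nZ_t \le \nb)$ is a well-defined function of $\nb$ only, so the tolerance analysis of \citet{cesabianchiRoleTransparencyRepeated2024}-style discretized algorithms gives $\cO(T^{\nicefrac{2}{3}})$ even without Lipschitzness, with the stated constant $102$.

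The main obstacle is the swap-lemma $\Util_t(\nB_t,\nS_t) \ge U^{\mathrm{fpa}}_t(\nX_t) + U^{\mathrm{dp}}_t(\nY_t)$, which is what makes the dimension-reduction trick work: naively running two no-regret algorithms in parallel and swapping whenever the constraint is violated would break the feedback loop for the sub-algorithms; the careful relaying of counterfactual feedback, combined with the fact that the swap is always weakly beneficial for the maker, is precisely what lets both regrets be compared term-by-term without incurring any extra loss.
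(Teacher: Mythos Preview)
Your proposal is correct and follows essentially the same approach as the paper: the decomposition $\Util_t(\nb,\ns)=U^{\mathrm{fpa}}_t(\nb)+U^{\mathrm{dp}}_t(\ns)$, the feedback-relaying argument, the per-round swap inequality via a three-way case split on the position of $\nZ_t$ relative to $\nY_t<\nX_t$, and the instantiation with discretized bandits (Poly INF on a grid of size $\sim T^{1/3}$) all match the paper's proof. One small imprecision: your summary ``the gap equals $\nX_t-\nY_t$ whenever any trade is triggered and is $0$ otherwise'' is off, since in the swap branch the gap is $\nX_t-\nY_t$ in \emph{all three} sub-cases (including the middle case $\nY_t<\nZ_t\le\nX_t$ where the maker does \emph{not} trade but both counterfactual sub-problems do); the ``$0$ otherwise'' part never actually occurs in the swap branch. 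This does not affect the validity of the argument.
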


\begin{proofsketch}%[Proof sketch]
     First, notice that the feedback received by the two algorithms is equal to the feedback they would have retrieved in the corresponding instances regardless of whether or not the prices were swapped. 
     Hence, the regret guarantees of the prices posted by the two algorithm on the original sequence still hold true.
     Notice that, for each $t \in \N$, we have that
     \[
        (\textrm{I}_t) \coloneqq U_t(\nB_t,\nS_t)
    \ge
        (\nP_t - \nX_t) \I\{\nX_t \ge \nZ_t \} + (\nY_t - \nP_t) \I\{\nZ_t > \nY_t \} \eqqcolon (\textrm{II}_t)\;.
     \]
    In fact, if $\nX_t \le \nY_t$ the two quantities are equal because $\nB_t = \nX_t$ and $\nS_t = \nY_t$.
    Instead, if $\nY_t < \nX_t$\iffalse(\Cref{fig:relative-positioning})\fi, then $\nB_t = \nY_t$ and $\nS_t = \nX_t$, and a direct computation shows that if $\nZ_t \le \nY_t$ then
            $
                (\textrm{I}_t) 
                 =
                \nP_t - \nY_t
                >
                \nP_t - \nX_t
                =
                (\textrm{II}_t)
            $;
        if $\nY_t < \nZ_t \le \nX_t$ then
            $
                (\textrm{I}_t) 
                 =
                0
                >
                \nY_t - \nX_t
                =
                (\textrm{II}_t)
            $;
        and if $\nX_t < \nZ_t$ then
            $
                (\textrm{I}_t) 
                =
                \nX_t - \nP_t
                >
                \nY_t - \nP_t
                =
                (\textrm{II}_t)
            $.
    % \begin{itemize}
    %     \item if $\nZ_t \le \nY_t$ then
    %         $
    %             (\textrm{I}_t) 
    %              =
    %             \nP_t - \nY_t
    %             >
    %             \nP_t - \nX_t
    %             =
    %             (\textrm{II}_t)
    %         $
    %     \item if $\nY_t < \nZ_t \le \nX_t$ then
    %         $
    %             (\textrm{I}_t) 
    %              =
    %             0
    %             >
    %             \nY_t - \nX_t
    %             =
    %             (\textrm{II}_t)
    %         $
    %     \item if $\nX_t < \nZ_t$ then
    %         $
    %             (\textrm{I}_t) 
    %             =
    %             \nX_t - \nP_t
    %             >
    %             \nY_t - \nP_t
    %             =
    %             (\textrm{II}_t)\;.
    %         $
    % \end{itemize}
    It follows that 
    \begin{align*}
        R_T
    &=
        \sup_{(b,a) \in \cU} \E\lsb{ \sum_{t=1}^T U_t(b,a)} - \E\lsb{\sum_{t=1}^T (\mathrm{I}_t) }
    \le
        \sup_{(b,a) \in \cU} \E\lsb{ \sum_{t=1}^T U_t(b,a)} - \E\lsb{\sum_{t=1}^T (\mathrm{II}_t) }
    \\
    &\le
         \sup_{b \in [0,1]} \E\lsb{ \sum_{t=1}^T (\nP_t - b) \I\{b \ge \nZ_t \} } - \E\lsb{\sum_{t=1}^T (\nP_t - \nX_t) \I\{\nX_t \ge \nZ_t \} }
    \\
    &\qquad+
          \sup_{a \in [0,1]} \E\lsb{ \sum_{t=1}^T (a - \nP_t) \I\{a < \nZ_t \}} - \E\lsb{\sum_{t=1}^T (\nY_t - \nP_t) \I\{\nY_t < \nZ_t\} }
    =
        \Rfpa_T + \Rdp_T\;. %\qedhere
    \end{align*}
    \iffalse
    \begin{figure}\label{fig:relative-positioning}
        \centering
        \includegraphics[page=6, width=.75\linewidth]{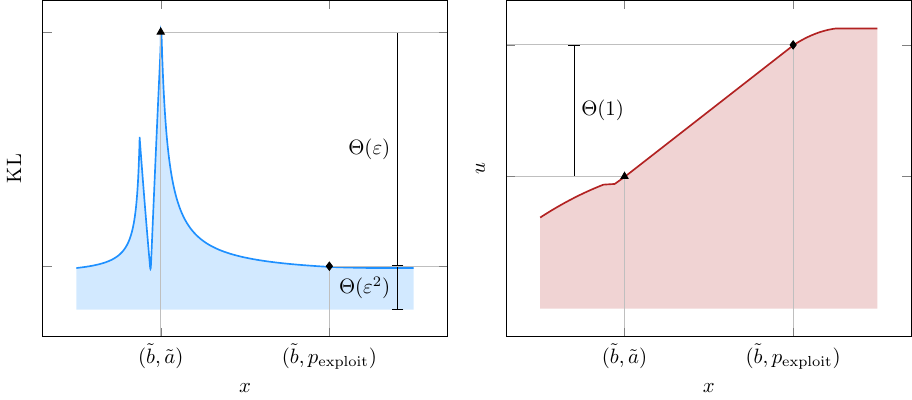}
        \caption{Possible relative positioning of $\nZ_t$ with respect to $\nY_t$ and $\nX_t$, when $\nY_t < \nX_t$.}
    \end{figure}
    \fi
    For the second part of the result, $\cA^{\mathrm{fpa}}$ and $\cA^{\mathrm{dp}}$ can be chosen by designing discretized versions of appropriately defined bandit algorithms. 
    In the Lipschitz case, the proof follows from standard arguments showing that the expected reward of any price can be controlled by one of the closest points in the discretization grid.
    The (iid+iv) case is more subtle and it is based on the fact that, although we cannot accurately approximate the expected reward of all prices, under these assumptions, it is possible to approximate the expected rewards of near-optimal prices, which turns out to be sufficient to obtain the result.
    For all missing details, see \Cref{s:missing-details-VIN}.
\end{proofsketch}

\section{\texorpdfstring{$T^{2/3}$}{T 2/3} Lower Bound (iid+lip+iv)}
\label{s:lowah}

In this section, we present the main result of this work, a lower bound showing that no algorithm can achieve a worst-case regret better than $\Omega(T^{\nicefrac{2}{3}})$ even when the sequence of market prices and takers' valuations is i.i.d.\ with a smooth distribution, and market prices and takers' valuations are independent of each other.

To give the reader some intuition on the nuances and peculiarities of this lower bound, we begin by comparing the trade-offs between regret and the amount of received feedback in hard instances of our setting and those of other (better-known) related settings.
The three main ingredients in the standard lower bound approach for online learning with partial feedback are:
\begin{itemize}
    \item \emph{Hard instances}, constructed as ``$\e$-perturbations'' of a base instance obtained by slightly altering the base distributions of some of the random variables drawn by the environment.
    \item The \emph{amount of information} acquired when playing an action, quantified by the Kullback-Leibler (KL) divergence between the distribution of the feedback obtained when selecting the action in the perturbed instance and the corresponding feedback distribution when selecting the action in the base instance---the higher, the better for the learner.
    \item The \emph{regret} of an action, simply measured by the expected instantaneous regret of the action in the perturbed instance---the smaller, the better for the learner.
\end{itemize}

In hard instances of $K$-armed bandits, an arm is drawn uniformly at random from a $K$-sized set and assigned an expected reward $\e$-higher than the other arms in the set. Selecting an optimal arm costs zero regret and provides $\Theta(\e^2)$ bits of information. Selecting a non-optimal arm, instead, provides no information and costs $\e$ regret \citep{auer2002nonstochastic}.
This implies that any algorithm has only one way to acquire information about an arm: playing that arm. A similar situation occurs in other settings, including those with continuous decision spaces.
For example, to acquire information about a bid in a standard hard instance of first-price auctions (or dynamic pricing), the only reasonable option is to bid that amount, incurring regret $\Theta(\e)$ if the bid is non-optimal and acquiring $\Theta(\e^2)$ bits of information if the arm is optimal \citep{cesabianchiRoleTransparencyRepeated2024}. See \Cref{fig:fp-kl-utility} for a pictorial representation of the expected reward and KL divergence in said hard instances.
\begin{figure}[t]
    \centering
    \includegraphics[page=2, scale=0.8]{}
    \caption{Utility (in red) and amount of acquired information (in blue) in a hard instance of the classic multi-armed bandit problem with a finite set of arms (on the left) and in a hard instance of the first-price auctions problem with a continuous set of bids (on the right)---a similar picture can be drawn for the dynamic pricing problem. 
    Both quantities are maximized at the same point.
    In particular where there is no perturbation, there is no information and some amount of reward, otherwise the KL can grow up to a quantity of order $\e^2$ and the utility can grow by a quantity of order at least $\e$.}
    \label{fig:fp-kl-utility}
\end{figure}

Analogously, to build hard instances of market-making, we begin by drawing a price $\tilde{\nb}$ uniformly at random from a known finite set of bid prices and the bid/ask pair $(\tilde{\nb}, 1)$ is given a reward $\e$-higher than the other (potentially optimal) bid/ask pairs of the form $(\nb',1)$.
However, in the market-making problem, there is a trade-off between reward and feedback that is completely absent in the aforementioned classical problems.
The amount of information and corresponding regret of each pair of prices $(\nb, \ns)$ can be visualized by looking at the qualitative plots in  \Cref{fig:mm-kl-utility}.
\begin{figure}
    \centering
    \includegraphics[width=\linewidth]{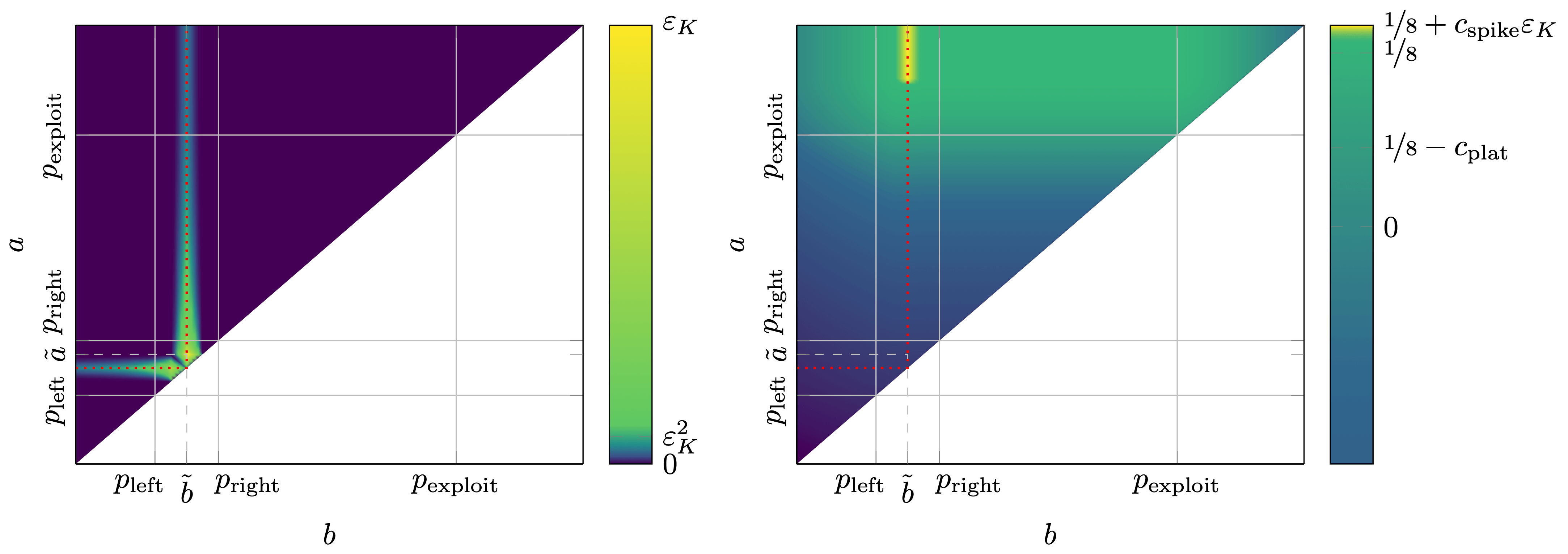}
    \caption{The heat maps of the amount of information (on the left) and the expected utilities (on the right) obtained by selecting pairs of prices in a hard instance we constructed in our $\Omega(T^{\nicefrac{2}{3}})$ lower bound for i.i.d.\ smooth instances with market prices independent of takers' valuations (\Cref{t:lower-bound-lip-iv}). 
    To distinguish the optimal region with reward of (at least) $\nicefrac{1}{8} + c_{\mathrm{spike}}\e_K$ (the yellow region at the top of the right plot) from the suboptimal region with reward $\nicefrac{1}{8}$, one could play any point in the non-blue horizontal and vertical regions on the left plot.
    Note that the pairs that yield the highest amount of feedback (the two small yellow neighborhoods close to the diagonal on the left plot) give highly suboptimal rewards, which are at least $c_{\textrm{plat}} = \Omega(1)$ below $\nicefrac{1}{8}$.
    For explicit definitions of $\e_K, c_{\mathrm{spike}}, c_{\mathrm{plat}}$, see \Cref{s:kl-bounds}.
    }
    \label{fig:mm-kl-utility}
\end{figure}
Here, there are uncountably many ways to determine if a pair of prices $(\nb, 1)$ is optimal: 
playing any pair in the non-blue area of the left plot in \Cref{fig:mm-kl-utility} allows to modulate the amount of exploration, with the points that yield the highest quality feedback being the ones with low reward and high regret (for a plot representing the growth of the amount of feedback and the corresponding reward as a function of the pair $(\nb,\ns)$ taken along a representative path, see \Cref{fig:mm-line-kl-utility}).
Consequently, the exploration-exploitation trade-off consists in \emph{first} choosing whether to exploit with some pair $(\nb,1)$ or instead to explore another potentially optimal pair $(\nb', 1)$, and \emph{then}, in the second case, choosing from an uncountable set of options $(\nb'',\ns'')$ how much regret one is willing to suffer in order to acquire more information about the (potential) optimality of $(\nb',1)$.\footnote{%
The situation is even more complex because pairs $(\nb, \ns)$ allow to test for the optimality not just of the pair $(\nb,1)$, but also of the pair $(\ns,1)$.} 
To build a lower bound, we then need to consider \emph{all} possible ways a learner can decide to leverage these multiple trade-offs between information and regret that arise in the market making problem.

\begin{theorem}\label{t:lower-bound-lip-iv}
    In the online market-making problem, 
    for each $L \ge 8$, each time horizon $T \ge 42$, and each (possibly randomized) algorithm for online market making, there exists a $[0,1]^2$-valued i.i.d.\ sequence $(\nP_t,\nZ_t)_{t \in [T]}$ of market prices and takers' valuations such that for each $t \in [T]$ the two random variables $\nP_t$ and $\nZ_t$ are independent of each other, they admit $L$-Lipschitz cumulative distribution functions, and the regret of the algorithm over the sequence $(\nP_t,\nZ_t)_{t \in [T]}$ is lower bounded by
    $
        R_T \ge c T^{\nicefrac{2}{3}}
    $
    where $c \ge 10^{-6}$ is a constant.
\end{theorem}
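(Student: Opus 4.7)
The plan is to construct a family of $K+1$ i.i.d.\ hard instances indexed by $i \in \{0, 1, \ldots, K\}$, with $K = \Theta(T^{\nicefrac{1}{3}})$, and to apply a multi-hypothesis change-of-measure argument (Pinsker-style, in the vein of classical multi-armed bandit lower bounds) to conclude that no algorithm can have sublinear regret on all instances simultaneously. The driving intuition is exactly the one depicted in \Cref{fig:mm-kl-utility}: information about the location of the ``spike'' can only be gathered by playing pairs $(\nb, \ns)$ whose boundary passes through the spike window, and the resulting information--regret trade-off is strictly worse than the one in vanilla multi-armed bandits.

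I would start from a base instance $\nu_0$ in which $\nP$ and $\nZ$ are independent and both admit $L$-Lipschitz cdfs, chosen so that every pair $(\nb, 1)$ with $\nb$ in a fixed interval $[\beta_{\min}, \beta_{\max}] \subset (0, \bar p)$ has the same expected utility $\nicefrac{1}{8}$; this pins down the cdf of $\nZ$ on that interval (essentially $F_0(\nb) \propto \nicefrac{1}{(\bar p - \nb)}$), which is easy to realize while keeping the density bounded by $L$. For each $i \in [K]$ I would then pick equally spaced centers $\nb_1 < \cdots < \nb_K$ in $[\beta_{\min}, \beta_{\max}]$ (gap of order $1/K$) and define $\nu_i$ by a \emph{localized, mass-preserving} perturbation of the density of $\nZ$ of amplitude $\varepsilon_K = \Theta(\sqrt{K/T})$ supported in a window of width $\Theta(1/K)$ around $\nb_i$ (mass moved from the right of $\nb_i$ to the left of it); the independence assumption (iv) is preserved for free, and Lipschitzness holds as long as $\varepsilon_K K \lesssim L$. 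The three properties I would then verify are: (a) on $\nu_i$, the pair $(\nb_i, 1)$ has expected utility $\nicefrac{1}{8} + \Theta(\varepsilon_K)$ while, by mass preservation, every $(\nb_j, 1)$ with $j \ne i$ still has utility exactly $\nicefrac{1}{8}$; (b) the feedback distribution of a pair $(\nb, \ns)$ on $\nu_i$ differs from its distribution on $\nu_0$ \emph{only} when $\nb$ or $\ns$ lies in the spike window of $\nb_i$; (c) whenever this happens, the per-round KL divergence between the two feedback laws is at most $C \varepsilon_K^2$, because the three-outcome feedback probabilities differ entrywise by at most $O(\varepsilon_K)$ on a probability scale bounded away from $0$ and $1$ -- and this bound has to hold uniformly over all $(\nb, \ns) \in \cU$, including the ``diagonal'' pairs appearing in \Cref{fig:mm-kl-utility}.

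With these properties at hand, the lower bound proceeds via a standard multi-hypothesis argument. Let $T_i$ be the number of rounds in which the algorithm plays a pair $(\nb, \ns)$ with $\nb$ or $\ns$ in the spike window of $\nb_i$; these events are essentially disjoint across $i$, so $\sum_i T_i \le 2T$. If $\E_i[T_i] < T/2$ for some $i$, then on $\nu_i$ the algorithm plays outside $\nb_i$'s window at least $T/2$ times, each costing $\Omega(\varepsilon_K)$ regret by (a), so $R_T = \Omega(T\varepsilon_K)$ already. Otherwise, the chain rule for KL combined with (c) gives $\mathrm{KL}(\Pb_0 \,\|\, \Pb_i) \le C \varepsilon_K^2 \, \E_0[T_i]$, Pinsker's inequality transfers $\E_i[T_i] \ge T/2$ back to $\E_0[T_i]$, and averaging over $i$ using $\sum_i \E_0[T_i] \le 2T$ forces $\varepsilon_K = \Omega(\sqrt{K/T})$ -- so again $R_T = \Omega(T\varepsilon_K) = \Omega(\sqrt{KT})$ on at least one hypothesis. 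Optimizing $K$ subject to the Lipschitz constraint $K\varepsilon_K \le L$ gives $K = \Theta(L^{\nicefrac{2}{3}} T^{\nicefrac{1}{3}})$ and $\varepsilon_K = \Theta(L^{\nicefrac{1}{3}} T^{-1/3})$, yielding the claimed $R_T \ge c \, T^{\nicefrac{2}{3}}$.

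I expect the main obstacle to be fully establishing (a): building the mass-preserving bump so that no $(\nb, 1)$ pair with $\nb \ne \nb_i$ sees \emph{any} change in expected utility simultaneously requires zero total integral, cdf-preservation outside the window, and a genuine reward gain of order $\varepsilon_K$ precisely at $\nb = \nb_i$, all while keeping the density bounded by $L$ and compatible with the plateau shape $(\bar p - \nb) F(\nb) \equiv \nicefrac{1}{8}$. A secondary concern is making the uniform bound (c) robust to ``diagonal'' exploration strategies and to randomized algorithms, the latter handled by the standard reduction to mixtures of deterministic ones. Sharpening the explicit constant down to $c \ge 10^{-6}$ is then purely bookkeeping, but it constrains the explicit choice of $\beta_{\min}$, $\beta_{\max}$, $\bar p$, and the shape of the bump.
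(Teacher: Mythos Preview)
Your overall architecture --- plateau base instance, $K$ localized spikes, multi-hypothesis Pinsker argument --- matches the paper's, but claim (c) is false as stated and this is the heart of the difficulty. You assert that the per-round KL is $O(\e_K^2)$ uniformly because ``the three-outcome feedback probabilities differ entrywise by at most $O(\e_K)$ on a probability scale bounded away from $0$ and $1$.'' The middle outcome has probability $F(\ns)-F(\nb)$, and for the diagonal plays you yourself flag, with $\ns-\nb$ of order $\e_K$ or smaller, this probability is $\Theta(\ns-\nb)$, \emph{not} bounded away from zero. A direct computation (write the KL as $\sum_j \delta_j^2/p_j$ to leading order) gives a middle-term contribution $\delta_2^2/p_2$ with $|\delta_2|\asymp \ns-\nb$ and $p_2\asymp \ns-\nb$, hence $\Theta(\ns-\nb)$, which is $\Theta(\e_K)$ in the worst case --- one full order larger than your claimed $\e_K^2$. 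The paper's \Cref{lemma:kl-upperbounds} makes exactly this distinction: KL $\le c_1\e_K^2$ only in the ``exploit'' strip where $\ns>p_{\mathrm{exploit}}$, while in all the near-diagonal ``explore'' regions one can only get KL $\le c_2\e_K$.

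This is not a patchable technicality: with only your single-scale $O(\e_K^2)$ bookkeeping, an algorithm that cycles the near-diagonal pairs $(\nb_i,\nb_i+\e_K/2)$ through all $K$ windows for $\Theta(T^{2/3})$ rounds would accumulate $\Theta(1)$ total KL per hypothesis and identify the spike --- your Pinsker step would wrongly declare this impossible. What rescues the lower bound is the observation your regret accounting omits: every play with $\ns\le p_{\mathrm{exploit}}$ (hence every high-information diagonal play) incurs $\Omega(1)$ instantaneous regret, because selling at a low ask against a market price concentrated near $1$ is costly. The paper therefore partitions $\cU$ into an exploit zone (low KL $O(\e_K^2)$, low regret $O(\e_K)$) and an explore zone (high KL $O(\e_K)$, but constant regret $\ge c_{\mathrm{plat}}$), and runs a dichotomy: either the algorithm explores $\ge T^{2/3}$ rounds and pays $c_{\mathrm{plat}}T^{2/3}$ outright, or its exploration-KL is small enough that the averaged Pinsker bound goes through. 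To complete your proof you would need to add this second regret scale and the two-tier KL bound; the uniform $O(\e_K^2)$ estimate in (c) cannot be made to hold.
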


We present here a proof sketch and defer the full proof of this result to \Cref{s:kl-bounds}.

\begin{proofsketch}%[Proof sketch]
The first step is to build a base joint distribution over market prices and takers' valuations such that the expected utility function of the learner is maximized over an entire segment of pairs of prices $(\nb,1)$, for $\nb$ that belongs to some interval---note that these are instances where it is never optimal to sell.
Then, we partition this set of maximizers into $K $ %= \Theta ( T^{\nicefrac{1}{3}} )$ 
segments with the same size and build $K$ perturbations of the base distribution such that, in each one of these perturbations, the corresponding expected utilities have a small $\e = \Theta ( 1/K )$-spike inside one of these subsegments.
% For the details on how to obtain this construction see \Cref{s:kl-bounds}.
% To obtain this construction, we draw market prices uniformly on $\bsb{\frac78,1}$ and define a more involved distribution for the takers' valuations. %(see \Cref{fig:pertburbed-density} for the plot of one of these perturbed distributions over takers' valuations).

We then make a simple observation:
If we content ourselves with proving a lower bound for algorithms that play exclusively bid/ask pairs of the form $(\nb, 1)$, our problem reduces to repeated first-price auctions with unknown valuations. 
In this simplified problem, the learner has to essentially locate an $\e$-spike among one of $\nicefrac{1}{\e}$ locations. 
Therefore, the learner can either refuse to locate the spike, and pay an overall $\Theta (\e T)$ regret, or invest $\Omega (\nicefrac{1}{\e^2})$ rounds\footnote{Given that the KL divergence between the base distribution of the feedback of points in a perturbed region and the perturbed distribution of the feedback of the same points is $O(\e^2)$, an information-theoretic argument shows that a sample of size $\Omega(\nicefrac{1}{\e^2})$ in any region is needed to determine whether the spike occurs in that region.} in trying to locate each one of the $\nicefrac{1}{\e}$ potential spikes, paying $\Omega (\e)$ each time a point in the wrong region is sampled, for an overall regret of $\Omega \brb{ \frac{1}{\e^2} \cdot \frac{1}{\e} \cdot \e } = \Omega ( \nicefrac{1}{\e^2} )$. 
The hardest instance is when $\e T = \Theta ( \nicefrac{1}{\e^2} )$, i.e., when $\e = \Theta ( T^{\nicefrac{1}{3}} )$.
In this case, no matter what the learner decides, they will always pay at least $\Omega ( \e T ) = \Omega( \nicefrac{1}{\e^2} ) = \Omega (T^{\nicefrac{2}{3}} )$ regret.

However, the problem is substantially more delicate in the general case where, to explore a potentially optimal action $(\nb, 1)$, the learner has access to uncountably many options $(\nb', \ns')$ (see \Cref{fig:mm-kl-utility}).
This also leads to another complication.
In standard online learning lower bounds, the worst-case regret of an algorithm is analyzed by counting how many times the algorithm plays exploring vs.\ exploiting arms, then quantifying how much information was gathered from the exploration, and finally summing over all exploring arms.
This strategy is hardly implementable in our setting because each exploiting pair $(b, 1)$ can be explored with uncountably many other arms $(\nb', \ns')$, where each arm trades off reward with feedback in a different way (see \Cref{fig:mm-line-kl-utility})---for example, one could post two prices $(\nb, \nb + \theta)$, for some small $\theta>0$ paying $\Omega(1)$ regret to obtain much higher quality feedback than if they played $(\nb, 1)$ ($\e$ vs.\ $\e^2$ bits of information).
\begin{figure}[t]
    \centering
    \includegraphics[page=1, scale=0.8]{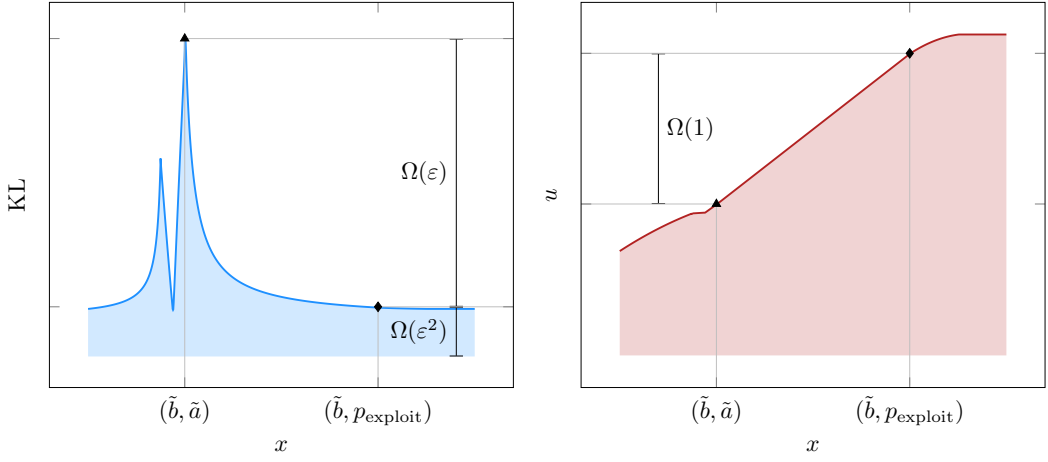}
    \caption{
    The amount of information (on the left) and the corresponding expected utility (on the right) as a function of pairs of prices $(\nb,\ns)$, with $(\nb,\ns)$ following the red dotted line in \Cref{fig:mm-kl-utility} starting at the left boundary of the horizontal non-blue region at the point $(0,\tilde{\nb})$, moving horizontally up to the diagonal, then vertically up to the optimal pair $(\tilde{\nb},1)$, parameterized by a real number $x$.
    The points $(\tilde \nb, \tilde \ns)$ and $(\tilde \nb, p_\mathrm{exploit})$ illustrate how regions of high reward and regions of high information do not coincide, in fact the cost of gathering more information (e.g., play $(\tilde \nb, \tilde a)$ instead of $(\tilde \nb, p_\mathrm{exploit})$) can be a constant $\Theta(1)$ in the reward. On the other hand, a high-reward play gathers only $\Theta(\varepsilon^2)$ amount of information, hence costing $\Theta(\varepsilon)$ with respect to the maximum amount of information that can be gathered.
    }
    \label{fig:mm-line-kl-utility}
\end{figure}

We circumvent this problem by clustering the action set $\cU$ into a growing (with the time horizon $T$) number of disjoint subsets of non-uniform sizes that we further group in three macro-categories: ``suffer high regret but (potentially) obtain a large amount of information'', ``suffer little (or potentially no) regret but also obtain little information'', and clearly suboptimal actions where we ``suffer some regret and obtain no information''.
We then analyze the regret of any algorithm in terms of how many times it selects an action from each cluster, using similar techniques for all clusters belonging to the same macro-category.
By doing so, we can prove that no matter how an algorithm decides to play, there will always be instances where it has to pay a regret of order at least $\Omega (T^{\nicefrac{2}{3}})$.
\end{proofsketch}

We conclude this section by remarking that, to the best of our knowledge, this is the first lower bound for a problem exhibiting such a continuum of trade-offs between reward and feedback.
As our setting is arguably simple, we expect this phenomenon to occur in other applications of online learning to digital markets. 
Therefore, we believe the lower bound techniques developed in this work could be of independent interest.

\section{Unlearnability without (lip) or (iv) Assumptions}

In this section, we show that, even under the (iid) assumption, without smoothness (lip) or independence between market prices and takers' valuations (iv), learning is impossible in general.

\begin{theorem}\label{t:minimal-lower-bound-iid}
In the online market-making problem, 
for each time horizon $T$ and each $\delta \in (0, \nicefrac{1}{2})$ and each (possibly randomized) algorithm for online market making, there exists a $[0,1]^2$-valued i.i.d.\ sequence $(\nP_t,\nZ_t)_{t \in [T]}$ of market prices and takers' valuations such that the regret of the algorithm over the sequence $(\nP_t,\nZ_t)_{t \in [T]}$ is lower bounded by
$
    R_T \ge \lrb{\frac{1}{2} - \delta} \cdot T
$.
\end{theorem}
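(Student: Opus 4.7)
The plan is to prove the unlearnability claim via Yao's minimax principle: I exhibit a small family of iid joint distributions on $(\nZ_t,\nP_t)$ and show that, under a uniform prior on that family, the Bayes regret of any (deterministic or randomized) algorithm is at least $(\nicefrac{1}{2}-\delta)T$. By the standard averaging argument, this implies that for every algorithm there exists a member of the family for which the regret is at least $(\nicefrac{1}{2}-\delta)T$.

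The construction I would use is a pair (or small family) of iid distributions $D_0,D_1$ on $[0,1]^2$ that simultaneously violate both (lip) and (iv). Concretely, $\nZ_t$ should be atomic (hence not Lipschitz) and perfectly correlated with $\nP_t$ (hence not independent), with the two distributions sharing \emph{identical marginals} for both $\nZ_t$ and $\nP_t$ but differing in their joint law. A natural template is to take $\nZ_t$ uniform on a two-point set $\{z_L,z_R\}\subset (0,1)$ and to set $\nP_t$ to be a deterministic function of $\nZ_t$ that flips between the two instances (e.g., $\nP_t=\one{\nZ_t=z_L}$ under $D_0$ and $\nP_t=\one{\nZ_t=z_R}$ under $D_1$). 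The key properties that I would verify for this construction are: (a) in each $D_b$ the supremum of the expected per-round utility over $\cU$ equals (or approaches) $\nicefrac{1}{2}$, attained by some pair $(\nb^*_b,\ns^*_b)$ that depends on $b$; (b) any action whose feedback is identically distributed under $D_0$ and $D_1$ achieves per-round utility at most $\delta$ in at least one of the two instances; and (c) conversely, any action with feedback distributions that meaningfully differ between $D_0$ and $D_1$ is \emph{itself} either $\nb^*_0$ or $\nb^*_b$, so distinguishing requires committing to one instance's optimal action and thereby paying $\Omega(1)$ per round in the other.

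From properties (a)–(c), the information-theoretic core of the argument is elementary: because the learner's observable sequence $\bigl(\one{\nB_t\ge \nZ_t},\one{\nS_t<\nZ_t},\nP_t\bigr)_{t\in[T]}$ has the same distribution under $D_0$ and $D_1$ whenever the algorithm plays from the "uninformative" region, Pinsker's inequality implies that the algorithm's posterior on $b\in\{0,1\}$ cannot move appreciably until it pays $\Omega(1)$ regret per round. A standard two-point Le Cam bound on the Bayes regret then yields $\tfrac{1}{2}\bigl(R_T(D_0)+R_T(D_1)\bigr)\ge (\nicefrac{1}{2}-\delta)T$, which by averaging gives a single bad instance.

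The main obstacle I expect is designing the joint distributions so that property (b) is tight and (c) is enforceable: because $\nP_t$ is always revealed, it is delicate to guarantee that the marginal and all "safe" feedback marginals coincide under $D_0$ and $D_1$ while still producing a gap of $\nicefrac{1}{2}$ between the optima of the two instances. This is precisely where the combined failure of (lip) (which allows atoms that make the expected-utility function discontinuous, so that $(\nb,\ns)$ pairs near but not equal to $(\nb^*_b,\ns^*_b)$ are genuinely far from optimal) and of (iv) (which allows $\nP_t$ to be recoverable from $\nZ_t$ only through actions that trigger a trade, i.e., the costly ones) is used crucially and simultaneously.
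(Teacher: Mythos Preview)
Your two-instance Le Cam plan does not yield the claimed $\Omega(T)$ bound, and the gap is concrete. With $D_0,D_1$ as you describe ($\nZ_t$ uniform on $\{z_L,z_R\}$ and $\nP_t$ a deterministic function of $\nZ_t$ that flips between the two instances), any action that triggers a trade produces feedback with \emph{disjoint} supports under $D_0$ and $D_1$. For example, take $b=z_L$ and $a$ just below $z_R$: under $D_0$ the observed triple $\bigl(\one{b\ge\nZ_t},\one{a<\nZ_t},\nP_t\bigr)$ lies in $\{(1,0,1),(0,1,0)\}$, whereas under $D_1$ it lies in $\{(1,0,0),(0,1,1)\}$. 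A single round therefore identifies the instance exactly, after which the learner plays the instance-optimal action for the remaining $T-1$ rounds and incurs total regret $O(1)$. Your property (c) is right in spirit---distinguishing requires an action that is $\Omega(1)$-suboptimal in one instance---but the learner need only pay that cost \emph{once}, not per round; the relevant KL is $+\infty$, so Pinsker contributes nothing. (Incidentally, property (a) also fails: under $D_1$ the supremum of the expected utility is approximately $0$, not $\tfrac12$.) More generally, no fixed finite family of atomic instances can work here, because $\nP_t$ is always revealed and a trade reveals a threshold on $\nZ_t$, making the joint law identifiable from $O(1)$ samples.

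The paper takes an entirely different, non-information-theoretic route. Since the setting is i.i.d.\ one may assume the algorithm is deterministic; restricting to instances with $\nP_t\in\{0,1\}$, the feedback alphabet is finite, so the set of prices the algorithm can ever post over $T$ rounds is finite. One then chooses $\tfrac12-\e<c<d<\tfrac12+\e$ so that $[c,d]$ contains none of those prices, and lets $(\nP_t,\nZ_t)$ be i.i.d.\ uniform on $\{(0,d),(1,c)\}$. Any pair with $c<b=a<d$ earns at least $\tfrac12-\e$ every round, while the algorithm---whose prices all lie outside $[c,d]$---earns at most $\e$ in expectation per round, whichever of the three configurations $b\le a<c$, $b<c<d<a$, or $d<b\le a$ it finds itself in. The bad instance is thus \emph{tailored to the algorithm}, which is precisely what sidesteps the identifiability obstacle that breaks the two-point approach.
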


We present here a proof sketch and defer the full proof of this result to \Cref{s:liner-lower-bound-appe}.

\begin{proofsketch}%[Proof sketch]
In an i.i.d.\ setting, without any loss of generality, we can assume that the learning algorithm is deterministic.
Notice that, if market prices are always equal to $0$ or $1$, a deterministic algorithm can only generate finitely many points over a finite time horizon.
Therefore, for any fixed deterministic algorithm and any small $0< \e \approx 0$, there exists a small interval $[c,d]$ included in $\bsb{ \frac12 - \e, \frac12 + \e }$ where the algorithm never plays over a finite time horizon.
By building an i.i.d.\ family of market prices and takers' valuations such that $(\nP_t,\nZ_t) = (0,d)$ or $(\nP_t,\nZ_t) = (1,c)$ with probability $\nicefrac{1}{2}$ each, one can prove that (i) the best fixed bid/ask pair belongs to the open interval $(c,d)$, (ii) that this pair always wins (roughly) at least $1/2$ reward in expectation, while (iii) the learning algorithm (roughly) gains $0$ in expectation.   
\end{proofsketch}

\section{Conclusions and Future Directions}
We initiated an investigation of market making under an online learning framework and provided tight bounds on the regret under various natural assumptions. 
While the regret of various problems related to financial trading was previously investigated, most of these results are from the viewpoint of a liquidity seeker. 
Liquidity providers are crucial for the functioning of markets, giving practical impact to our theoretical findings.

%\section{Future works}
% Our work explores a \textit{realistic} level of feedback.
% while the study of the regret under different types of feedback, like the ones studied for first-price auctions \citep{cesabianchiRoleTransparencyRepeated2024}, is left for future developments. In particular the case of
% As far as future work is concerned, e
Extending our analysis to full feedback, in which the learner observes the private valuations at the end of each round, would shed light on the cost of the more realistic feedback studied in this paper.
% In a different direction, we could relax the assumption that the market maker has unlimited inventory, which would make our setting more closely related to \textit{bandits-with-knapsacks} \citep{badanidiyuru_bandits_2018}, which is known to be unsolvable in the adversarial setting \citep{immorlica_adversarial_2023}.
In a different direction, we could consider market makers that could clear their position at a later time, which would make our setting more closely related to \textit{bandits-with-knapsacks} \citep{badanidiyuru_bandits_2018}, which is known to be unsolvable in the adversarial setting \citep{immorlica_adversarial_2023}.

\acks{
    NCB is partially supported by the EU Horizon CL4-2021-HUMAN-01 research and innovation action under grant agreement 101070617, project ELSA (European Lighthouse on Secure and Safe AI), by the MUR PRIN grant 2022EKNE5K (Learning in Markets and Society), and by the One Health Action Hub, University Task Force for the resilience of territorial ecosystems, funded by Università degli Studi di Milano (PSR 2021-GSA-Linea 6).
    
    RC is partially supported by the FAIR (Future Artificial Intelligence Research) project, funded by the NextGenerationEU program within the PNRR{-}PE{-}AI scheme  (M4C2, investment 1.3, line on Artificial Intelligence).
    
    LF is partially supported by the EU Horizon CL4{-}2022{-}HUMAN{-}02 research and innovation action under grant agreement 101120237, project ELIAS (European Lighthouse of AI for Sustainability).

    TC gratefully acknowledges the support of the University of Ottawa through grant GR002837 (Start-Up Funds) and that of the Natural Sciences and Engineering Research Council of Canada (NSERC) through grants RGPIN{-}2023{-}03688 (Discovery Grants Program) and DGECR2023{-}00208 (Discovery Grants Program, DGECR $-$ Discovery Launch Supplement).
}

\clearpage

\bibliography{mybibliography}

\clearpage

\appendix

\section{Notation}\label{s:appe-notation}
In this section, we collect the main pieces of notation used in this paper.

\begin{table}[h]
    \centering
    \begin{tabular}{c|l}
        \toprule
        $T$ & Time horizon \\
        $K$ & Step of the grid over $[0, 1]$ \\
        $\cU$ & Upper triangle over $[0, 1]^2$ \\
        \bottomrule
        \addlinespace
        
        \toprule        
        \multicolumn{2}{c}{Market making} \\
        \midrule
        $\nP_t$ & Market price \\
        $\nZ_t$ & Taker's valuation \\
        $\nB_t$ & Bid presented by the learner \\
        $\nS_t$ & Ask presented by the learner \\
        \bottomrule
        \addlinespace

        \toprule        
        \multicolumn{2}{c}{First-price auction with unknown valuations} \\
        \midrule
        $\nV_t$ & Unknown valuation of the auctioned item \\
        $\nM_t$ & Highest competing bid in the auction \\
        $\nX_t$ & Bid presented by the learner \\
        \bottomrule
        \addlinespace
        
        \toprule        
        \multicolumn{2}{c}{Dynamic pricing with unknown costs} \\
        \midrule
        $\nC_t$ & Unknown cost of the item for sale \\
        $\nW_t$ & Buyer's valuation of the item \\
        $\nY_t$ & Price presented by the learner \\
        \bottomrule
    \end{tabular}
    \caption{Notation}
    \label{tab:notation}
\end{table}

\section{Additional Related Works}
\label{sec:additional}
There exists a vast literature that treats various trading-related tasks as specific stochastic control problems and solves them using techniques from stochastic control theory~\citep{shreve2004stochastic,shreve2005stochastic,cartea2015algorithmic,gatheral2011volatility}. However, most of that work assumes that the parameters of the underlying stochastic process are known or have been fitted to historical data in a previous ``calibration'' step. 
Fitting a distribution to data is challenging, especially in a market with hundreds of thousands of assets. Moreover, the distribution-fitting process is typically unaware of the downstream optimization problem the distribution is going to be fed into, and thus is unlikely to minimize generalization error on the downstream task. The field of online learning~\citep{10.5555/1137817,shalev2012online,hazan2016introduction,orabona2019modern} provides adversarial or distribution-free approaches for solving exactly the kinds of sequential decision making problems that are common in financial trading, and even though some recent research does apply these approaches to trading, they are still far from being widely adopted. 
An early contribution to this area is Cover's model of ``universal portfolios'' \citep{coverUniversalPortfolios1991}, where a problem of portfolio construction is solved in the case where asset returns are generated by an adversary. Cover showed that one can still achieve logarithmic regret with respect to the best ``constantly rebalanced portfolio'', that ensures the fraction of wealth allocated to each asset remains constant through time. A long line of work has built up on his model to handle transaction costs~\citep{blum1997universal} and side information~\citep{cover1996universal}, as well as optimizing the trade-off between regret and computation~\citep{helmbold1998line,borodin2000competitive,borodin2003can,zimmert2022pushing}. The problem of pricing derivatives such as options has also been tackled using online learning, where the prices are generated by an adversary as opposed to a geometric Brownian motion~\citep{demarzo2006online,abernethy2012minimax,abernethy2013hedge}.
However, most of the problems considered in the online learning framework are from the point of view of a liquidity seeker as opposed to a liquidity provider, such as a market maker. 

\paragraph{First-price auctions with unknown evaluations.}
When restricted solely to the option of buying stock, our problem can be viewed as a series of repeated first-price auctions with unknown valuations. This scenario has received prior attention within the context of regret minimization \citep{cesabianchiRoleTransparencyRepeated2024,pmlr-v157-achddou21a}. The present work leverages these existing results but applies them within a more intricate setting. 
\citet{cesabianchiRoleTransparencyRepeated2024} explored this problem with varying degrees of transparency, defined as the information revealed by the auctioneer, and provide a comprehensive characterization. 
In our framework, the level of transparency is higher than that encountered in the \textit{bandit} case, orthogonal to the \textit{semi-transparent} case, and less informative that the \textit{transparent} case; this positioning offers a novel perspective on the problem.
\citet{pmlr-v157-achddou21a} investigated the repeated first-price auction problem within a fixed stochastic environment with independence assumptions and provide instance-dependent bounds, we tackle a similar problem in Lemma \ref{lemma:upperbound-discretizedbandits} and present distribution-free guarantees.
Other types of feedback have been considered, e.g., the case where the maximum bids $(\nZ_t)_{t \in [T]}$ form an i.i.d.\ process and are observed only when the auction is lost \citep{han2024optimalnoregretlearningrepeated}, for the case where the private evaluations $(\nP_t)_{t \in [T]}$ are i.i.d.\ it is possible to achieve regret $\tilde O(\sqrt{T})$.

\paragraph{Dynamic pricing with unknown costs.}
When considering solely the option of selling stock, our problem aligns with the well-established field of dynamic pricing with unknown costs. This area boasts a rich body of research, prior research analyzed the setting in which the learner has $I$ items to sell to $N$ independent buyers and, with regularity assumptions of the underlying distributions, can achieve a $O((I\log T)^{\nicefrac{2}{3}})$ regret bound against an offline benchmark with knowledge of the buyer's distribution \citep{babaioffDynamicPricingLimited2015}. It is known that in the stochastic setting and, under light assumptions on the reward function, achieve regret $O(\sqrt{T\log{T}})$ \citep{kleinberg2003value}; this result has been expanded upon by considering discrete price distributions supported on a set of prices of unknown size $K$ and it has been shown to be possible to achieve regret of order $O(\sqrt{KT})$ \citep{cesa2019dynamic}. 
Our work on dynamic pricing focuses on a scenario where the learner maintains an unlimited inventory and must incur an unknown cost per trade before realizing any profit. 

\paragraph{Traditional finance.}
Traditionally, the finance literature first fits the parameters of a stochastic process to the market and then optimizes trading based on those parameters. For example, the Nobel prize-winning Black-Scholes-Merton formula~\citep{black1973pricing} assumes that the price of a stock follows a geometric Brownian motion with known volatility and prices an option on the stock by solving a Hamilton-Jacobi-Bellman equation to compute a dynamic trading strategy whose value at any time matches the value of the option. Many similar formulae have since been derived for more exotic derivatives and for more complex underlying stochastic processes---see~\citep{hull2017options} for an overview. A stochastic control approach has also been taken to solve the problem of ``optimal trade execution,'' which involves trading a large quantity of an asset over a specified period of time while minimizing market impact~\citep{almgren2001optimal,bertsimas1998optimal}, to compute a sequence of trades such that the total cost matches a pre-specified benchmark~\citep{konishi2002optimal}, and also for the problem of market making~\citep{cartea2015algorithmic}. Once again, the underlying stochastic process is assumed to be known.
The two-step approach of first fitting a distribution and then optimizing a function over the fitted distribution is very popular even for simpler questions. For example, the problem of ``portfolio optimization/construction'' deals with computing the optimal allocation of your wealth into various assets in order to maximize some notion of future utility. The celebrated Kelly criterion~\citep{shalunov2020calculated,thorp1975portfolio,breiman1961optimal,hakansson1975optimal,maclean2011kelly} recommends one to use the allocation that maximizes the expected log return, and the mean-variance theory~\citep{markowitz} (another Nobel prize-winning work) says one should maximize a sum of expected return and the variance of returns (scaled by some measure of your risk tolerance). However both assume that one knows the underlying distribution over returns.

\section{Missing details in the Proof of Theorem \ref{t:upper-bound-vin}}
\label{s:missing-details-VIN}

To guide the reader through the proof of \Cref{t:upper-bound-vin}, we split it in several steps.

First, consider the problem of repeated first-price auctions with unknown valuations. %
Since the feedback received at the end of each round allows to compute the utility of the learner, a natural strategy is to simply discretize the interval $[0,1]$ and run a bandit algorithm on the discretization.
The pseudo-code of this simple meta-algorithm can be found in \Cref{a:meta-discretized-bandits}. 
The following result shows its guarantees.

\begin{algorithm2e}[t]
    \Input{Number of arms $K\in \{2,3,\dots\}$, Algorithm $\cA$ for $K$-armed bandits}
    \Init{Let $q_k \ceq \frac{k-1}{K-1}$, for all $k\in[K]$}
    \For {time $t=1,2,\ldots$}{
        Let $I_t \in [K]$ be the output of $\cA$ at time $t$\;
        Post bid $\nX_t \ceq q_{ I_t }$ and observe feedback $\nV_t$ and $\one{ \nX_t \geq \nM_t }$\;
        Compute utility $(\nV_t - \nX_t) \cdot \one{ \nX_t \geq \nM_t }$ and feed it back to $\cA$\;
    }
    \caption{\db{}}
    \label{a:meta-discretized-bandits}
\end{algorithm2e}

\begin{lemma}\label{lemma:upperbound-discretizedbandits}
    In the repeated first-price auctions with unknown valuations problem,
    let $T \in \N$ be the time horizon and let $( \nV_t, \nM_t )_{t \in [T]}$ be the $[0,1]^2$-valued stochastic process representing the sequence of valuations and highest competing bids. 
    Assume that one of the two following conditions is satisfied:
    \begin{enumerate}
        \item \label{t:assOneSec} For each $t \in [T]$, the cumulative distribution function of $\nM_t$ is $L$-Lipschitz, for some $L>0$. 
        \item \label{t:assTwoSec} The process $( \nV_t , \nM_t )_{t \in [T]}$ is i.i.d.\ and, for each $t \in [T]$, the two random variables $\nV_t$ and $\nM_t$ are independent of each other.
    \end{enumerate}
    Then, for any $K \ge 2$ and any $K$-armed bandit algorithm $\cA$, letting $R^K_T$ be the regret of $\cA$ when the reward at any time $t\in[T]$ of any arm $k\in[K]$ is $(\nV_t-q_k)\one{q_k \ge \nM_t}$,
    the regret of~\Cref{a:meta-discretized-bandits} run with parameters $K$ and $\cA$ satisfies
    $
        R_T
    \le
        R_T^K + \frac{\tilde L + 1}{2(K-1)} T \;,
    $
    with $\tilde L = L$ (resp., $\tilde L = 1$) if \Cref{t:assOneSec} (resp., \Cref{t:assTwoSec}) holds. In particular, if $T \ge 2$, by choosing $K \coloneqq \lceil T^{\nicefrac{1}{3}} \rceil + 1$ and, as the underlying learning procedure $\cA$, an adapted version of Poly INF~\citep{audibert10a}, the regret of~\cref{a:meta-discretized-bandits} run with parameters $K$ and $\cA$ satisfies
    $
        R_T \le c \cdot T^{\nicefrac{2}{3}} \;,
    $ where $c \le L + 50$ (resp. $c \le 51$) if \Cref{t:assOne} (resp., \Cref{t:assTwo}) holds.
\end{lemma}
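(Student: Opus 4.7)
\begin{proofsketch}
The plan is to write the regret as the bandit regret of $\cA$ over the $K$ grid arms plus a discretization error. Define $\phi_t(x) \ceq \E\bsb{(\nV_t - x)\one{x \ge \nM_t}}$, pick $x^\star \in \argmax_{x \in [0,1]} \sum_{t=1}^T \phi_t(x)$, and select a grid point $q_{k^\star}$ adapted to the case. Adding and subtracting $\sum_t \phi_t(q_{k^\star})$ and using that $R_T^K \ge \sum_t \phi_t(q_{k^\star}) - \E\sum_t(\nV_t - \nX_t)\one{\nX_t \ge \nM_t}$ yields
\[
R_T \le \sum_{t=1}^T \bigl(\phi_t(x^\star) - \phi_t(q_{k^\star})\bigr) + R_T^K,
\]
reducing the task to a per-round discretization bound of $\frac{\tilde L + 1}{2(K-1)}$.

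Under \Cref{t:assOneSec}, I would take $q_{k^\star}$ to be the grid point closest to $x^\star$, so $\Delta \ceq |x^\star - q_{k^\star}| \le \frac{1}{2(K-1)}$. Assuming $q_{k^\star} \le x^\star$ (the other case being symmetric), splitting the expectation according to whether $\nM_t$ falls in $[0,q_{k^\star}]$, $(q_{k^\star},x^\star]$, or $(x^\star,1]$ gives
\[
\phi_t(x^\star) - \phi_t(q_{k^\star}) = (q_{k^\star} - x^\star)\,F_{\nM_t}(q_{k^\star}) + \E\bsb{(\nV_t - x^\star)\one{q_{k^\star} < \nM_t \le x^\star}}.
\]
The first addend has absolute value at most $\Delta$, while the second is bounded by $\Pb[q_{k^\star} < \nM_t \le x^\star] \le L\Delta$ by the $L$-Lipschitzness of $F_{\nM_t}$. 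Summing over $t$ produces $(L+1)\Delta T \le \frac{L+1}{2(K-1)}T$.

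Under \Cref{t:assTwoSec}, I would instead pick $q_{k^\star}$ as the smallest grid point at least $x^\star$, so $0 \le q_{k^\star} - x^\star \le \frac{1}{K-1}$. Independence together with the i.i.d.\ hypothesis give $\phi_t(x) = (\mu_V - x) F(x)$ with $\mu_V \ceq \E[\nV_1]$ and $F$ the CDF of $\nM_1$. A direct algebraic rearrangement gives
\[
\phi_t(x^\star) - \phi_t(q_{k^\star}) = (\mu_V - x^\star)\bigl(F(x^\star) - F(q_{k^\star})\bigr) + (q_{k^\star} - x^\star)\,F(q_{k^\star}).
\]
Since $\phi_t(x) \le 0$ for $x > \mu_V$ while $\phi_t(0) \ge 0$, one may assume $x^\star \le \mu_V$, so $\mu_V - x^\star \ge 0$; combined with $F(x^\star) \le F(q_{k^\star})$, this makes the first addend nonpositive, while the second is at most $q_{k^\star} - x^\star \le \tfrac{1}{K-1}$, matching the claim with $\tilde L = 1$.

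The main obstacle is precisely this iid+iv case, where $F$ can have jumps and no uniform approximation of $\phi_t$ on the grid exists; the argument succeeds only because $q_{k^\star}$ is chosen on the specific side where the monotonicity of $F$ works in our favour, exploiting the factored form of $\phi_t$. For the quantitative corollary, I would instantiate $\cA$ as Poly INF, after an affine rescaling that places the rewards $(\nV_t - q_k)\one{q_k \ge \nM_t} \in [-1,1]$ into $[0,1]$, obtaining a $K$-armed bandit regret of order $\sqrt{KT}$. Adding it to the discretization error and choosing $K \ceq \lceil T^{\nicefrac{1}{3}} \rceil + 1$ balances the two terms and yields $R_T \le c\, T^{\nicefrac{2}{3}}$ with $c \le L+50$ under \Cref{t:assOneSec} and $c \le 51$ under \Cref{t:assTwoSec}.
\end{proofsketch}
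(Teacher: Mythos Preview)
Your proposal is correct and follows essentially the same route as the paper: decompose $R_T$ into the $K$-armed bandit regret $R_T^K$ plus a discretization error, then bound the latter by $(L+1)/\brb{2(K-1)}$ per round via Lipschitzness in Case~\ref{t:assOneSec}, and by $1/(K-1)$ per round in Case~\ref{t:assTwoSec} using the factored form $\phi_t(x)=(\mu_V-x)F(x)$ together with the one-sided choice $q_{k^\star}\ge x^\star$. The only noticeable difference is that the paper, in Case~\ref{t:assTwoSec}, works with an $\eta$-approximate maximizer $x^\star_\eta$ and lets $\eta\to 0$, whereas you directly take $x^\star\in\argmax$; your shortcut is valid because $\phi_t$ is upper semicontinuous on $[0,\mu_V]$ (right-continuous with nonnegative jumps there), so the sup is attained, but you should state this explicitly rather than leave it implicit.
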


\begin{proof}
The regret $R_T^K$ of $\cA$, when the reward at any time $t\in[T]$ of an arm $k\in[K]$ is $(\nV_t-q_k)\one{q_k \ge \nM_t}$, is defined by
\[
    R_T^K
=
    \max_{k\in[K]} \E\lsb{ \sum_{t=1}^T ( \nV_t - q_k ) \cdot \one{ q_k \geq \nM_t } }
    -
    \E\lsb{ \sum_{t=1}^T ( \nV_t - q_{ I_t } ) \cdot \one{ q_{ I_t } \geq \nM_t } }.
\]
The regret $R_T$ of \cref{a:meta-discretized-bandits} when the unknown valuations are $\nV_1,\dots,\nV_T$ and the highest competing bids are $\nM_1,\dots,\nM_T$ is
\[
    R_T
=
    \sup_{x \in [0,1] } 
    \E\lsb{ \sum_{t=1}^T ( \nV_t - x ) \cdot \one{ x \geq \nM_t } }
    -
    \E\lsb{ \sum_{t=1}^T (\nV_t - q_{ I_t } ) \cdot \one{ q_{ I_t } \geq \nM_t } }.
\]
Hence $R_T = R_T^K + \delta_K $, where $\delta_K$ is the discretization error
\[
    \delta_K
\ceq
    \sup_{x \in [0,1] } 
    \E\lsb{ \sum_{t=1}^T ( \nV_t - x ) \cdot \one{ x \geq \nM_t } }
    -
    \max_{k\in[K]} \E\lsb{ \sum_{t=1}^T ( \nV_t - q_k ) \cdot \one{ q_k \geq \nM_t } }
    .
\]
We now proceed to bound the discretization error $\delta_K$ in the two cases:
\begin{enumerate}
    \item If the cumulative distribution function of $\nM_t$ is $L$-Lipschitz, for some $L > 0$, then for all $t \in [T]$ the expected utility is $\phi(q) = \E\bsb{ ( \nV_t - q ) \cdot \one{ q \geq \nM_t } }$ is $(L+1)$-Lipschitz; indeed, for all $q,q'\in[0,1]$, if (without loss of generality) $q>q'$, then
    \begin{multline*}
        \labs{ \phi(q) - \phi(q') }
    =
        \Babs{\E \bsb{(\nV_t-q)\one{\nM_t\le q} - (\nV_t-q')\one{\nM_t\le q'} } }
    \\ \qquad\qquad =
        \Babs{\E \bsb{(\nV_t-q) \one{q' < \nM_t \le q} + (q'-q) \one{\nM_t \le q'} }}
    \\ \le
        \brb{ \Pb[\nM_t \le q] - \Pb[\nM_t \le q'] } + (q-q')
    \le 
        (L+1) (q-q') .
    \end{multline*}
    Define the supremum $x^\star \in [0, 1]$ in the definition of $\delta_K$, which exists because $\phi$ is continuous and $[0, 1]$ is compact. Let $k^\star = \argmin_{k \in [K]} |q_{k} - x^\star|$ be the point in the grid closest to $x^\star$. Since
    \[
        \max_{k\in[K]} \E\lsb{ \sum_{t=1}^T ( \nV_t - q_k ) \cdot \one{ q_k \geq \nM_t } }
    \ge
        \E\lsb{ \sum_{t=1}^T ( \nV_t - q_{k^\star} ) \cdot \one{ q_{k^\star} \geq \nM_t } } \;,
    \]
    the function $\phi$ is $(L+1)$-Lipschitz, and $\labs{ x^\star - q_{k^\star} } \le \nicefrac{1}{2(K-1)}$ we obtain
    \[
        \delta_K
    \le
        \sum_{t=1}^T \Brb{ 
            \E\bsb{ ( \nV_t - x^\star ) \cdot \one{ x^\star \geq \nM_t } }
            -
            \E\bsb{ ( \nV_t - q_{k^\star} ) \cdot \one{ q_{k^\star} \geq \nM_t } }
        }
    \le
        \frac{L+1}{2(K-1)} T \;.
    \]

    \item If the process $(\nV_t,\nM_t)_{t \in [T]}$ is i.i.d.\ and for each $t \in [T]$ the two random variables $ \nV_t$ and $ \nM_t$ are independent of each other, then for all $t \in [T]$ the expected utility is $\psi(x) \coloneqq (\mu - x)F(x)$ where $\mu := \mathbb{E}[\nV_t]$ and $F(x) \coloneqq \Pb[\nM_t \leq x]$.
    Fix $\eta>0$.
    If $\sup_{x \in [0,1]}\psi(x) \le \eta$, set $k^\star = K$ and note that for each $x \in [0,1]$ we have that $\psi (x) \le \eta \le \psi(1) + \eta = \psi(q_{k^\star}) + \eta$, which means $\delta_K\leq\eta T$.
    Otherwise, let $x^\star_\eta$ be such that $\sup_{x \in [0,1]}\psi(x) - \psi(x^\star_\eta) \le \eta$ and notice that we can assume that $x^\star_\eta \le \mu$ because otherwise $\psi(x^\star_\eta) \le 0$, hence we would have been in the first case when $\sup_{x \in [0,1]}\psi(x) \le \eta$.
    The expected reward achieved by playing bid $x^\star_\eta$ can be controlled with any point $q_k$ on the grid such that $q_k \ge x^\star_\eta$ as
   \[
        \psi(x^\star_\eta) - \psi(q_k)
        = (\mu - x^\star_\eta)F(x_\eta^\star) - (\mu - q_k)F(q_k)
        \le (\mu - x^\star_\eta)F(q_k) - (\mu - q_k)F(q_k)
        \le q_k - x^\star_\eta
    \; .
    \]
    Call $k^\star \in [K]$ the index of the arm closest to $x_\eta^\star$ such that $q_{k^\star} \ge x^\star_\eta$, note that $q_{k^\star} - x^\star_\eta \le \nicefrac{1}{(K-1)}$. Thus
    \[
        \delta_K
    \le
        \eta T + \sum_{t=1}^T \lrb{ \psi(x^\star_\eta) - \psi(q_{k^\star}) }
    \le
        \eta T + \sum_{t=1}^T |x^\star_\eta - q_{k^\star}|
    \le
        \eta T + \frac{T}{K-1}\;.
    \]
    Given that $\eta$ was chosen arbitrarily, taking the limit as $\eta \rightarrow 0$, we have that
    \[
        \delta_K \le \frac{T}{K-1}\;.
    \]
\end{enumerate}
Define $\tilde L \coloneqq L$ in the first case and $\tilde L \coloneqq 1$ in the second case.
 Next, pick the Poly INF algorithm~\citep{audibert10a} as the underlying learning procedure $\cA$ and apply the appropriate rescaling of the utilities $x \mapsto \frac{x+1}{2}$, which is necessary because the utility yields values in $[-1, 1]$, while Poly INF was designed for rewards in $[0, 1]$, this costs a multiplicative factor of 2 in the regret guarantee.
The regret $R_T$ of \cref{a:meta-discretized-bandits} can be upper bounded by
\[
    R_T
    \le R^K_T + \frac{\tilde L}{K - 1} \cdot T
    \le 50 \cdot T^{\nicefrac{2}{3}} + \frac{\tilde L}{K - 1} \cdot T
    \le (50 + \tilde L) \cdot T^{\nicefrac{2}{3}}
\]
whenever $K = \lceil T^{\nicefrac{1}{3}} \rceil + 1$ and the second inequality comes from the regret guarantees of the rescaled version of Poly INF~\cite[Theorem 11]{audibert10a}.
\end{proof}

A completely analogous result can be proved for \Cref{a:meta-discretized-bandits-dp} for the problem of dynamic pricing with unknown costs.
\begin{algorithm2e}[t]
    \Input{Number of arms $K\in \{2,3,\dots\}$, Algorithm $\cA$ for $K$-armed bandits}
    \Init{Let $q_k \ceq \frac{k-1}{K-1}$, for all $k\in[K]$}
    \For{time $t=1,2,\ldots$}{
        Let $I_t \in [K]$ be the output of $\cA$ at time $t$\;
        Post price $\nY_t \ceq q_{ I_t }$ and observe feedback $\nC_t$ and $\one{ \nY_t < \nW_t }$\;
        Compute utility $(\nY_t - \nC_t) \cdot \one{ \nY_t < \nW_t }$ and feed it back to $\cA$\;
    }
    \caption{\dbdp{}\label{a:meta-discretized-bandits-dp}}
\end{algorithm2e}

\begin{lemma}\label{lemma:upperbound-discretizedbandits-dp}
    In the repeated dynamic pricing with unknown costs problem,
    let $T \in \N$ be the time horizon and let $( \nC_t, \nW_t )_{t \in [T]}$ be the $[0,1]^2$-valued stochastic process representing the sequence of costs and buyer's valuations. 
    Assume that one of the two following conditions is satisfied:
    \begin{enumerate}
        \item \label{t:assOne}
        For each $t \in [T]$, the cumulative distribution function of $\nW_t$ is $L$-Lipschitz, for some $L>0$. 
        \item \label{t:assTwo}
        The process $( \nC_t , \nW_t )_{t \in [T]}$ is i.i.d.\ and, for each $t \in [T]$, the two random variables $\nC_t$ and $\nW_t$ are independent of each other.
    \end{enumerate}
    Then, for any $K \ge 2$ and any $K$-armed bandit algorithm $\cA$, letting $R^K_T$ be the regret of $\cA$ when the reward at any time $t\in[T]$ of any arm $k\in[K]$ is $(q_k - \nC_t)\one{q_k < \nW_t}$,
    the regret of~\Cref{a:meta-discretized-bandits-dp} run with parameters $K$ and $\cA$ satisfies
    \[
        R_T
    \le
        R_T^K + \frac{\tilde L + 1}{2(K-1)} T \;,
    \]
    with $\tilde L = L$ (resp., $\tilde L = 1$) if \Cref{t:assOne} (resp., \Cref{t:assTwo}) holds. In particular, if $T \ge 2$, by choosing $K \coloneqq \lceil T^{\nicefrac{1}{3}} \rceil + 1$ and, as the underlying learning procedure $\cA$, an adapted version of Poly INF~\citep{audibert10a}, the regret of~\cref{a:meta-discretized-bandits-dp} run with parameters $K$ and $\cA$ satisfies
    \[
        R_T \le c \cdot T^{\nicefrac{2}{3}} \;,
    \] where $c \le L + 50$ (resp. $c \le 51$) if \Cref{t:assOne} (resp., \Cref{t:assTwo}) holds.
\end{lemma}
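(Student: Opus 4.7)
The proof will mirror the structure of \Cref{lemma:upperbound-discretizedbandits} for first-price auctions, since the dynamic pricing problem is essentially ``dual'' to it: the learner's utility $(q-\nC_t)\mathbb{I}\{q<\nW_t\}$ increases in $q$ on the margin side and decreases in $q$ on the sale-probability side, mirroring the FPA reward $(\nV_t-q)\mathbb{I}\{q\ge \nM_t\}$. As in that proof, decompose the regret as $R_T = R_T^K + \delta_K$, where
\[
    \delta_K \ceq \sup_{y\in[0,1]} \E\lsb{\sum_{t=1}^T (y-\nC_t)\one{y<\nW_t}} - \max_{k\in[K]}\E\lsb{\sum_{t=1}^T (q_k-\nC_t)\one{q_k<\nW_t}}
\]
is the discretization error, and bound $\delta_K$ separately in the two cases. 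Finally, applying the adapted Poly INF regret bound of $50\cdot T^{2/3}$ on $R_T^K$ with $K = \lceil T^{1/3}\rceil + 1$ yields the stated $cT^{2/3}$ bound.

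For \Cref{t:assOne} (the Lipschitz case), I would show that the per-round expected utility $\phi(q) \ceq \E[(q-\nC_t)\one{q<\nW_t}]$ is $(L+1)$-Lipschitz. For $q>q'$, write
\[
    \phi(q)-\phi(q') = \E\bsb{(q-\nC_t)\brb{\one{\nW_t>q}-\one{\nW_t>q'}}} + (q-q')\Pb[\nW_t>q'],
\]
which in absolute value is bounded by $\Pb[q'<\nW_t\le q] + (q-q') \le (L+1)(q-q')$ by the Lipschitz assumption on the CDF of $\nW_t$. The grid has spacing $1/(K-1)$, so picking $k^\star$ closest to the (attained) maximizer $x^\star$ gives $\delta_K \le \frac{L+1}{2(K-1)}T$, exactly as in the FPA case.

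For \Cref{t:assTwo} (the i.i.d.\ independent case), write the expected utility as $\psi(q) = (q-\mu)G(q)$ where $\mu\ceq\E[\nC_t]$ and $G(q)\ceq\Pb[\nW_t>q]$. Fix $\eta>0$. If $\sup_{q\in[0,1]}\psi(q)\le\eta$, set $q_{k^\star}=1$ and use $\psi(1)=0$ to get $\delta_K\le\eta T$. Otherwise, pick $x^\star_\eta$ with $\psi(x^\star_\eta)\ge\sup\psi-\eta$; since $\psi(x^\star_\eta)>0$ we have $x^\star_\eta\ge\mu$. The key asymmetry with the FPA proof is that now we approximate from \emph{below}: let $q_{k^\star}$ be the largest grid point with $q_{k^\star}\le x^\star_\eta$. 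Since $G$ is non-increasing and $x^\star_\eta\ge\mu$,
\[
    \psi(x^\star_\eta)-\psi(q_{k^\star}) = (x^\star_\eta-\mu)G(x^\star_\eta) - (q_{k^\star}-\mu)G(q_{k^\star}) \le (x^\star_\eta - q_{k^\star})G(q_{k^\star}) \le \frac{1}{K-1}.
\]
Hence $\delta_K\le\eta T + \frac{T}{K-1}$, and sending $\eta\to 0$ gives $\delta_K\le \frac{T}{K-1}$.

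There is no serious obstacle here; the main care point is verifying the sign of $x^\star_\eta-\mu$ so that the monotonicity of $G$ is used in the correct direction, which is the ``mirror image'' of the corresponding step in \Cref{lemma:upperbound-discretizedbandits} where one argued $x^\star_\eta\le\mu$. Combining the two bounds on $\delta_K$, setting $\tilde L = L$ (resp.\ $\tilde L = 1$), and plugging in the Poly INF guarantee after rescaling utilities from $[-1,1]$ to $[0,1]$ (costing a factor of $2$), one obtains the final $cT^{2/3}$ bound with $c\le L+50$ (resp.\ $c\le 51$), matching the constants in the FPA lemma.
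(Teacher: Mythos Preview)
Your proposal is correct and follows exactly the approach the paper intends: the paper's own proof simply says ``as in \Cref{lemma:upperbound-discretizedbandits}, bound the discretization error via Lipschitzness (Item~1) or via the structure of the expected utility near its maximum (Item~2)'', and you have carried out precisely that mirror argument, correctly identifying the dual adjustments (approximate the near-optimizer from \emph{below} on the grid, use $\psi(1)=0$ at $q_K=1$, and $x^\star_\eta\ge\mu$ in place of $x^\star_\eta\le\mu$). Your verification that the monotonicity of $G$ is applied in the right direction, and your Lipschitz computation for $\phi$, are both sound.
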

\begin{proof}
    As in Lemma \ref{lemma:upperbound-discretizedbandits}, we can bound $R_T$ for the two cases by controlling the discretization error $\delta_K$ on the grid via Lipschitzness (Item 1) or leveraging the structure of the expected utility around the maximum (Item 2). The bound on the regret $\cR_T$ of $\cA$ follows from the same guarantees on Poly INF~\citep{audibert10a}.
\end{proof}

Next, we use Lemma \ref{lemma:upperbound-discretizedbandits} and Lemma \ref{lemma:upperbound-discretizedbandits-dp} to show that the regret of an appropriately tuned \vin{} is bounded by $\cO(T^{\nicefrac{2}{3}})$.
This implies is the second part of \Cref{t:upper-bound-vin}, fully concluding its proof.

\begin{lemma}\label{t:upper-T23-VIN}
    In the online market-making problem,
    let $T \in \N$ be the time horizon and let $(\nP_t,\nZ_t)_{t \in [T]}$ be the $[0,1]^2$-valued stochastic process representing the sequence of market prices and takers' valuations.
    Assume that one of the two following conditions is satisfied:
    \begin{enumerate}
        \item \label{t:assOnePrimeAppe}
        For each $t \in [T]$, the cumulative distribution function of $\nZ_t$ is $L$-Lipschitz, for some $L>0$. 
        \item \label{t:assTwoPrimeAppe}
        The process $( \nP_t , \nZ_t )_{t \in [T]}$ is i.i.d.\ and, for each $t \in [T]$, the two random variables $\nP_t$ and $\nZ_t$ are independent of each other.
    \end{enumerate}
    Then, let $K \coloneqq \lceil T^{\nicefrac{1}{3}} \rceil + 1$, let $\cA^{\mathrm{fpa}}$ be the instance of \cref{a:meta-discretized-bandits} that uses Poly INF as described in Lemma \ref{lemma:upperbound-discretizedbandits}, and $\cA^{\mathrm{dp}}$ be the instance of \cref{a:meta-discretized-bandits-dp} that uses Poly INF as described in Lemma \ref{lemma:upperbound-discretizedbandits-dp}. Then the regret of \vin{} (\Cref{a:meta-vin}) run with parameters $\cA^{\mathrm{fpa}}$ and $\cA^{\mathrm{dp}}$ satisfies
    $
        R_T
    \le
        c T^{\nicefrac{2}{3}}
    $
    with $c \le 2L+100$ (resp., $c \le 102$) if \Cref{t:assOnePrimeAppe} (resp., \Cref{t:assTwoPrimeAppe}) holds.
\end{lemma}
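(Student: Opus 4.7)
The plan is to assemble this result by combining the three components that have already been established in the paper: the reduction bound $R_T \le R_T^{\mathrm{fpa}} + R_T^{\mathrm{dp}}$ from the first part of \Cref{t:upper-bound-vin}, and the two discretized-bandit guarantees of \Cref{lemma:upperbound-discretizedbandits} (for FPA) and \Cref{lemma:upperbound-discretizedbandits-dp} (for DP). Since $\cA^{\mathrm{fpa}}$ and $\cA^{\mathrm{dp}}$ are instantiated exactly as in those two lemmas, essentially all I need to do is verify that the hypotheses propagate correctly through the reduction.

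First, I would invoke the first part of \Cref{t:upper-bound-vin} to conclude that the regret of \vin{} is bounded by $R_T \le R_T^{\mathrm{fpa}} + R_T^{\mathrm{dp}}$, where $R_T^{\mathrm{fpa}}$ is the regret of $\cA^{\mathrm{fpa}}$ on the FPA instance with $(\nV_t,\nM_t) \ceq (\nP_t,\nZ_t)$ and $R_T^{\mathrm{dp}}$ is the regret of $\cA^{\mathrm{dp}}$ on the DP instance with $(\nC_t,\nW_t) \ceq (\nP_t,\nZ_t)$. Next, I would check the two cases separately. Under \Cref{t:assOnePrimeAppe}, the CDF of $\nZ_t$ is $L$-Lipschitz for every $t$; since $\nM_t = \nZ_t$ in the FPA reduction and $\nW_t = \nZ_t$ in the DP reduction, condition 1 of both \Cref{lemma:upperbound-discretizedbandits} and \Cref{lemma:upperbound-discretizedbandits-dp} is satisfied with the same Lipschitz constant $L$. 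Under \Cref{t:assTwoPrimeAppe}, the process $(\nP_t,\nZ_t)_{t\in[T]}$ is i.i.d.\ with $\nP_t$ independent of $\nZ_t$, which immediately gives condition 2 of both lemmas (with the roles of the two coordinates matching $(\nV_t,\nM_t)$ and $(\nC_t,\nW_t)$ respectively).

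Having verified the hypotheses, I would apply the bandit-regret bounds of the two lemmas with $K = \lceil T^{\nicefrac{1}{3}} \rceil + 1$ and Poly INF as the underlying learner. Under \Cref{t:assOnePrimeAppe}, each lemma yields a sub-algorithm regret of at most $(L+50)T^{\nicefrac{2}{3}}$; summing the two gives
\[
    R_T \le R_T^{\mathrm{fpa}} + R_T^{\mathrm{dp}} \le 2(L+50) T^{\nicefrac{2}{3}} = (2L+100) T^{\nicefrac{2}{3}} ,
\]
matching the stated constant. Under \Cref{t:assTwoPrimeAppe}, each lemma yields at most $51 T^{\nicefrac{2}{3}}$, and the two contributions sum to at most $102 T^{\nicefrac{2}{3}}$, again matching the claim.

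There is no genuine obstacle beyond the bookkeeping: the nontrivial work (the reduction from market making to FPA plus DP, and the analyses of the two discretized bandit meta-algorithms) is already done in the first part of \Cref{t:upper-bound-vin} and in Lemmas \ref{lemma:upperbound-discretizedbandits} and \ref{lemma:upperbound-discretizedbandits-dp}. The only point that deserves care is the identification of the marginal distributions in the reduction: it is $\nM_t = \nZ_t$ (not $\nP_t$) whose smoothness enters the FPA bound, and analogously $\nW_t = \nZ_t$ for the DP bound, so a single Lipschitz assumption on the CDF of $\nZ_t$ is enough for both sub-problems; similarly, in the (iid+iv) case, independence of $\nP_t$ and $\nZ_t$ is exactly what each sub-lemma requires between its two coordinates. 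Once these identifications are stated, the proof is a one-line combination of the three prior results.
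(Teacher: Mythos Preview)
Your proposal is correct and follows exactly the same approach as the paper: invoke the reduction $R_T \le R_T^{\mathrm{fpa}} + R_T^{\mathrm{dp}}$ from the first part of \Cref{t:upper-bound-vin}, then plug in the bounds from \Cref{lemma:upperbound-discretizedbandits} and \Cref{lemma:upperbound-discretizedbandits-dp}. Your write-up is in fact more explicit than the paper's, which compresses the argument into two sentences; your careful identification $\nM_t = \nW_t = \nZ_t$ (so that the Lipschitz or independence hypothesis on $\nZ_t$ feeds both sub-problems) and the explicit summation of constants are exactly the bookkeeping the paper leaves implicit.
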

\begin{proof}
    By \Cref{a:meta-vin}, the regret of \vin{} is upper bounded by the regret of a repeated first-price auctions problem with unknown valuations plus a dynamic pricing problem with unknown costs. Plugging in the bounds from Lemma \ref{lemma:upperbound-discretizedbandits} and Lemma \ref{lemma:upperbound-discretizedbandits-dp}, we get the required result.
\end{proof}

\section{Proof of Theorem \ref{t:lower-bound-lip-iv}}
\label{s:kl-bounds}

The following is the proof of our $T^{2/3}$ lower bound (\Cref{t:lower-bound-lip-iv}). 
% The proof construction underscores the fundamental trade-off between acquiring information and maximizing reward. 
% It begins by formulating a hard instance of the market-making problem, defined through designing a density function subjected to specific perturbations. The instance is then discretized into a multi-armed bandit instance, where each arm corresponds to a distinct region within the upper triangle, offering heterogeneous amounts of reward and information, the extent of which is analyzed in Lemma \ref{lemma:kl-upperbounds} through upper bounds on the KL divergence and is used to show that distinguishing between a perturbed and a non-perturbed instance is challenging, leading to the proposed lower bound.

{
\begin{proof}
\renewcommand{\cA}{\mathscr{A}}
\renewcommand{\cS}{\mathcal{A}}
Fix $T \ge 42$.
We define the following constants that will be used in the proof.
\begin{align*}
    K & \coloneqq \bigl\lceil T^{\nicefrac{1}{3}} \bigr\rceil 
& 
    \e_K & \coloneqq \nicefrac{1}{16 K} 
&
     \forall k\in[K], \ r_{K}^k & \coloneqq \nicefrac{3}{16} + \lrb{k - \nicefrac{1}{2}}\e_K
\\ 
    p_{\textbf{left}} & \coloneqq \nicefrac{3}{16} 
& 
    p_{\textbf{right}} & \coloneqq \nicefrac{1}{4} 
& 
    p_{\textbf{exploit}} & \coloneqq \nicefrac{3}{4}
\\
    c_{\textbf{plat}} & \coloneqq \nicefrac{1}{32}
&
    c_{\textbf{spike}} & \coloneqq \nicefrac{1}{72}
&
\end{align*}

Define the density
\[
   f \colon [0,1] \to [0,\infty)
    \;, \qquad
    x \mapsto 
    \frac{8}{9} \cdot \I_{\lsb{0, \frac{3}{16}}}(x) + 
    \frac{1}{8} \frac{1}{\lrb{ \frac{15}{16} - x}^2 } \cdot 
        \I_{ \left( \frac{3}{16}, \frac{3}{4} \right] } (x) +
    \frac{8}{3} \cdot \I_{ \left( \frac{3}{4}, \frac{7}{8} \right] } (x)
    \;,
\]
\begin{figure}[t]
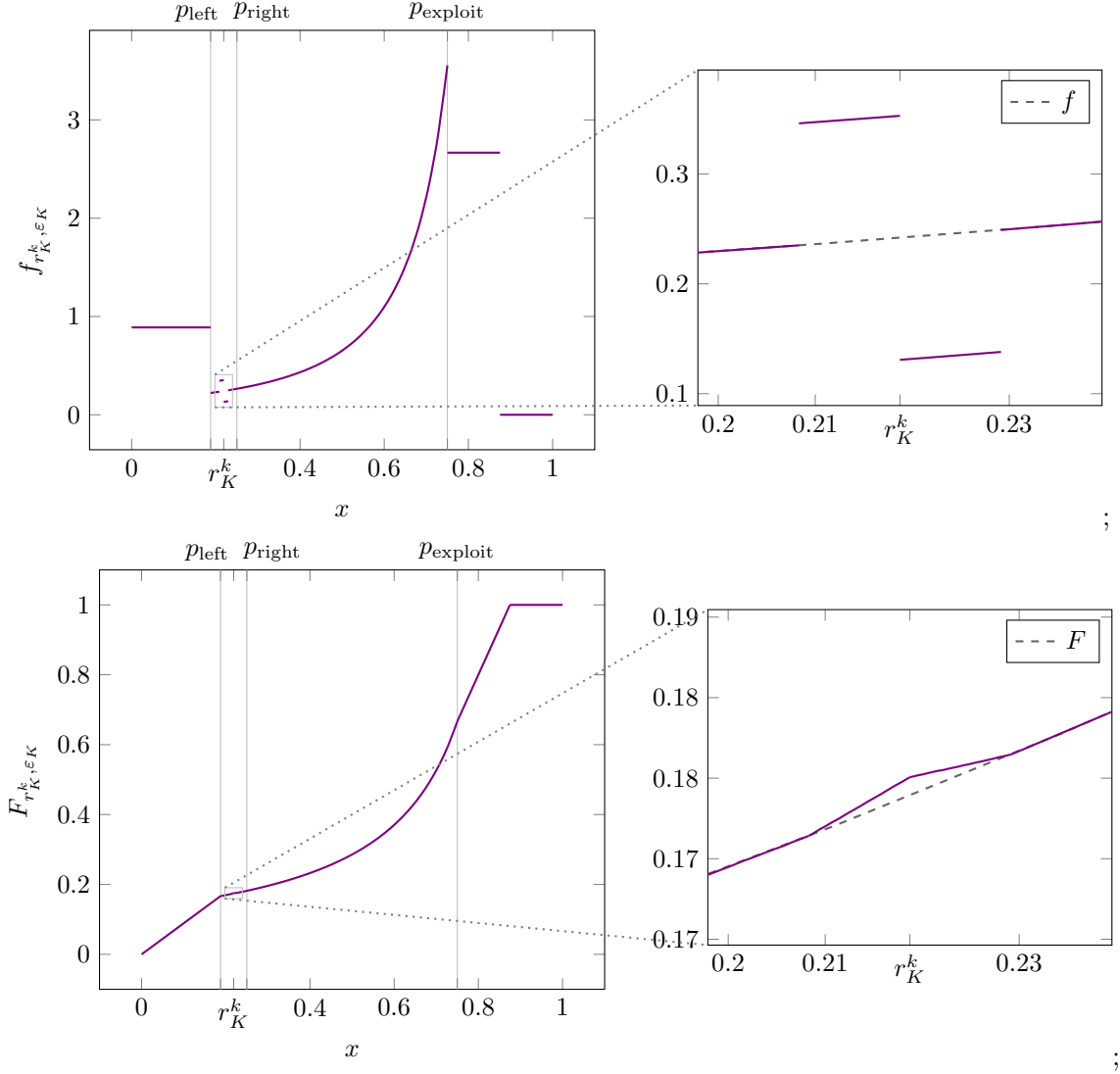

    \centering
    \includegraphics[page=3]{plots.pdf}
    \includegraphics[page=4]{plots.pdf}
    \caption{Above, the perturbed density $f_{r^k_K,\e_K}(x)$ and below, the perturbed \textit{cumulative} density $F_{r^k_K,\e_K}(x)$ with $K = \TK$, $k = \Tk$, the highlighted portions show the effects of the perturbation $\Lambda_{r^k_K, \e_K}$ on the density function $f(x)$ and on the cumulative distribution function $F(x)$ respectively. The dotted lines represent the base functions.}
    \label{fig:pertburbed-density}
\end{figure}
so that the corresponding cumulative distribution function $F$ satisfies, for each $x \in [0,1]$,
\begin{equation}\label{eq:lowerbound-cdf}
    F(x)
=
    \frac{8}{9} x \cdot 
    \I_{ \lsb{ 0, \frac{3}{16} } } (x)
    + \frac{1}{8} \frac{1}{\frac{15}{16} - x} \cdot
    \I_{ \left( \frac{3}{16},\frac{3}{4} \right] } (x) +
    \frac{8}{3} \lrb{ x - \frac{1}{2} } \cdot
    \I_{ \left( \frac{3}{4}, \frac{7}{8} \right] } (x) +
    \I_{ \left( \frac{7}{8}, 1 \right] } (x) \; .
\end{equation}
We define a family of perturbations parameterized by the set $\Xi \coloneqq \{ (r,\e) \in [\frac{3}{16},\frac{11}{16}] \times [0,1] \mid  \frac{3}{16} \le r-\frac{\e}{2} \le r+\frac{\e}{2} \le \frac{11}{16}\}$;
for each $(r,\e) \in \Xi$, define
$
    g_{r,\e}
\coloneqq
    \frac{1}{9}\cdot \I_{[r-\frac{\e}{2},r]} - \frac{1}{9} \cdot \I_{(r,r+\frac{\e}{2}]}
$
and $f_{r,\e} \coloneq f + g_{r,\e}$ (see \Cref{fig:pertburbed-density}).
Notice that for each $(r,\e) \in  \Xi $ the function $f_{r, \e}$ is still a density function whose corresponding cumulative distribution function $F_{r,\e}$ satisfies, for each $x \in [0,1]$,
\[
    F_{r,\e}(x)
=
    F(x)
    +
    \frac{\e}{18}\Lambda_{r,\e}(x)\;,
\]
where $\Lambda_{r,\e}$ is the tent function of height $1$ and width $\e$ centered in $r$, i.e., the function defined, for each $x \in \bbR$, by
\[
    \Lambda_{r,\e}(x)
\coloneqq
    \lrb{ 1 - \frac{2}{\e}(r-x) } \cdot \I_{\lsb{r-\frac{\e}{2},r}}(x)
    +
    \lrb{ 1 - \frac{2}{\e} (x-r) } \cdot \I_{\left(r,r+\frac{\e}{2}\right]}(x) \;.
\]
Note that $F_{r,\e}$ is $4$-Lipschitz;
indeed, for each $(\nb, \ns) \in \cU$,
\begin{multline*}
    \labs{ F_{r,\e}(\ns) - F_{r,\e}(\nb) }
    = \int^\ns_\nb f_{r, \e}(x) \, \dif x
    = \max_{c \in [\nb,\ns]}f_{r, \e}(c) (\ns - \nb)
\\
    \le \left( \max_{c' \in [0, 1]} f(c') + \frac{1}{9} \right) (\ns - \nb)
    = 4 (\ns - \nb) \, ,
\end{multline*}
where $f$ is maximized in $\nicefrac{3}{4}$.
Consider an independent family $\{\nP_t,\nZ_t,\nZ_{r,\e,t}\}_{t \in \N, (r,\e)\in \Xi}$ such that for each $t \in \N$ the distribution $\mu$ of $\nP_t$ is a uniform on $\lsb{\frac{7}{8},1}$ (therefore, $M_t$ admits an $8$-Lipschitz cumulative distribution function), for each $t \in \N$ the distribution $\nu$ of $\nZ_t$ has $f$ as density, while for each $(r,\e) \in \Xi$ and each $t \in \N$ the distribution $\nu_{r,\e}$ of $\nZ_{r,\e,t}$ has $f_{r,\e}$ as density.
Notice that for each $k \in [K]$ we have that $(r_K^k,\e_K) \in \Xi$.
Now, partition $\cU$ in the following regions (see \Cref{fig:density-regions-lower-bound} for a not-to-scale illustration).
\begin{align*}
    R^{\textbf{left}}_1
&\coloneqq
    \lcb{ (\nb, \ns) \in \cU \mid (\nb \le p_{\textbf{left}}) \land \lrb{p_{\textbf{right}} - \e_K \le \ns \le p_{\textbf{right}}}}
\\
    R^{\textbf{left}}_{2}
&\coloneqq
    \lcb{ (\nb, \ns) \in \cU \mid (\nb \le p_{\textbf{left}}) \land \lrb{p_{\textbf{right}} - 2 \e_K \le \ns < p_{\textbf{right}} -\e_K}}
\\
&\vdots
\\
R^{\textbf{left}}_{K-1}
&\coloneqq
    \lcb{ (\nb, \ns) \in \cU \mid (\nb \le p_{\textbf{left}}) \land \lrb{p_{\textbf{left}} + \e_K \le \ns < p_{\textbf{left}} + 2\e_K}}
\\
    R^{\textbf{left}}_K
&\coloneqq
    \lcb{ (\nb, \ns) \in \cU \mid (\nb \le p_{\textbf{left}}) \land \lrb{p_{\textbf{left}} \le \ns < p_{\textbf{left}} + \e_K}}
\end{align*}
and
\begin{align*}
    R^{\textbf{top}}_1
&\coloneqq
    \lcb{ (\nb, \ns) \in \cU \mid (p_{\textbf{left}} \le \nb \le p_{\textbf{left}} + \e_K) \land \lrb{p_{\textbf{right}} \le \ns \le p_{\textbf{exploit}}}}
\\
    R^{\textbf{top}}_2
&\coloneqq
    \lcb{ (\nb, \ns) \in \cU \mid (p_{\textbf{left}} + \e_K < \nb \le p_{\textbf{left}} + 2 \e_K) \land \lrb{p_{\textbf{right}} \le \ns \le p_{\textbf{exploit}}}}
\\
&\vdots
\\
    R^{\textbf{top}}_{K-1}
&\coloneqq
    \lcb{ (\nb, \ns) \in \cU \mid (p_{\textbf{right}} - 2 \e_K < \nb \le p_{\textbf{right}} - \e_K) \land \lrb{p_{\textbf{right}} \le \ns \le p_{\textbf{exploit}}}}
\\
    R^{\textbf{top}}_K
&\coloneqq
    \lcb{ (\nb, \ns) \in \cU \mid (p_{\textbf{right}} - \e_K < \nb \le p_{\textbf{right}} ) \land \lrb{p_{\textbf{right}} \le \ns \le p_{\textbf{exploit}}}}
\end{align*}
and, for each $i,j\in [K]$ such that $i+j \le K$
\begin{align*}
    R^{\textbf{square}}_{i,j}
    \coloneqq
    \{
        (\nb, \ns) \in \cU
        & \mid
        (p_{\textbf{left}} + (i - 1) \e_K < \nb \le p_{\textbf{left}} + i \e_K) \land \\
        & \quad 
        (p_{\textbf{right}} - j \e_K < \ns \le p_{\textbf{right}} - (j - 1) \e_K)
    \}
\end{align*}
and, for each $k \in [K]$
\begin{align*}
    R^{\textbf{triangle}}_{k}
    \coloneqq
    \{
        (\nb, \ns) \in \cU
        & \mid
        (p_{\textbf{left}} + (k - 1) \e_K < \nb \le p_{\textbf{left}} + k \e_K) \land \\
        & \quad 
        (p_{\textbf{left}} + (k - 1) \e_K < \ns \le p_{\textbf{left}} - k \e_K)
    \}
\end{align*}
and
\begin{align*}
    R^{\textbf{exploit}}_{1}
    &\coloneqq
    \{
        (\nb, \ns) \in \cU
        \mid
        \lcb{ (\nb, \ns) \in \cU \mid (p_{\textbf{left}} \le \nb \le p_{\textbf{left}} + \e_K) \land \lrb{p_{\textbf{exploit}} < \ns}}
    \} \\
    R^{\textbf{exploit}}_{2}
    &\coloneqq
    \{
        (\nb, \ns) \in \cU
        \mid
        \lcb{ (\nb, \ns) \in \cU \mid (p_{\textbf{left}} + \e_K < \nb \le p_{\textbf{left}} + 2\e_K) \land \lrb{p_{\textbf{exploit}} < \ns}}
    \} \\
    \vdots \\
    R^{\textbf{exploit}}_{K - 1}
    &\coloneqq
    \{
        (\nb, \ns) \in \cU
        \mid
        \lcb{ (\nb, \ns) \in \cU \mid (p_{\textbf{right}} - 2 \e_K < \nb \le p_{\textbf{right}} - \e_K) \land \lrb{p_{\textbf{exploit}} < \ns}}
    \} \\
        R^{\textbf{exploit}}_{K}
    &\coloneqq
    \{
        (\nb, \ns) \in \cU
        \mid
        \lcb{ (\nb, \ns) \in \cU \mid (p_{\textbf{right}} - \e_K < \nb \le p_{\textbf{right}}) \land \lrb{p_{\textbf{exploit}} < \ns}}
    \}
\end{align*}
Let $R^{\textbf{white}}$ be the part of $\cU$ not covered by the union of the previous regions and define
\[
    R^{\textbf{exploit}}
    \coloneqq R^{\textbf{exploit}}_1 \cup \cdots \cup R^{\textbf{exploit}}_K \; ,
    \qquad
    R^{\textbf{explore}}
    \coloneqq \cU \backslash (R^{\textbf{white}} \cup R^{\textbf{exploit}}) \; .
\]

Notice that, for each $k \in [K]$, each $t \in \N$, and each $(\nb, \ns) \in \cU$, given that $\Pb [ \nZ_{r_K^k,\e_K,t} \le \nP_t ] = 1$, it holds that
\begin{align*}
&
    \E\bsb{\util(\nb,\ns,\nP_t,\nZ_{r_K^k,\e_K,t})}
\le
    \E\bsb{\util(\nb,1,\nP_t,\nZ_{r_K^k,\e_K,t})}
\\ &
=
    \E\bsb{(\nP_t - \nb) \one{\nb > \nZ_{r_K^k,\e_K,t}}}
=
    \lrb{\frac{15}{16} - \nb}F_{r_K^k,\e_K}(\nb)
\\ &
=
    \frac{8}{9}\nb\lrb{\frac{15}{16}-b} \cdot \I_{\left[0, \frac{3}{16}\right]}(\nb) 
+   
    \frac{1}{8}\cdot \I_{\left(\frac{3}{16},\frac{3}{4}\right]}(\nb)
+
    \frac{\e_K}{18} \lrb{\frac{15}{16} - \nb} \cdot \Lambda_{r_K^k,\e_K}(\nb)
\\ & \qquad
+
    \frac{8}{3} \lrb{\nb - \frac{1}{2}} \lrb{\frac{15}{16} - \nb} \cdot \I_{(\frac{3}{4},\frac{7}{8}]}(\nb)
+
    \lrb{\frac{15}{16} - \nb} \cdot \I_{\left(\frac{7}{8}, 1\right]}(\nb)\;,
\end{align*}
from which it follows that:
\begin{itemize}
    \itemm The pair of prices with highest expected utility is $(r_K^k, 1)$ and
    $$
        \max_{(\nb, \ns) \in \cU} \E\bsb{\util(\nb,\ns,\nP_t, \nZ_{r_K^k,\e_K,t})} = \E\lsb{\util\lrb{r_K^k,1,\nP_t,\nZ_{r_K^k,\e_K,t}}} \ge \frac{1}{8} + c_{\textbf{spike}} \cdot \e_K \;.
    $$
    \itemm The maximum expected utility when $\nb$ is not in the perturbation is
    $$
        \max_{
            (\nb,\ns) \in \cU,
            \nb \notin [r_K^k-\frac{\e_K}{2}, r_K^k+\frac{\e_K}{2}]
        }
        \E \lsb{ \util(\nb, \ns, \nP_t, \nZ_{r_K^k, \e_K,t}) }
        = \frac{1}{8}
        \; .
    $$
    \itemm The expected utility in the exploration region is upper bounded by
    $$
        \max_{(\nb, \ns) \in R^{\textbf{explore}}}
        \E\lsb{\util(\nb,\ns,\nP_t,\nZ_{r_K^k,\e_K,t})}
        \le \E\lsb{\util(r_K^k, p_\textbf{exploit}, \nP_t, \nZ_{r_K^k,\e_K,t})}
        \le \frac{1}{8} - c_{\textbf{plat}}
        \;.
    $$
\end{itemize}
Also note that:
\begin{itemize}
    \itemm For each $x \in [0,1]$, if $x \notin [r_K^k-\nicefrac{\e_K}{2},r_K^k+\nicefrac{\e_K}{2}]$, then $\Pb[\nZ_{r_K^k,\e_K,t}<x] = F(x)$.
\end{itemize}
Crucially, given that the setting is stochastic, without loss of generality, we can consider only deterministic algorithms.
Fix a deterministic algorithm $(\cA_t)_{t \in \N} \coloneqq (\ncB_t,\ncS_t)_{t \in \N}$, i.e., a sequence of functions such that for each $t \in \N$ we have that $\cA_t = (\ncB_{t},\ncS_{t}) \colon ([0,1] \times \{0,1\} \times \{0,1\})^{t-1} \to \cU$, with the understanding that $\cA_1 = (\ncB_1,\ncS_1) \in \cU$.
For each $k \in [K]$, let $(\nB_t^k,\nS_t^k)_{t \in [T]}$ be the prices posted by the algorithm when the underlying instance is $(\nP_t,\nZ_{\e_K,r_K^k,t})_{t \in [T]}$, i.e., let $(\nB_1^k,\nS_1^k) \coloneqq \cA_1$ and for each $t \in [T]$ with $t \ge 2$ let $(\nB_t^k,\nS_t^k) \coloneqq \cA_t(\nP_1,\one{\nZ_{\e_K,r_K^k,1}\le\nB_1^k},\one{\nZ_{\e_K,r_K^k,1}>\nS_1^k},\dots,\nP_{t-1},\one{\nZ_{\e_K,r_K^k,{t-1}}\le B_{t-1}^k},\one{\nZ_{\e_K,r_K^k,{t-1}}> \nS_{t-1}^k})$.
Analogously, let $(\nB_t,\nS_t)_{t \in [T]}$ be the prices posted by the algorithm when the underlying instance is $(\nP_t,\nZ_t)_{t \in [T]}$.
For each $k \in [K]$ and each $t \in [T]$, let also $\nW_t^k \coloneqq (\one{\nZ_{\e_K,r_K^k,t} \le \nB_t^k},\one{\nZ_{\e_K,r_K^k,t}> \nS_t^k})$ and $\nW_t \coloneqq (\one{\nZ_t \le \nB_t},\one{\nZ_t> \nS_t})$.

\begin{figure}[t]
    \centering
    \includegraphics[page=5]{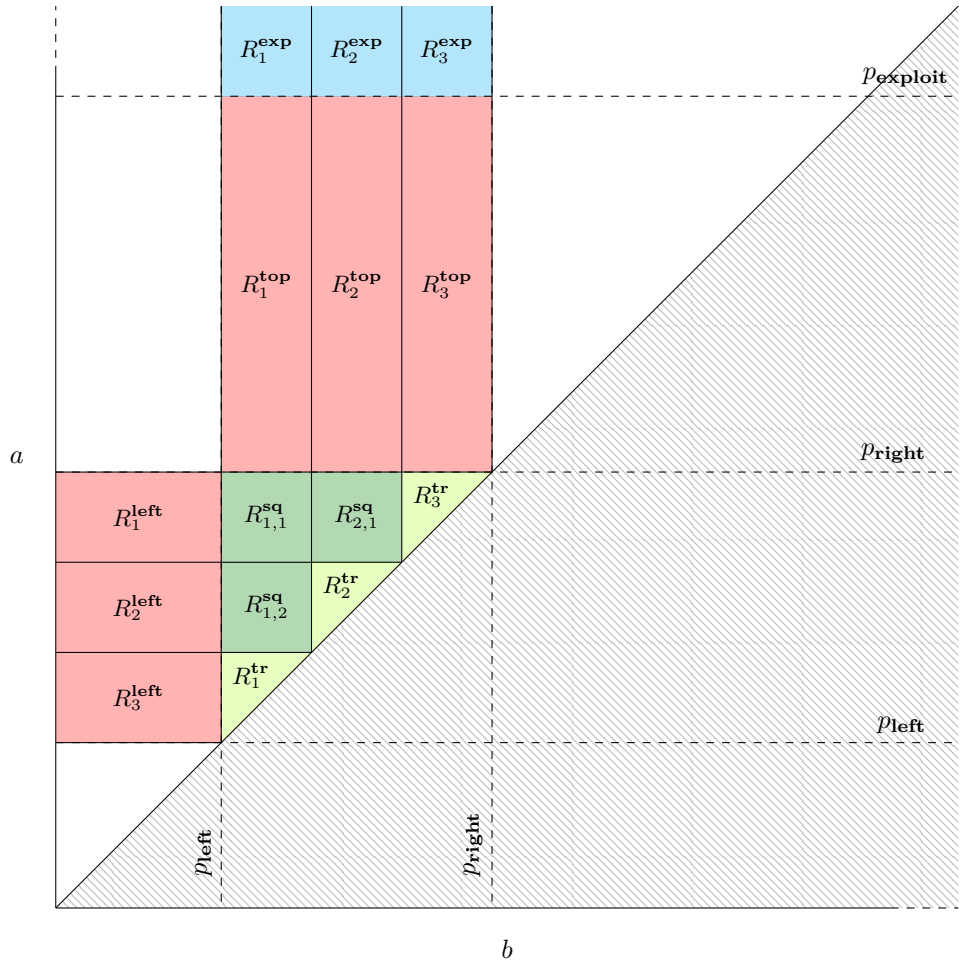}
    \caption{(Not-to-scale) Division of the upper triangle $\cU$ in regions with $K = \RK$, to illustrate the relative positions of the exploitation regions (in blue) and the exploration regions (in red, green and lime).}
    \label{fig:density-regions-lower-bound}
\end{figure}

Define the following auxiliary random variables.
For each $i,k \in [K]$, define
\[
    N^{\textbf{left}}_{i,k}(t) \coloneqq \sum_{s=1}^t \I\lcb{ \cA_s(\nP_1,\nW_1^k,\dots,\nP_{s-1},\nW_{s-1}^k) \in R^{\textbf{left}}_i }
\]
and for each $i \in [K]$, define
\[
    N^{\textbf{left}}_{i}(t) \coloneqq \sum_{s=1}^t \I\lcb{ \cA_s(\nP_1,\nW_1,\dots,\nP_{s-1},\nW_{s-1}) \in R^{\textbf{left}}_i }\;.
\]
Analogously, for each $i,k \in [K]$, define
\[
    N^{\textbf{top}}_{i,k}(t) \coloneqq \sum_{s=1}^t \I\lcb{ \cA_s(\nP_1,\nW_1^k,\dots,\nP_{s-1},\nW_{s-1}^k) \in R^{\textbf{top}}_i }
\]
and for each $i \in [K]$, define
\[
    N^{\textbf{top}}_{i}(t) \coloneqq \sum_{s=1}^t \I\lcb{ \cA_s(\nP_1,\nW_1,\dots,\nP_{s-1},\nW_{s-1}) \in R^{\textbf{top}}_i }\;.
\]
Also, for each $i,j,k \in [K]$ with $i+j\le K$, define
\[
    N^{\textbf{square}}_{i,j,k}(t) \coloneqq \sum_{s=1}^t \I\lcb{ \cA_s(\nP_1,\nW_1^k,\dots,\nP_{s-1},\nW_{s-1}^k) \in  R^{\textbf{square}}_{i,j} }
\]
and for each $i,j \in [K]$ with $i+j\le K$, define
\[
    N^{\textbf{square}}_{i,j}(t) \coloneqq \sum_{s=1}^t \I\lcb{ \cA_s(\nP_1,\nW_1,\dots,\nP_{s-1},\nW_{s-1}) \in R^{\textbf{square}}_{i,j}  }\;.
\]
Then, for each $i,k \in [K]$, define
\[
    N^{\textbf{triangle}}_{i,k}(t) \coloneqq \sum_{s=1}^t \I\lcb{ \cA_s(\nP_1,\nW_1^k,\dots,\nP_{s-1},\nW_{s-1}^k) \in R^{\textbf{triangle}}_{i}  }
\]
and for each $i \in [K]$, define
\[
    N^{\textbf{triangle}}_{i}(t) \coloneqq \sum_{s=1}^t \I\lcb{ \cA_s(\nP_1,\nW_1,\dots,\nP_{s-1},\nW_{s-1}) \in R^{\textbf{triangle}}_{i}  } \;.
\]
Moreover, for each $i,k \in [K]$, define
\[
    N^{\textbf{exploit}}_{i,k}(t) \coloneqq \sum_{s=1}^t \I\lcb{ \cA_s(\nP_1,\nW_1^k,\dots,\nP_{s-1},\nW_{s-1}^k) \in R^{\textbf{exploit}}_i }
\]
and for each $i \in [K]$, define
\[
    N^{\textbf{exploit}}_{i}(t) \coloneqq \sum_{s=1}^t \I\lcb{ \cA_s(\nP_1,\nW_1,\dots,\nP_{s-1},\nW_{s-1}) \in R^{\textbf{exploit}}_i }\;.
\]
Finally, for each $k \in [K]$, define
\[
    N^{\textbf{white}}_{k}(t) \coloneqq \sum_{s=1}^t \I\lcb{ \cA_s(\nP_1,\nW_1^k,\dots,\nP_{s-1},\nW_{s-1}^k) \in R^{\textbf{white}} }
\]
and for each $i \in [K]$, define
\[
    N^{\textbf{white}}(t) \coloneqq \sum_{s=1}^t \I\lcb{ \cA_s(\nP_1,\nW_1,\dots,\nP_{s-1},\nW_{s-1}) \in R^{\textbf{white}} }\;.
\]
Let $R^k_T$ be the regret of the algorithm $\cA$ up to the time horizon $T$ when the underlying instance is $(\nZ_t^k,\nP_t)_{t \in \N}$ and let $R_T$ be the regret of the algorithm $\cA$ up to the time horizon $T$ when the underlying instance is $(\nZ_t,\nP_t)_{t \in \N}$.
Start by considering
\begin{align*}
    \frac{1}{K} \sum_{k \in [K]} R^k_T
&\ge
    \frac{1}{K} \sum_{k \in [K]} \lrb{ c_{\textbf{spike}} \cdot \e_K \cdot \Brb{ T - \E \bsb{ N^{\textbf{exploit}}_{k,k}(T) } } }
\eqqcolon
    (\square)\;.
\end{align*}
We recall that if $(\cX,\cF)$ is a measurable space and $X$ is a $\cX$-valued random variable, we denote by $\Pb_X$ the push-forward probability measure of $\Pb$ induced by $X$ on $\cX$, i.e., $\Pb_X [E] \coloneqq \Pb [X \in E] $, for any $E \in \cF$.
Also, with $\lno{\cdot}_{\text{TV}}$ we denote the total variation norm for measures, and for any two probability measures $\mathbb{Q},\mathbb{Q}'$ defined on the same sample space, we denote their Kullback-Leibler divergence by $\cD_{\text{KL}}(\mathbb{Q}, \mathbb{Q}')$. 
Now, for each $k \in [K]$, using Pinsker's inequality \cite[Lemma 2.5]{Tsybakov2008} that upper bounds the difference in the total variation $\lno{\cdot}_{\text{TV}}$ of two probability measures using their Kullback-Leibler divergence $\cD_{\text{KL}}$, we have that 
\begin{align*}
    \labs{\E[N^{\textbf{exploit}}_{k,k}(T)] - \E[N_k^{\textbf{exploit}}(T)]}
&\le
    \sum_{t=1}^T\labs{\Pb[(\nB_t^k,\nS_t^k) \in R^{\textbf{exploit}}_k] - \Pb[(\nB_t,\nS_t) \in R^{\textbf{exploit}}_k]}
\\
&\le
    \sum_{t=1}^{T-1} \lno{ \Pb_{(\nP_1,\nW_1,\dots,\nP_t,\nW_t)} - \Pb_{(\nP_1,\nW_1^k,\dots,\nP_t,\nW_t^k)} }_{\text{TV}}
\\
&\le
    \sum_{t=1}^{T-1} \sqrt{\frac{1}{2} \cD_{\text{KL}}\lrb{\Pb_{(\nP_1,\nW_1,\dots,\nP_t,\nW_t)},\Pb_{(\nP_1,\nW_1^k,\dots,\nP_t,\nW_t^k)}}}
\eqqcolon
    (\star_k)\;.
\end{align*}
Now, for each $t \in [T-1]$, each $\np_1,\dots,\np_t \in [0,1]$, and each $\nw_1,\dots,\nw_t \in \{0,1\}^{2}$, defining $\ns \coloneqq \ncS_{t+1}(\np_1,\nw_1,\dots,\np_t,\nw_t)$ and $\nb \coloneqq \ncB_{t+1}(\np_1,\nw_1,\dots,\np_t,\nw_t)$, we have
\begin{align*}
&
    \cD_{\text{KL}}
    \big(
        \Pb_{(\nP_{t+1},W_{t+1}) \mid (\nP_1,\nW_1,\dots,\nP_{t},W_{t}) = (\np_1,\nw_1,\dots,\np_{t},w_{t}))},
\\ & \qquad\qquad
        \Pb_{(\nP_{t+1},W_{t+1}^k) \mid (\nP_1,\nW_1^k,\dots,\nP_{t},W_{t}^k) = (\np_1,\nw_1,\dots,\np_{t},w_{t}))}
    \big)
\\ &=
    \cD_{\text{KL}}
    \lrb{
        \Pb_{(\one{\nZ_{t+1}\le b}, \one{\ns < \nZ_{t+1}}) },
        \Pb_{(\one{\nZ_{t+1}^k \le b}, \one{\ns < \nZ_{t+1}^k})}
    }
\\ & \qquad
    \cdot
    \bigg( 
        \one{ (\ns,\nb) \in R^{\textbf{left}}_k  } 
        + \sum_{j=1}^{k-1} \one{ (\ns,\nb) \in R^{\textbf{square}}_{k,j} } 
        + \one{ (\ns,\nb) \in R^{\textbf{triangle}}_{k} } 
\\ & \qquad\qquad
        + \sum_{i=1}^{k-1} \one{ (\ns,\nb) \in R^{\textbf{square}}_{i,k} }
        + \one{ (\ns,\nb) \in R^{\textbf{top}}_k  }
        + \one{ (\ns,\nb) \in R^{\textbf{exploit}}_k  } 
    \bigg)
\end{align*}
A direct verification (see Lemma \ref{lemma:kl-upperbounds}) shows that, letting $c_2 \ceq \nicefrac{65}{9}$, for each $k \in [K]$, and $i,j \in [K-1]$, if $(\ns,\nb) \in R^{\textbf{left}}_k \cup R^{\textbf{square}}_{k,j} \cup R^{\textbf{triangle}}_{k} \cup R^{\textbf{square}}_{i,k} \cup R^{\textbf{top}}_{k}$, then
\[
    \cD_{\text{KL}}
    \lrb
    {
        \Pb_{(\one{\nZ_{t+1} \le \nb}, \one{ \ns < \nZ_{t+1}}) },\Pb_{(\one{\nZ_{t+1}^k \le \nb}, \one{ \ns < \nZ_{t+1}^k})}
    }
\le
    c_2 \cdot \e_K \;,
\]
likewise (see, again, Lemma \ref{lemma:kl-upperbounds}), letting $\nc_1\ceq \nicefrac{2}{81}$ such that, for each $k \in [K]$, if $(\ns,\nb) \in R^{\textbf{exploit}}_k$, then
\[
     \cD_{\text{KL}}
    \lrb
    {
        \Pb_{(\one{\nZ_{t+1} \le \nb}, \one{\ns < \nZ_{t+1}}) },\Pb_{(\one{\nZ_{t+1}^k \le \nb}, \one{\ns < Z_{t+1}^k})}
    }
\le
    \nc_1 \cdot \e_K^2\;.
\]
For notational convenience, set
\begin{align*}
    \Pb^k_{t+1,\np_{1:t},w_{1:t}}
&\coloneqq
    \Pb_{(\nP_{t+1},W_{t+1}^k) \mid (\nP_1,\nW_1^k,\dots,\nP_{t},W_{t}^k) = (\np_1,\nw_1,\dots,\np_{t},w_{t}))}
\\
    \Pb_{t+1,\np_{1:t},w_{1:t}}
&\coloneqq
    \Pb_{(\nP_{t+1},W_{t+1}) \mid (\nP_1,\nW_1,\dots,\nP_{t},W_{t}) = (\np_1,\nw_1,\dots,\np_{t},w_{t}))}
\end{align*}
and notice that, for each $t \in [T-1]$ such that $t \ge 2$, using the chain rule for the Kullback-Leibler divergence \cite[Theorem 2.5.3]{ThomasM.Cover2006}, we have that
\begin{align*}
&
    \cD_{\text{KL}}\lrb{\Pb_{(\nP_1,\nW_1,\dots,\nP_t,\nW_t)},\Pb_{(\nP_1,\nW_1^k,\dots,\nP_t,\nW_t^k)}} \\
&=
    \cD_{\text{KL}}\lrb{\Pb_{(\nP_1,\nW_1,\dots,\nP_{t-1},W_{t-1})},\Pb_{(\nP_1,\nW_1^k,\dots,\nP_{t-1},W_{t-1}^k)}}
\\
&\qquad+
    \int_{ \lrb{[0,1] \times \{0,1\}^2}^{t-1}} \cD_{\text{KL}}
    \big(
        \Pb_{t,\np_{1:t-1},w_{1:t-1}},
        \Pb^k_{t,\np_{1:t-1},w_{1:t-1}}
    \big)
\\ & \qquad\qquad\qquad\qquad
    \dif \Pb_{(\nP_1,\nW_1,\dots,\nP_{t-1},W_{t-1})}(\np_1,\nw_1,\dots,\np_{t-1},w_{t-1})
\\
&\le
    \cD_{\text{KL}}\lrb{\Pb_{(\nP_1,\nW_1^k,\dots,\nP_{t-1},W_{t-1}^k)},\Pb_{(\nP_1,\nW_1,\dots,\nP_{t-1},W_{t-1})}}
\\
&\qquad+
    \nc_1 \cdot \e_K^2 \cdot \Pb\lsb{(\nS_t,\nB_t) \in R^{\textbf{exploit}}_k }
\\
&\qquad+
    c_2 \cdot \e_K \cdot \bigg(\Pb\lsb{(\nB_t,\nS_t) \in R^{\textbf{left}}_k}
    + \sum_{j=1}^{k-1} \Pb\lsb{(\nB_t,\nS_t) \in R^{\textbf{square}}_{k,j}}
\\
&\qquad\qquad +
    \Pb\lsb{(\nB_t,\nS_t) \in R^{\textbf{triangle}}_{k}}
    + \sum_{i=1}^{k-1} \Pb\lsb{(\nB_t,\nS_t) \in R^{\textbf{square}}_{i,k}}
    + \Pb\lsb{(\nB_t,\nS_t) \in R^{\textbf{top}}_k}\bigg)
\end{align*}
and iterating (and repeating essentially the same calculations in the last step where there is no conditioning), we get
\begin{align*}
&
\cD_{\text{KL}}\lrb{\Pb_{(\nP_1,\nW_1,\dots,\nP_t,\nW_t)},\Pb_{(\nP_1,\nW_1^k,\dots,\nP_t,\nW_t^k)}}
\\&\le
    \nc_1 \cdot \e_K^2 \cdot \E\lsb{ N^{\textbf{exploit}}_k(t-1)}
\\
&\qquad+
    c_2 \cdot \e_K \cdot \bigg(\E\lsb{N^{\textbf{left}}_k(t-1)}
    + \sum_{j=1}^{k-1} \E\lsb{N^{\textbf{square}}_{k,j}(t-1)}
\\
&\qquad\qquad+
    \E\lsb{N^{\textbf{triangle}}_{k}(t-1)}
    + \sum_{i=1}^{k-1} \E\lsb{N^{\textbf{square}}_{i,k}(t-1)}
    + \E\lsb{N^{\textbf{top}}_k(t-1)}\bigg)
\\
&\le
    \nc_1 \cdot \e_K^2 \cdot \E\lsb{ N^{\textbf{exploit}}_k(T)}
\\
&\qquad+
    c_2 \cdot \e_K \cdot \bigg(\E\lsb{N^{\textbf{left}}_k(T)}
    + \sum_{j=1}^{k-1} \E\lsb{N^{\textbf{square}}_{k,j}(T)}
\\
&\qquad\qquad+
    \E\lsb{N^{\textbf{triangle}}_{k}(T)}
    + \sum_{i=1}^{k-1} \E\lsb{N^{\textbf{square}}_{i,k}(T)}
    + \E\lsb{N^{\textbf{top}}_k(T)}\bigg)
\end{align*}
It follows that, for each $k \in [K]$,
\begin{align*}
    (\star_k)
&\le
    T \cdot \sqrt{ \frac{1}{2} } \cdot \bigg( \nc_1 \cdot \e_K^2 \cdot \E\lsb{ N^{\textbf{exploit}}_k(T)}
+
    c_2 \cdot \e_K \cdot \bigg(\E\lsb{N^{\textbf{left}}_k(T)}
    + \sum_{j=1}^{k-1} \E\lsb{N^{\textbf{square}}_{k,j}(T)}
\\
& \qquad
+
    \E\lsb{N^{\textbf{triangle}}_{k}(T)}
    + \sum_{i=1}^{k-1} \E\lsb{N^{\textbf{square}}_{i,k}(T)}
    + \E\lsb{N^{\textbf{top}}_k(T)}\bigg) \bigg)^{\nicefrac{1}{2}}
\\
&\le
      T \cdot \sqrt{ \frac{1}{2} } \cdot \e_K \cdot \sqrt{ \nc_1 \E\lsb{ N^{\textbf{exploit}}_k(T)}}
\\
&\qquad +
    T \cdot \sqrt{ \frac{1}{2} } \cdot \sqrt{\e_K} \cdot \bigg( c_2 \cdot \bigg(\E\lsb{N^{\textbf{left}}_k(T)}
    + \sum_{j=1}^{k-1} \E\lsb{N^{\textbf{square}}_{k,j}(T)}
\\
&\qquad\qquad
+
    \E\lsb{N^{\textbf{triangle}}_{k}(T)}
    + \sum_{i=1}^{k-1} \E\lsb{N^{\textbf{square}}_{i,k}(T)}
    + \E\lsb{N^{\textbf{top}}_k(T)}\bigg) \bigg)^{\nicefrac{1}{2}}
\end{align*}
For notational convenience, define
\[
    N^{\textbf{explore}}
\coloneqq
    \sum_{i \in [K]} N^{\textbf{left}}_i(T) + \sum_{j \in [K]} N^{\textbf{top}}_j(T) + \sum_{k \in [K]} N^{\textbf{triangle}}_k(T) +
    \sum_{i,j\in [K], \, i+j\le K} N^{\textbf{square}}_{i,j}(T)
\]
and
\[
    N^{\textbf{exploit}}
\coloneqq
    \sum_{k \in [K]} N^{\textbf{exploit}}_k(T)
\]
Notice that, after bringing the summation under the square root leveraging Jensen's inequality, we sum each of the terms $N^{\textbf{left}}_i(T), N^{\textbf{top}}_j(T), N^{\textbf{triangle}}_k(T), N^{\textbf{square}}_{i,j}(T)$ at most two times, we get:
\begin{align*}
    \frac{1}{K}\sum_{k \in [K]} (\star_k)
&\le
    T\cdot \e_K \cdot \sqrt{\frac{\nc_1}{2}} \cdot \sqrt{\frac{\E\bsb{N^{\textbf{exploit}}}}{K}}
+
    T\cdot \sqrt{\e_K} \cdot \sqrt{ c_2} \cdot \sqrt{ \frac{\E\bsb{N^{\textbf{explore}}}}{K}}
\\
&\le\;
    T\cdot \e_K \cdot \sqrt{\frac{\nc_1}{2}} \cdot \sqrt{\frac{T}{K}}
+
    T\cdot \sqrt{\e_K} \cdot \sqrt{ c_2} \cdot \sqrt{ \frac{\E\bsb{N^{\textbf{explore}}}}{K}} \;.
\end{align*}
It follows that
\begin{align*}
     (\square)
&=
    c_{\textbf{spike}} \cdot \e_K \cdot \lrb{ T - \frac{1}{K}\sum_{k \in [K]} \E[N^{\textbf{exploit}}_{k,k}]}
\\
&\ge
    c_{\textbf{spike}} \cdot \e_K \cdot \lrb{ T - \frac{1}{K}\sum_{k \in [K]} \E[N^{\textbf{exploit}}_{k}] - \frac{1}{K}\sum_{k\in[K]}(\star_k)}
\\
&=
    c_{\textbf{spike}} \cdot \e_K  \cdot \lrb{ T - \frac{\E[N^{\textbf{exploit}}]}{K} - \frac{1}{K}\sum_{k\in[K]}(\star_k)}
\\
&\ge
    c_{\textbf{spike}} \cdot \e_K \cdot T \cdot \lrb{ 1 - \frac{1}{K} -  \e_K \cdot \sqrt{\frac{\nc_1}{2}} \cdot \sqrt{\frac{T}{K}} -  \sqrt{\e_K} \cdot \sqrt{ c_2} \cdot \sqrt{ \frac{\E\bsb{N^{\textbf{explore}}}}{K}}}
\eqqcolon
    (\tilde{\square})\;.
\end{align*}
Now, if $\E\bsb{N^{\textbf{explore}}} \ge T^{\nicefrac{2}{3}}$ then
\[
    R_T \ge c_{\textbf{plat}} \cdot \E\bsb{N^{\textbf{explore}}} \ge c_{\textbf{plat}} \cdot T^{\nicefrac{2}{3}}\;.
\]
Instead, if $\E\bsb{N^{\textbf{explore}}}\le T^{\nicefrac{2}{3}}$, then
\begin{align*}
    (\tilde{\square})
&\ge
    c_{\textbf{spike}} \cdot \e_K \cdot T \cdot \lrb{ 1 - \frac{1}{K} -  \e_K \cdot \sqrt{\frac{\nc_1}{2}} \cdot \sqrt{\frac{T}{K}} - \sqrt{\e_K} \cdot \sqrt{ c_2} \cdot \sqrt{ \frac{T^{\nicefrac{2}{3}}}{K}}}
\\
&=
    c_{\textbf{spike}} \cdot \frac{T}{16 \bigl\lceil T^{\nicefrac{1}{3}} \bigr\rceil } 
    \lrb{ 1 - \frac{1}{\bigl\lceil T^{\nicefrac{1}{3}} \bigr\rceil } -  \frac{1}{16 \bigl\lceil T^{\nicefrac{1}{3}} \bigr\rceil } \cdot \sqrt{\frac{\nc_1}{2}} \cdot \sqrt{\frac{T}{\bigl\lceil T^{\nicefrac{1}{3}} \bigr\rceil }} - \sqrt{\frac{1}{16 \bigl\lceil T^{\nicefrac{1}{3}} \bigr\rceil }} \cdot \sqrt{ c_2} \cdot \sqrt{ \frac{T^{\nicefrac{2}{3}}}{\bigl\lceil T^{\nicefrac{1}{3}} \bigr\rceil }}}
\\
&\ge
    10^{-6}\cdot T^{\nicefrac{2}{3}}\;
\end{align*}
for $T\ge 42$, where in the last step we have plugged in the values of $c_1$ and $c_2$.
Since we proved that
\[
    \frac{1}{K} \sum_{k \in [K]} R^k_T \ge 10^{-6}\cdot T^{\nicefrac{2}{3}} \;,
\]
then, there exists an instance $k\in[K]$ such that the regret of the algorithm over $(\nP_t,\nZ_{\e_K,r_K^k,t})_{t \in [T]}$ is at least $T^{\nicefrac{2}{3}}$, concluding the result.
\end{proof}
}

The previous proof relies on Lemma \ref{lemma:kl-upperbounds}, for which we need the following two lemmas.

\begin{lemma}\label{lemma:lambdalipschitz}
    For any pair $\e, r \in (0, 1]^2$, $\Lambda_{r, \e}$ is $\frac{2}{\e}$-Lipschitz.
\end{lemma}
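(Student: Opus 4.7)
The plan is to observe that $\Lambda_{r,\e}$ is continuous and piecewise linear on $\bbR$, with at most four linear pieces determined by the breakpoints $r-\e/2$, $r$, and $r+\e/2$. I would verify continuity first: at $x = r-\e/2$ the two relevant expressions both evaluate to $0$, at $x = r$ both affine pieces give $1$, and at $x = r+\e/2$ the piece on $(r,r+\e/2]$ returns $0$, matching the zero value imposed outside. Hence $\Lambda_{r,\e}$ is a genuine continuous tent function supported on $[r-\e/2, r+\e/2]$.

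Next I would compute the slope on each piece: on $[r-\e/2, r]$ the function is $1 - \frac{2}{\e}(r-x)$, which has derivative $+\frac{2}{\e}$; on $[r, r+\e/2]$ the function is $1 - \frac{2}{\e}(x-r)$, which has derivative $-\frac{2}{\e}$; and outside $[r-\e/2, r+\e/2]$ the derivative is $0$. Thus $\Lambda_{r,\e}$ is differentiable except possibly at the three breakpoints, and $\babs{\Lambda_{r,\e}'(x)} \le \frac{2}{\e}$ wherever the derivative exists.

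Finally, I would conclude by the standard fact that a continuous function whose derivative exists and is bounded in absolute value by $L$ outside a finite set is globally $L$-Lipschitz. Concretely, for any $x < y$ in $\bbR$, split the interval $[x,y]$ at those breakpoints among $\{r-\e/2, r, r+\e/2\}$ that lie in $[x,y]$, and apply the mean value theorem on each subinterval to obtain $\babs{\Lambda_{r,\e}(y) - \Lambda_{r,\e}(x)} \le \frac{2}{\e}(y-x)$ via a telescoping sum. The only mild subtlety is handling the breakpoints correctly, which is immediate from continuity, so there is no real obstacle in this proof.
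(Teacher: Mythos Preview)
Your proof is correct. It differs from the paper's approach: the paper verifies the Lipschitz inequality by an explicit case analysis on the positions of the two evaluation points relative to the three subintervals $(-\infty, r-\e/2)$, $[r-\e/2,r]$, $(r,r+\e/2]$, and $(r+\e/2,\infty)$, computing $|\Lambda_{r,\e}(\ns)-\Lambda_{r,\e}(\nb)|$ directly in each case. Your route---checking continuity at the three breakpoints, bounding the derivative on each linear piece, and then invoking the standard ``continuous with a.e.\ bounded derivative $\Rightarrow$ Lipschitz'' fact---is more conceptual and avoids the enumeration. It is shorter and arguably cleaner; the paper's case-by-case computation is more self-contained but also more tedious (and in fact a couple of the stated justifications in that enumeration are a bit loose, though the conclusions are right). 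Either argument is perfectly adequate for this elementary lemma.
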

\begin{proof}
    Fix $0 \le \nb \le \ns \le 1$, we consider all possible cases with respect to the interval $\mathcal{R} = \mathcal{R}^- \cup \mathcal{R}^+ = [ r - \nicefrac{\e}{2}, r] \cup (r, r + \nicefrac{\e}{2}]$ to prove that
    \[
        | \Lambda(\ns) - \Lambda(\nb) | \le \frac{2}{\e} |\ns - \nb|
    \]
    \begin{itemize}
        \item If $\nb, \ns \notin \mathcal{R}$, then $\Lambda(\ns) = \Lambda(\nb) = 0$ and the property is trivially true. 
        \item If $\nb, \ns \in \mathcal{R}^-$ or $\nb, \ns \in \mathcal{R}^+$ then
        \[
            |\Lambda(\ns) - \Lambda(\nb)|
            = \left| 1 - \frac{2}{\e}(r - \ns) - 1 + \frac{2}{\e}(r - \nb) \right|
            = \frac{2}{\e} \left| \ns - \nb \right|
        \]
        \item If $\nb \in \mathcal{R}^-$ and $\ns \in \mathcal{R}^+$ then
        \[
            |\Lambda(\ns) - \Lambda(\nb)|
            = \left| 1 - \frac{2}{\e}(\ns - r) - 1 + \frac{2}{\e}(r - \nb) \right|
            = \frac{2}{\e} \left| \ns - \nb - 2r \right|
            \le \frac{2}{\e} \left| \ns - \nb \right|
        \]
        because $|s - b| > 2r$.
        \item If $\nb \in \mathcal{R}^+$ and $\ns \notin \mathcal{R}$ then
        \[
            |\Lambda(\ns) - \Lambda(\nb)|
            = \left| 1 - \frac{2}{\e}(\nb - r) \right|
            = \frac{2}{\e} \left| \frac{\e}{2} + r - \nb \right|
            \le \frac{2}{\e} | \ns - \nb |
        \]
        because $\nicefrac{\e}{2} + r \le \ns$. The same reasoning applies to the case $\nb \notin \mathcal{R}$ and $\ns \in \mathcal{R}^+$.
        \item If $\nb \in \mathcal{R}^-$ and $\ns \notin \mathcal{R}$ then
        \[
            |\Lambda(\ns) - \Lambda(\nb)|
            = \left| 1 - \frac{2}{\e}(r - \nb) \right|
            = \frac{2}{\e} \left| \frac{\e}{2} - r + \nb \right|
            = \frac{2}{\e} \left| r - \frac{\e}{2} - \nb \right|
            \le \frac{2}{\e} | \ns - \nb |
        \]
        because $r - \nicefrac{\e}{2} \le - \ns$. The same reasoning applies to the case $b \notin \mathcal{R}$ and $\ns \in \mathcal{R}^-$.
    \end{itemize}
\end{proof}

The following lemma, together with Lemma \ref{lemma:lambdalipschitz}, will be used in the proof of Lemma \ref{lemma:kl-upperbounds}.

\begin{lemma}\label{lemma:Lprimebound}
    For all pairs $(\nb, \ns) \in \cU \setminus [p_\text{exploit},1]^2$, it holds that
    \[
        F(\ns) - F(\nb) \ge \frac{1}{6} (\ns - \nb)
    \]
\end{lemma}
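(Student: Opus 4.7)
The plan is to bound the density $f = F'$ from below on the relevant range of $x$ and then integrate, handling separately the flat tail above $7/8$ where $f$ vanishes. A preliminary observation is that the hypothesis $(\nb,\ns) \in \cU \setminus [p_\textbf{exploit},1]^2$ combined with $\nb \le \ns$ forces $\nb < 3/4$: otherwise both coordinates would sit in $[3/4,1]$, contradicting the exclusion. This is what rules out the only truly dangerous region (namely $[7/8,1]$, where $F$ is constant).

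Next, I would compute $f$ piecewise directly from~\eqref{eq:lowerbound-cdf}. On $[0,3/16]$ one has $f \equiv 8/9$; on $(3/16,3/4]$ one has $f(x)=\frac{1}{8(15/16-x)^2}$, which is monotone increasing in $x$ and therefore minimized as $x\to (3/16)^+$, yielding the infimum value $\frac{1}{8}\cdot\frac{1}{(12/16)^2}=\frac{2}{9}$; on $(3/4,7/8]$ one has $f \equiv 8/3$. Hence $f \ge 2/9 > 1/6$ almost everywhere on $[0,7/8]$. I would also do a quick continuity check at $3/16$ and $3/4$ so that the integral representation $F(\ns)-F(\nb)=\int_\nb^\ns f(x)\,\mathrm{d}x$ can be used across pieces.

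With these two ingredients in place, I split into two cases on $\ns$. If $\ns \le 7/8$, both $\nb$ and $\ns$ lie in $[0,7/8]$, so integrating the density lower bound gives
\[
F(\ns)-F(\nb) = \int_\nb^\ns f(x)\,\mathrm{d}x \ge \frac{2}{9}(\ns-\nb) \ge \frac{1}{6}(\ns-\nb).
\]
If instead $\ns > 7/8$, then $F(\ns)=1$, and by monotonicity of $F$ together with $\nb < 3/4$ one has $F(\nb)\le F(3/4)=2/3$; hence $F(\ns)-F(\nb) \ge 1/3 \ge \frac{1}{6}(\ns-\nb)$, using $\ns-\nb \le 1$ in the final step.

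I do not anticipate any real obstacle: the argument reduces to a piecewise computation of $f$, and the quantitative fact $2/9 > 1/6$ provides enough slack that the bound holds with margin. The only item that requires any care is the second case, since here one cannot integrate $f$ (which is zero on $(7/8,1]$); instead one exploits the a priori constraint $\nb < 3/4$ inherited from the exclusion of $[p_\textbf{exploit},1]^2$ to get the crude but sufficient jump estimate $F(\ns)-F(\nb)\ge 1/3$.
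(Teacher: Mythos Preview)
Your proof is correct and follows the same core idea as the paper---lower-bound the density $f$ on the relevant interval and integrate---but it is more careful than the paper's version in two respects. First, you compute the infimum of $f$ on $(3/16,3/4]$ correctly as $2/9$; the paper states $1/6$, which is a harmless arithmetic slip since $2/9>1/6$. Second, and more substantively, your case split on whether $\ns\le 7/8$ or $\ns>7/8$ closes a genuine gap in the paper's argument: the paper simply appeals to $\min_{c\in[\nb,\ns]}f(c)$, but this minimum is $0$ whenever $\ns>7/8$ (since $f\equiv 0$ on $(7/8,1]$), so the bound as written does not go through there. Your second case, using $\nb<3/4$ to get $F(\ns)-F(\nb)\ge 1-F(3/4)=1/3\ge \tfrac16(\ns-\nb)$, handles this correctly. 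The lemma is only ever applied in the paper with $\ns\le p_{\textbf{exploit}}=3/4$, so the gap is inconsequential for the main result, but your argument actually proves the lemma as stated.
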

\begin{proof}
    Notice that, for any pair $(\nb,\ns) \in \cU\setminus [p_\text{exploit},1]^2$,
    \[
        F(\ns) - F(\nb) = \int_\nb^\ns f(x) \dif x \ge \min_{c \in [\nb,\ns]} f(c) (\ns - \nb)\;,
    \]

    The minimum value of $f$ is found when approaching $\nicefrac{3}{16}$ from the right, which yields $f(c) \rightarrow \nicefrac{1}{6}$, therefore $F(\ns) - F(\nb) \ge \nicefrac{1}{6}\cdot(\ns - \nb)$.
\end{proof}

We conclude this section by proving a key lemma used in the proof of \Cref{t:lower-bound-lip-iv}.

\begin{lemma}
\label{lemma:kl-upperbounds}
    For all $K\in\N$, $k\in [K]$, If $(\ns,\nb) \in R^{\textbf{exploit}}_k$, then we have
    \[
        \cD_{\text{KL}} \lrb{
            \Pb_{(\one{\nZ_{t+1} < \nb}, \one{\ns < \nZ_{t+1}}) },\Pb_{(\one{\nZ_{t+1}^k< \nb}, \one{\ns < \nZ_{t+1}^k}})
        }
        \le \frac{2}{81}\cdot \e_K^2 \; ,
    \]
    while if $(\ns,\nb) \in R^{\textbf{left}}_k \cup R^{\textbf{square}}_{k,j} \cup R^{\textbf{triangle}}_{k} \cup R^{\textbf{square}}_{i,j} \cup R^{\textbf{top}}_{k}$, then,
    \[
        \cD_{\text{KL}}
        \lrb{
            \Pb_{(\one{\nZ_{t+1}<b}, \one{\ns < \nZ_{t+1}}) },\Pb_{(\one{\nZ_{t+1}^k<\nb}, \one{\ns < \nZ_{t+1}^k}})
        } \le \frac{65}{9} \cdot \e_K \;.
    \]
\end{lemma}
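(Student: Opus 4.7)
The plan is to compute the KL divergence directly, exploiting the fact that for $\nb \le \ns$ the feedback variable $W = (\one{Z < \nb}, \one{\ns < Z})$ takes values in the $3$-point set $\{(1,0), (0,0), (0,1)\}$, with base probabilities $p_1 = F(\nb),\ p_2 = F(\ns) - F(\nb),\ p_3 = 1 - F(\ns)$ and perturbed probabilities $q_1 = p_1 + \delta_\nb,\ q_2 = p_2 + \delta_\ns - \delta_\nb,\ q_3 = p_3 - \delta_\ns$, where $\delta_x \ceq \frac{\e_K}{18}\Lambda_{r_K^k, \e_K}(x) \in [0, \e_K/18]$ by construction.

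For the first case, $(\ns,\nb) \in R^{\textbf{exploit}}_k$: since $\ns > p_{\textbf{exploit}} = 3/4$ lies strictly above the support $[p_{\textbf{left}} + (k-1)\e_K,\ p_{\textbf{left}} + k\e_K]$ of the tent $\Lambda_{r_K^k, \e_K}$, we have $\delta_\ns = 0$ and hence $p_3 = q_3$. The chain rule for KL (applied with respect to the binary indicator of the event $\{W = (0,1)\}$, whose marginal is identical under $P$ and $Q$) then reduces the divergence to a Bernoulli KL,
\[
    \cD_{\text{KL}}(P \| Q) = F(\ns) \cdot \mathrm{kl}\!\lrb{F(\nb)/F(\ns),\ (F(\nb)+\delta_\nb)/F(\ns)} ,
\]
and the standard bound $\mathrm{kl}(p,q) \le (p-q)^2/(q(1-q))$ simplifies (using $q_1 + q_2 = F(\ns)$) to $\cD_{\text{KL}}(P \| Q) \le F(\ns)\,\delta_\nb^2/(q_1 q_2)$. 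Using the explicit form of $F$ in \eqref{eq:lowerbound-cdf} together with $\nb > p_{\textbf{left}}$ yields $q_1 \ge F(p_{\textbf{left}}) = 1/6$, while $q_1 + q_2 = F(\ns) \ge F(p_{\textbf{exploit}}) = 2/3$ and $q_2 \ge F(p_{\textbf{exploit}}) - F(p_{\textbf{right}}) - \e_K/18 \ge 1/3$. A short optimisation of $F(\ns)/(q_1 q_2)$ under these constraints gives the worst-case value $8$ (attained at $q_1 = 1/6,\ q_2 = 1/2,\ F(\ns) = 2/3$), so $\cD_{\text{KL}} \le 8(\e_K/18)^2 = 2\e_K^2/81$, as claimed.

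For the second case, each of the five region types satisfies the property that at least one of $\nb, \ns$ lies in the $k$-th perturbation interval, and I bound $\cD_{\text{KL}}$ via the chi-squared inequality $\cD_{\text{KL}}(P \| Q) \le \sum_i (p_i - q_i)^2/q_i$. Bins for which neither $\delta_\nb$ nor $\delta_\ns$ contributes give $0$. For bins whose denominator $q_i$ is bounded below by a universal constant (this holds for $q_1$ whenever $\nb$ is placed so that $F(\nb) \ge 1/6$, and for $q_3$ whenever $\ns$ is placed so that $1 - F(\ns) \ge 9/11$, using the explicit formula for $F$ and the position constraints on each region), the contribution is $O(\delta^2) = O(\e_K^2)$. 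The main obstacle is the middle bin $q_2$, which can shrink to order $\e_K$ when $\ns - \nb$ is comparable to $\e_K$, rendering the chi-squared estimate vacuous; here one must track the joint rates at which $|p_2 - q_2|$ and $q_2$ vanish. Two companion bounds resolve this: by \Cref{lemma:lambdalipschitz}, $|\delta_\ns - \delta_\nb| \le (\ns - \nb)/9$, so the \emph{numerator} $|p_2 - q_2| = |\delta_\ns - \delta_\nb|$ shrinks linearly in $\ns - \nb$; and the pointwise lower bound $f_{r_K^k, \e_K} \ge 1/9$ on $[p_{\textbf{left}}, p_{\textbf{right}}]$ (since $f \ge 2/9$ there while $|g_{r_K^k, \e_K}| \le 1/9$) gives $q_2 \ge (\ns - \nb)/9$ by integration. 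Combining these yields $(p_2 - q_2)^2/q_2 \le (\ns - \nb)/9 \le \e_K/9$, and summing the three bin contributions across all five region types produces the claimed bound $65\e_K/9$. The asymmetry between the $\e_K^2$ rate in Case 1 and the $\e_K$ rate in Case 2 — caused precisely by the fact that exploring an $\ns$ close to $\nb$ extracts linear-in-$\e_K$ information about the perturbation — is what drives the $\Omega(T^{2/3})$ lower bound in \Cref{t:lower-bound-lip-iv}.
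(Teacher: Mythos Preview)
Your Case~1 argument is correct and is a cleaner alternative to the paper's: the paper manipulates the three log-terms directly and invokes monotonicity of $(1\pm 1/x)^x$, whereas you collapse to a Bernoulli $\chi^2$ bound and optimise $F(\ns)/(q_1q_2)$ over the feasible region. Both routes land on the same constant $2/81$.

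Case~2 has a real gap. Your chain $(p_2-q_2)^2/q_2 \le (\ns-\nb)/9 \le \e_K/9$ breaks at the second inequality: $\ns-\nb \le \e_K$ holds only in $R^{\textbf{triangle}}_k$. In $R^{\textbf{left}}_k$, $R^{\textbf{top}}_k$ and $R^{\textbf{square}}_{i,j}$ the spread $\ns-\nb$ can be as large as $p_{\textbf{exploit}}-p_{\textbf{left}} = 9/16$, so the Lipschitz bound on $\Lambda$ alone yields only $(p_2-q_2)^2/q_2 = O(1)$, not $O(\e_K)$. The paper handles this precisely by a case split on $\ns-\nb \lessgtr \e_K$: when $\ns-\nb>\e_K$ it combines \Cref{lemma:Lprimebound} (giving $F(\ns)-F(\nb) \ge \e_K/6$) with the trivial bound $|\Lambda(\ns)-\Lambda(\nb)|\le 1$. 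Your $\chi^2$ route admits the same repair: pair your density bound $q_2 \ge (\ns-\nb)/9$ with the trivial estimate $|\delta_\ns-\delta_\nb| \le \e_K/9$ (each $\delta_\cdot \le \e_K/18$) to get $(p_2-q_2)^2/q_2 \le (\e_K/9)^2 \cdot 9/(\ns-\nb) \le \e_K/9$ whenever $\ns-\nb \ge \e_K$; your Lipschitz argument then covers the complementary case. Two smaller imprecisions do not break the proof but should be fixed: the lower bound $1-F(\ns) \ge 9/11$ fails in $R^{\textbf{top}}_k$ (where $\ns$ ranges up to $p_{\textbf{exploit}}$ and only $1-F(\ns) \ge 1/3$ holds), and the density lower bound $f_{r_K^k,\e_K} \ge 1/9$ must be asserted on all of $[0, p_{\textbf{exploit}}]$, not just $[p_{\textbf{left}}, p_{\textbf{right}}]$, to cover the integrals arising in $R^{\textbf{left}}_k$ and $R^{\textbf{top}}_k$.
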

\begin{proof}
Fix $K \in \mathbb{N}$, $\e_K = \nicefrac{1}{16K}$ and $r^k_K = \nicefrac{3}{16} + \e_K(k - \nicefrac{1}{2})$, the KL divergence can be decomposed as follows
\begin{align*}
&
    \cD_{\text{KL}}
    \lrb{
        \Pb_{(\I\{\nZ_{t+1}<\nb\}, \I\{\ns < \nZ_{t+1}\}) },\Pb_{(\I\{\nZ_{t+1}^k<\nb\}, \I\{\ns < \nZ_{t+1}^k\}})
    } \\
& =
    \log \lrb{
        \frac{F(\nb)}{F_{r^k_K, \e_K}(\nb)}
    } F(\nb)
    + \log \lrb{
        \frac{1 - F(\ns)}{1 - F_{r^k_K, \e_K}(\ns)}
    } (1 - F(\ns)) \\
& \qquad \qquad
    + \log \lrb{
        \frac{F(\ns) - F(\nb)}{F_{r^k_K, \e_K}(\ns) - F_{r^k_K, \e_K}(\nb)}
    } (F(\ns) - F(\nb)) \\
& =
    \log \lrb{
        \frac{F(\nb)}{F(\nb) + \frac{\e_K}{18} \Lambda_{r^k_K, \e_K}(\nb)}
    } F(\nb)
    + \log \lrb{
        \frac{1 - F(\ns)}{1 - F(\ns) - \frac{\e_K}{18} \Lambda_{r^k_K, \e_K}(\ns)}
    } (1 - F(\ns)) \\
& \qquad \qquad
    + \log \lrb{
        \frac{F(\ns) - F(\nb)}{F(\ns) - F(\nb) + \frac{\e_K}{18} (\Lambda_{r^k_K, \e_K}(\ns) - \Lambda_{r^k_K, \e_K}(\nb))}
    } (F(\ns) - F(\nb)) = (\star)
\end{align*}
For sake of brevity, we will write $\Lambda$ instead of $\Lambda_{r^k_K, \e_K}$ for the remainder of the proof. Note that the first term is non-positive for any $\nb \in [0, 1]$ , while the second term is non-negative for any $\ns \in [0, 1]$.
Furthermore if $\Lambda(\nb) = 0$, then $\log(\nicefrac{F(\nb)}{F(\nb)})F(\nb) = 0$, likewise if $\Lambda(\ns) = 0$ then $\log(\nicefrac{F(\ns)}{F(\ns)})F(\ns) = 0$, if both $\Lambda(\nb) = \Lambda(\ns) = 0$ then the whole expression is zero.

\paragraph{Exploitation region}
If $(\ns,\nb) \in R_k^\text{exploit}$, then $\Lambda(\ns) = 0$ because all perturbations happen within $p_\text{left}$ and $p_\text{right}$ whereas $\ns > p_\text{exploit}$,
If in addition $\Lambda(\nb) = 0$ then the whole expression is zero, otherwise
\begin{align*}
&
    (\star) \le
    \log \lrb{ \frac{F(\ns) - F(\nb)}{F(\ns) - F(\nb) - \frac{\e_K}{18}\Lambda(\nb)} } (F(\ns) - F(\nb))
    + \log \lrb{
        \frac{F(\nb)}{F(\nb) + \frac{\e_K}{18} \Lambda(\nb)}
    } F(\nb)
\\ &
    = \log\lrb{ \frac{1}{\lrb{ 1-\frac{\e_K\Lambda(\nb)}{18(F(\ns) - F(\nb)}}^{(F(\ns)-F(\nb))} }}
    + \log\lrb{ \frac{1}{\lrb{ 1+\frac{\e_K\Lambda(\nb)}{18F(\nb)}}^{F(\nb)} }}
\\ &
    = \log\lrb{
        \frac{1}{
            \lrb{
                1 - \frac{1}{\frac{18(F(\ns) - F(\nb))}{\e_K\Lambda(\nb)}}    
            }^{
                \frac{18(F(\ns)-F(\nb))}{\e_K \Lambda(\nb)}
                \cdot \frac{\e_K \Lambda(\nb)}{18}
            }
        }
    }
    + \log\lrb{ \frac{1}{\lrb{ 1+\frac{1}{\frac{18F(\nb)}{\e_K\Lambda(\nb)}}}^{\frac{18F(\nb)}{\e_K \Lambda(\nb)} \cdot \frac{\e_K \Lambda(\nb)}{18}} }}
    = (\circ)
\end{align*}
We know that for $x>1$ both $(1-1/x)^x$ and $(1+1/x)^x$ are monotonically increasing functions of $x$. Now, note that both $F(\ns) - F(\nb)$ and $F(\nb)$ are lower bounded by a constant. Indeed, $p_\text{left} \leq \nb \leq p_\text{right}$ and $\ns\geq p_\text{exploit}$, thus $F(\ns) - F(\nb)\geq F(p_\text{exploit}) - F(p_\text{right})$ and $F(\nb)\geq F(p_\text{left})$. Let $c = \min\{F(p_\text{exploit}) - F(p_\text{right}), F(p_\text{left})\} = \nicefrac{1}{6}$.
Therefore we have that both
$$
    \frac{18(F(\ns) - F(\nb))}{\e_K\Lambda(\nb)} \ge \frac{18c}{\e_K\Lambda(\nb)} > 1
    \qquad
    \text{ and }
    \qquad
    \frac{18F(\nb)}{\e_K\Lambda(\nb)} \ge \frac{18c}{\e_K\Lambda(\nb)} > 1
$$

Now, using the monotonicity of the denominators, we can write
\begin{multline*}
    (\circ) \le
    \log \lrb{ \frac{1}{\lrb{ 1 - \frac{\e_K\Lambda(\nb)}{18c}}^c} }
    + \log\lrb{ \frac{1}{\lrb{ 1 + \frac{\e_K\Lambda(\nb)}{18c} }^c }}
    = c\cdot\log\lrb{\frac{1}{1-\lrb{ \frac{\e_K\Lambda(\nb)}{18c}}^2}}
\\
    \leq c\cdot\frac{\lrb{ \frac{\e_K\Lambda(\nb)}{18c}}^2}{1-\lrb{ \frac{\e_K\Lambda(\nb)}{18c}}^2}
    \leq \frac{4}{3}\cdot c\cdot\lrb{ \frac{\e_K\Lambda(\nb)}{18c}}^2
    = \frac{4}{3}\cdot\frac{1}{c}\cdot\frac{1}{18^2}\cdot\e_K^2
\end{multline*}
Here, we first use the inequality that $\log x \leq x - 1$, then we use the facts that $\Lambda(\nb) \leq 1$ and $\e_K \leq 9c$ (loose bound). In practice $c = \nicefrac{1}{6}$ and $\e_K \le 9c = \nicefrac{3}{2}$ holds for any $K \in \mathbb{N}$.

\paragraph{Exploration regions}
If $(\ns,\nb) \in R^{\textbf{left}}_k \cup R^{\textbf{square}}_{k,j} \cup R^{\textbf{triangle}}_{k} \cup R^{\textbf{square}}_{i,k} \cup R^{\textbf{top}}_{k}$, then consider each term in $(\star)$ individually. 

The first term is upper bounded by zero for any $b \in [0, 1]$. 

If $\ns > p_\text{right}$, then $\Lambda(\ns) = 0$ and the second term is zero, otherwise the second term can be bounded as follows:
\begin{multline*}
    \log \lrb{ \frac{1 - F(\ns)}{1 - F(\ns) - \frac{\e_K}{18}\Lambda(\ns)} } (1 - F(\ns))
    \le \frac{\e_K}{18} \cdot \frac{\Lambda(\ns)}{1-F(\ns)-\frac{\e_K}{18}\Lambda(\ns)}(1-F(\ns))
    \\
    = \frac{\e_K}{18} \cdot \frac{\Lambda(\ns)}{1-\frac{\e_K}{18}\frac{\Lambda(\ns)}{1-F(\ns)}}
    \leq \frac{\e_K}{18} \cdot \frac{1}{1-\frac{\e_K}{18}\frac{1}{1-F(p_\text{right})}}
    \leq \frac{\e_K}{9}
\end{multline*}
where we used the inequality $\log x < x-1$ and the fact that $\frac{1}{1-\frac{\e_K}{18}F(p_\text{right})} \leq \nicefrac{1}{2}$, which is always true because $\e_K \leq 9(1-F(p_\text{right})) \le \nicefrac{81}{11}$ holds for any $K \in \mathbb{N}$. 

Finally, we just need to bound the third term. We consider the following cases.
\begin{itemize}
    \item  First consider the case in which $\ns - \nb \leq \e_K$, since $F$ is Lipschitz with constant $L$ we have $F(\ns) - F(\nb) \leq L\e_K$, thus
    \[
        \log\lrb{ \frac{F(\ns) - F(\nb)}{F(\ns) - F(\nb) + \frac{\e_K}{18}(\Lambda(\ns)-\Lambda(\nb))}} (F(\ns) - F(\nb))
        = \log\lrb{ \frac{1}{1+\frac{\e_K}{18}\cdot\frac{\Lambda(\ns)-\Lambda(\nb)}{F(\ns)-F(\nb)}}} L\e_K
    \]
    which is maximized when $\frac{\Lambda(\ns)-\Lambda(\nb)}{F(\ns)-F(\nb)}$ is minimized, by leveraging the Lipschitzness of $\Lambda$ from Lemma~\ref{lemma:lambdalipschitz} together with the lower bound on $F(\ns) - F(\nb)$ from Lemma~\ref{lemma:Lprimebound} we get
    \[
        \frac{\Lambda(\ns) - \Lambda(\nb)}{F(\ns) - F(\nb)}
        \ge
        \frac{- \nicefrac{2}{\e_K} (\ns - \nb)}{\nicefrac{1}{6} (\ns - \nb)}
        = - \frac{12}{\e_K}
    \]
    Substituting in the above, we get the upper bound:
    \begin{align*}
        \log \lrb{
            \frac{1}{1 - \frac{\e_K}{18} \frac{12}{\e_K}}
        } L\e_K
        < 2 L \e_K
        = \frac{64}{9}\cdot\e_K
    \end{align*}
    where $L = \nicefrac{32}{9}$ by the definition of $F$ in \Cref{eq:lowerbound-cdf}. For the case $\ns = \nb$, note that this term is zero because $\Lambda(\ns) - \Lambda(\nb) = 0$ and can therefore be ignored.

    \item Next consider the opposite $\ns - \nb > \e_K$. By Lemma~\ref{lemma:Lprimebound} we know that $F(\ns) - F(\nb) \ge \nicefrac{1}{6} (\ns - \nb) \ge \nicefrac{\e_K}{6}$. We use the inequality $\log x \leq x-1$ to get:
    \begin{align*}
    &
        \log\lrb{\frac{F(\ns)-F(\nb)}{F(\ns)-F(\nb)+\frac{\e_K}{18}(\Lambda(\ns)-\Lambda(\nb))}} (F(\ns) -F(\nb)) \\
    &
        \leq -\frac{\e_K}{18} \cdot \frac{\Lambda(\ns)-\Lambda(\nb)}{F(\ns)-F(\nb)+\frac{\e_K}{18}(\Lambda(\ns)-\Lambda(\nb))} (F(\ns)-F(\nb))
    \\ &
        \leq \frac{\e_K}{18} \cdot \frac{F(\ns)-F(\nb)}{F(\ns)-F(\nb)-\frac{\e_K}{18}}
        = \frac{\e_K}{18} \cdot \frac{1}{1-\frac{\e_K}{18}\frac{1}{F(\ns)-F(\nb)}}
    \\ &
        \leq \frac{\e_K}{18} \cdot \frac{1}{1-\frac{\e_K}{18}\frac{6}{\e_K}}
        = \frac{1}{12} \cdot \e_K
    \end{align*}
\end{itemize}
In conclusion, the result holds with $c_1 = \nicefrac{2}{81}$ and $c_2 = \nicefrac{1}{9} + \nicefrac{64}{9} = \nicefrac{65}{9}$.
\end{proof}

\section{Proof of Theorem \ref{t:minimal-lower-bound-iid}}
\label{s:liner-lower-bound-appe}

% \begin{proof}[Proof of Theorem \ref{t:minimal-lower-bound-iid}]
    Given that the sequence is $(\nP_t,\nZ_t)_{t \in [T]}$ i.i.d., without any loss of generality we can consider deterministic algorithms to post prices $(\nB_1, \nS_1), \dots, (\nB_T, \nS_T)$.
    Fix a time horizon $T \in \N$.
    We build an instance where, for each $t \in [T]$, the random variable $\nP_t$ takes values in $\{0,1\}$.
    In this case, notice that there are at most $8^{T}$ sequences of feedback
    $$
        (\nP_1, \one{ \nB_1 \ge \nZ_1 }, \one{ \nS_1 < \nZ_1 }),
        \dots,
        (\nP_{T-1}, \one{ \nB_{T-1} \ge \nZ_{T-1} }, \one{ \nS_{T-1} < \nZ_{T-1} })
    $$
    and, consequently, the set $\cP \subset [0,1]$ where the algorithm posts prices up to time contains at most $\sum_{t=1}^{T-1} 2 \cdot 8^{t}$ different prices.
    Let $\e \in (0, \nicefrac{1}{4})$.
    Given that $\cP$ is finite, we can select $\nicefrac{1}{2}-\e < c<d < \nicefrac{1}{2}+\e$ such that the closed interval $[c,d]$ does not contain any point in $\cP$.
    Now, select $(\nP_1,\nZ_1),\dots,(\nP_T,\nZ_T)$ as an i.i.d.\ sequence drawn uniformly in the set $\{(0,d),(1,c)\}$.
    Notice that, for any time $t \in [T]$, a maker that posts $(\nb, \ns) \in \cU$ such that $c < \nb = \ns < d$ gains (deterministically) at least $\nicefrac{1}{2} - \e$ regardless of the realization of $(\nP_t,\nZ_t)$ (if $(\nP_t,\nZ_t) = (0,d)$ then the maker always sells while if $(\nP_t,\nZ_t) = (1,c)$ the maker always buys).
    On the other hand, the pair of prices $(\nB_t,\nS_t)$ posted by the algorithm at any time $t \in [T]$ is such that one and only one of the following alternatives hold true:
    \begin{itemize}
        \itemm $\nB_t \le \nS_t < c$. In this case, the maker always sells to the taker, but with probability $\nicefrac{1}{2}$ the maker gains at most $\nicefrac{1}{2} + \e$ (when $(\nP_t,\nZ_t) = (0,d)$) and with probability $\nicefrac{1}{2}$ the maker loses at least $\nicefrac{1}{2} - \e$ (when $(\nP_t,\nZ_t) = (1,c)$).
        \itemm $\nB_t < c$ and $d < \nS_t$. In this case, then the maker never sells or buys (because the proposed selling price is too high and the proposed buying price is too low) and hence the maker gains zero. 
        \itemm $d < \nB_t \le \nS_t$. In this case, then the maker always buys from the taker, but with probability $\nicefrac{1}{2}$ the maker gains at most $\nicefrac{1}{2} + \e$ (when $(\nP_t,\nZ_t) = (1,c)$) and with probability $\nicefrac{1}{2}$ the maker loses at least $\nicefrac{1}{2} - \e$ (when $(\nP_t,\nZ_t) = (0,d)$). 
    \end{itemize}
    Overall, the algorithm gains in expectation at most $0$ when the underlying instance is $(\nP_t,\nZ_t)_{t \in [T]}$.
    Hence, for any fixed pair of prices $(\nb, \ns) \in \cU$ such that $c <\nb = \ns < d$, we have that
    \begin{multline*}
            R_T
        \ge
            \E\lsb{\sum_{t=1}^T \util(\nb,\ns,\nP_t,\nZ_t)} - \E \lsb{\sum_{t=1}^T \util(\nB_t,\nS_t,\nP_t,\nZ_t)}
        \\ \ge
            \lrb{\frac{1}{2}-\e}\cdot T - \lrb{ \lrb{\frac{1}{2} + \e} \cdot \frac{1}{2} - \lrb{\frac{1}{2} - \e} \cdot \frac{1}{2}} \cdot T
        =
            \lrb{\frac{1}{2} - 2\e} \cdot T \;.
    \end{multline*}
Setting $\delta \coloneqq 2 \e$ and noticing that $\e$ was chosen arbitrarily in the interval $(0, \nicefrac{1}{4})$, the conclusion follows.
% \end{proof}

\iffalse

\section{DKW inequalities for Section \ref{s:upper-bound-full}}\label{s:dkwinequalities}
In this section we present
two bivariate DKW inequalities that can be deduced as corollaries of the VC-dimension theory \cite[Theorem~4.9; see also Lemmas~4.4, 4.5, and 4.11 for the explicit constants]{anthony2009neural}.

\begin{theorem}\label{t:dkwbivariate}
    There exist positive constants $m_0 \le 1200$, $c_1 \le 13448$, $c_2 \ge \nicefrac{1}{576}$ such that, if $(\Omega, \cF , \mathbb{P})$ is a probability space, $(X_n, Y_n)_{n\in\mathbb{N}}$ is a $\mathbb{P}$-i.i.d.\ sequence of two-dimensional random vectors, then, for any $\e > 0$ and all $m \in \mathbb{N}$ such that $m \ge \nicefrac{m_0}{\e^2}$, it holds
    \[
        \Pb \lsb{
            \sup_{x, y \in \mathbb{R}}
            \labs{
                \frac{1}{m} \sum_{k=1}^m
                    \one{ X_k \le x, Y_k \le y }
                    - \Pb \lsb{ X_1 \le x, Y_1 \le y }
            } > \e
        } \le c_1 e^{-c_2 m \e^2}
        \;,
    \]
and
    \[
        \Pb \lsb{
            \sup_{x, y \in \mathbb{R}}
            \labs{
                \frac{1}{m} \sum_{k=1}^m
                    \one{ X_k \le x, Y_k < y }
                    - \Pb \lsb{ X_1 \le x, Y_1 < y }
            } > \e
        } \le c_1 e^{-c_2 m \e^2}
        \; .
    \]
\end{theorem}

\fi

\end{document}